\DeclareMathOperator*{\slim}{s-lim}
\setlist{nosep, leftmargin=*}
\numberwithin{equation}{section}
\begin{document}


\renewcommand{\theequation}{\arabic{section}.\arabic{equation}}
\theoremstyle{plain}
\newtheorem{theorem}{\bf Theorem}[section]
\newtheorem{lemma}[theorem]{\bf Lemma}
\newtheorem{corollary}[theorem]{\bf Corollary}
\newtheorem{proposition}[theorem]{\bf Proposition}
\newtheorem{definition}[theorem]{\bf Definition}
\newtheorem*{definition*}{\bf Definition}
\newtheorem*{example}{\bf Example}
\newtheorem*{theorem*}{\bf Theorem}
\theoremstyle{remark}
\newtheorem*{remark}{\bf Remark}

\def\a{\alpha}  \def\cA{{\mathcal A}}     \def\bA{{\bf A}}  \def\mA{{\mathscr A}}
\def\b{\beta}   \def\cB{{\mathcal B}}     \def\bB{{\bf B}}  \def\mB{{\mathscr B}}
\def\g{\gamma}  \def\cC{{\mathcal C}}     \def\bC{{\bf C}}  \def\mC{{\mathscr C}}
\def\G{\Gamma}  \def\cD{{\mathcal D}}     \def\bD{{\bf D}}  \def\mD{{\mathscr D}}
\def\d{\delta}  \def\cE{{\mathcal E}}     \def\bE{{\bf E}}  \def\mE{{\mathscr E}}
\def\D{\Delta}  \def\cF{{\mathcal F}}     \def\bF{{\bf F}}  \def\mF{{\mathscr F}}
\def\c{\chi}    \def\cG{{\mathcal G}}     \def\bG{{\bf G}}  \def\mG{{\mathscr G}}
\def\z{\zeta}   \def\cH{{\mathcal H}}     \def\bH{{\bf H}}  \def\mH{{\mathscr H}}
\def\e{\eta}    \def\cI{{\mathcal I}}     \def\bI{{\bf I}}  \def\mI{{\mathscr I}}
\def\p{\psi}    \def\cJ{{\mathcal J}}     \def\bJ{{\bf J}}  \def\mJ{{\mathscr J}}
\def\vT{\Theta} \def\cK{{\mathcal K}}     \def\bK{{\bf K}}  \def\mK{{\mathscr K}}
\def\k{\kappa}  \def\cL{{\mathcal L}}     \def\bL{{\bf L}}  \def\mL{{\mathscr L}}
\def\l{\lambda} \def\cM{{\mathcal M}}     \def\bM{{\bf M}}  \def\mM{{\mathscr M}}
\def\L{\Lambda} \def\cN{{\mathcal N}}     \def\bN{{\bf N}}  \def\mN{{\mathscr N}}
\def\m{\mu}     \def\cO{{\mathcal O}}     \def\bO{{\bf O}}  \def\mO{{\mathscr O}}
\def\n{\nu}     \def\cP{{\mathcal P}}     \def\bP{{\bf P}}  \def\mP{{\mathscr P}}
\def\r{\varrho} \def\cQ{{\mathcal Q}}     \def\bQ{{\bf Q}}  \def\mQ{{\mathscr Q}}
\def\s{\sigma}  \def\cR{{\mathcal R}}     \def\bR{{\bf R}}  \def\mR{{\mathscr R}}
\def\S{\Sigma}  \def\cS{{\mathcal S}}     \def\bS{{\bf S}}  \def\mS{{\mathscr S}}
\def\t{\tau}    \def\cT{{\mathcal T}}     \def\bT{{\bf T}}  \def\mT{{\mathscr T}}
\def\f{\phi}    \def\cU{{\mathcal U}}     \def\bU{{\bf U}}  \def\mU{{\mathscr U}}
\def\F{\Phi}    \def\cV{{\mathcal V}}     \def\bV{{\bf V}}  \def\mV{{\mathscr V}}
\def\P{\Psi}    \def\cW{{\mathcal W}}     \def\bW{{\bf W}}  \def\mW{{\mathscr W}}
\def\o{\omega}  \def\cX{{\mathcal X}}     \def\bX{{\bf X}}  \def\mX{{\mathscr X}}
\def\x{\xi}     \def\cY{{\mathcal Y}}     \def\bY{{\bf Y}}  \def\mY{{\mathscr Y}}
\def\X{\Xi}     \def\cZ{{\mathcal Z}}     \def\bZ{{\bf Z}}  \def\mZ{{\mathscr Z}}
\def\O{\Omega}

\newcommand{\mc}{\mathscr {c}}

\newcommand{\gA}{\mathfrak{A}}          \newcommand{\ga}{\mathfrak{a}}
\newcommand{\gB}{\mathfrak{B}}          \newcommand{\gb}{\mathfrak{b}}
\newcommand{\gC}{\mathfrak{C}}          \newcommand{\gc}{\mathfrak{c}}
\newcommand{\gD}{\mathfrak{D}}          \newcommand{\gd}{\mathfrak{d}}
\newcommand{\gE}{\mathfrak{E}}
\newcommand{\gF}{\mathfrak{F}}           \newcommand{\gf}{\mathfrak{f}}
\newcommand{\gG}{\mathfrak{G}}           \newcommand{\Gg}{\mathfrak{g}}
\newcommand{\gH}{\mathfrak{H}}           \newcommand{\gh}{\mathfrak{h}}
\newcommand{\gI}{\mathfrak{I}}           \newcommand{\gi}{\mathfrak{i}}
\newcommand{\gJ}{\mathfrak{J}}           \newcommand{\gj}{\mathfrak{j}}
\newcommand{\gK}{\mathfrak{K}}            \newcommand{\gk}{\mathfrak{k}}
\newcommand{\gL}{\mathfrak{L}}            \newcommand{\gl}{\mathfrak{l}}
\newcommand{\gM}{\mathfrak{M}}            \newcommand{\gm}{\mathfrak{m}}
\newcommand{\gN}{\mathfrak{N}}            \newcommand{\gn}{\mathfrak{n}}
\newcommand{\gO}{\mathfrak{O}}
\newcommand{\gP}{\mathfrak{P}}             \newcommand{\gp}{\mathfrak{p}}
\newcommand{\gQ}{\mathfrak{Q}}             \newcommand{\gq}{\mathfrak{q}}
\newcommand{\gR}{\mathfrak{R}}             \newcommand{\gr}{\mathfrak{r}}
\newcommand{\gS}{\mathfrak{S}}              \newcommand{\gs}{\mathfrak{s}}
\newcommand{\gT}{\mathfrak{T}}             \newcommand{\gt}{\mathfrak{t}}
\newcommand{\gU}{\mathfrak{U}}             \newcommand{\gu}{\mathfrak{u}}
\newcommand{\gV}{\mathfrak{V}}             \newcommand{\gv}{\mathfrak{v}}
\newcommand{\gW}{\mathfrak{W}}             \newcommand{\gw}{\mathfrak{w}}
\newcommand{\gX}{\mathfrak{X}}               \newcommand{\gx}{\mathfrak{x}}
\newcommand{\gY}{\mathfrak{Y}}              \newcommand{\gy}{\mathfrak{y}}
\newcommand{\gZ}{\mathfrak{Z}}             \newcommand{\gz}{\mathfrak{z}}

\def\ve{\varepsilon}   \def\vt{\vartheta}    \def\vp{\varphi}    \def\vk{\varkappa}

\def\A{{\mathbb A}} \def\B{{\mathbb B}} \def\C{{\mathbb C}}
\def\dD{{\mathbb D}} \def\E{{\mathbb E}} \def\dF{{\mathbb F}} \def\dG{{\mathbb G}}
\def\H{{\mathbb H}}\def\I{{\mathbb I}} \def\J{{\mathbb J}} \def\K{{\mathbb K}} \def\dL{{\mathbb L}}
\def\M{{\mathbb M}} \def\N{{\mathbb N}} \def\O{{\mathbb O}} \def\dP{{\mathbb P}} \def\R{{\mathbb R}}
\def\dQ{{\mathbb Q}} \def\S{{\mathbb S}} \def\T{{\mathbb T}} \def\U{{\mathbb U}} \def\V{{\mathbb V}}
\def\W{{\mathbb W}} \def\X{{\mathbb X}} \def\Y{{\mathbb Y}} \def\Z{{\mathbb Z}}

\newcommand{\1}{\mathbbm 1}
\newcommand{\dd}    {\, \mathrm d}



\def\la{\leftarrow}              \def\ra{\rightarrow}            \def\Ra{\Rightarrow}
\def\ua{\uparrow}                \def\da{\downarrow}
\def\lra{\leftrightarrow}        \def\Lra{\Leftrightarrow}


\def\lt{\biggl}                  \def\rt{\biggr}
\def\ol{\overline}               \def\wt{\widetilde}
\def\no{\noindent}


\let\ge\geqslant                 \let\le\leqslant
\def\lan{\langle}                \def\ran{\rangle}
\def\/{\over}                    \def\iy{\infty}
\def\sm{\setminus}               \def\es{\emptyset}
\def\ss{\subset}                 \def\ts{\times}
\def\pa{\partial}                \def\os{\oplus}
\def\om{\ominus}                 \def\ev{\equiv}
\def\iint{\int\!\!\!\int}        \def\iintt{\mathop{\int\!\!\int\!\!\dots\!\!\int}\limits}
\def\el2{\ell^{\,2}}             \def\1{1\!\!1}
\def\sh{\sharp}
\def\wh{\widehat}
\def\ds{\dotplus}

\def\all{\mathop{\mathrm{all}}\nolimits}
\def\where{\mathop{\mathrm{where}}\nolimits}
\def\as{\mathop{\mathrm{as}}\nolimits}
\def\Area{\mathop{\mathrm{Area}}\nolimits}
\def\arg{\mathop{\mathrm{arg}}\nolimits}
\def\adj{\mathop{\mathrm{adj}}\nolimits}
\def\const{\mathop{\mathrm{const}}\nolimits}
\def\det{\mathop{\mathrm{det}}\nolimits}
\def\diag{\mathop{\mathrm{diag}}\nolimits}
\def\diam{\mathop{\mathrm{diam}}\nolimits}
\def\dim{\mathop{\mathrm{dim}}\nolimits}
\def\dist{\mathop{\mathrm{dist}}\nolimits}
\def\Im{\mathop{\mathrm{Im}}\nolimits}
\def\Iso{\mathop{\mathrm{Iso}}\nolimits}
\def\Ker{\mathop{\mathrm{Ker}}\nolimits}
\def\Lip{\mathop{\mathrm{Lip}}\nolimits}
\def\rank{\mathop{\mathrm{rank}}\limits}
\def\Ran{\mathop{\mathrm{Ran}}\nolimits}
\def\Re{\mathop{\mathrm{Re}}\nolimits}
\def\Res{\mathop{\mathrm{Res}}\nolimits}
\def\res{\mathop{\mathrm{res}}\limits}
\def\sign{\mathop{\mathrm{sign}}\nolimits}
\def\supp{\mathop{\mathrm{supp}}\nolimits}
\def\Tr{\mathop{\mathrm{Tr}}\nolimits}
\def\AC{\mathop{\rm AC}\nolimits}
\def\BBox{\hspace{1mm}\vrule height6pt width5.5pt depth0pt \hspace{6pt}}


\newcommand\nh[2]{\widehat{#1}\vphantom{#1}^{(#2)}}
\def\dia{\diamond}

\def\Oplus{\bigoplus\nolimits}




\def\qqq{\qquad}
\def\qq{\quad}
\let\ge\geqslant
\let\le\leqslant
\let\geq\geqslant
\let\leq\leqslant

\newcommand{\ca}{\begin{cases}}
\newcommand{\ac}{\end{cases}}
\newcommand{\ma}{\begin{pmatrix}}
\newcommand{\am}{\end{pmatrix}}
\renewcommand{\[}{\begin{equation}}
\renewcommand{\]}{\end{equation}}
\def\bu{\bullet}

\title[{}]{Inverse resonance scattering for massless Dirac operators on the real line}

\date{\today}

\author[Evgeny Korotyaev]{Evgeny Korotyaev}
\address{Department of Analysis, Saint Petersburg State University, Universitetskaya nab. 7/9,
St. Petersburg, 199034, Russia, \ korotyaev@gmail.com, \ e.korotyaev@spbu.ru}
\author[Dmitrii Mokeev]{Dmitrii Mokeev}
\address{Saint Petersburg State University, Universitetskaya nab. 7/9,
St. Petersburg, 199034, Russia, \ mokeev.ds@yandex.ru}

\subjclass{} \keywords{Dirac operators, inverse problems, resonances, canonical systems, compactly supported potentials}

\begin{abstract}
    We consider massless Dirac operators on the real line with compactly supported potentials.
    We solve two inverse problems (including characterization): in terms of zeros of reflection coefficient
    and in terms of poles of reflection coefficients (i.e. resonances). We prove that a potential is
    uniquely determined by zeros of reflection coefficients and there exist distinct potentials
    with the same resonances. We describe the set of "isoresonance potentials".
    Moreover, we prove the following:

    1) a zero of the reflection coefficient can be arbitrarily shifted, such that we obtain
    the sequence of zeros of the reflection coefficient for an other compactly supported potential,

    2) the forbidden domain for resonances is estimated,

    3) asymptotics of resonances counting function is determined,

    4) these results are applied to canonical systems.
\end{abstract}

\maketitle

\tableofcontents

\section{Introduction} \label{p0}
We consider an inverse problem for Dirac operators on the real line
with compactly supported potentials. Such operators have many
physical and mathematical applications. These Dirac operators are also
known as Zakharov-Shabat (or AKNS) systems, which were
used by Zakharov and Shabat \cite{ZS71} to study nonlinear
Schr{\"o}dinger equation (see also \cite{APT04, DEGM82, FT07}). In our paper,
we consider the self-adjoint Dirac operator $H$ on $L^2(\R,\C^2)$
given~by
\[ \label{intro:operator}
    H y = -i \s_3 y' + i \s_3 Q y,\qq y = \ma y_1 \\ y_2 \am,\qq \s_3 = \ma 1 & 0 \\ 0 & -1 \am.
\]
The potential $Q$ has the following form
\[ \label{intro:potential}
    Q = \ma 0 & q \\ \overline{q} & 0 \am,\qq q \in \cP,
\]
where the class $\cP$ is defined for some $\g > 0$ fixed throughout
this paper by
\begin{definition*}
    $\cP = \cP_{\g}$ is a set of all functions $q \in L^2(\R)$ such that the convex hull of $\supp q$
    equals $[0,\g]$.
\end{definition*}
Recall that
$\s(H) = \s_{ac}(H) = \R$ (see e.g. \cite{LS91}).
We introduce the $2 \times 2$ matrix-valued Jost solutions $f^{\pm}(x,k) =  \left(
\begin{smallmatrix} f^{\pm}_{11} & f^{\pm}_{12} \\ f^{\pm}_{21} & f^{\pm}_{22} \end{smallmatrix} \right) (x,k)$
of the Dirac equation
\[ \label{intro:equation}
    (f^{\pm})'(x,k) = Q(x) f^{\pm}(x,k) + i k \s_3 f^{\pm}(x,k),\qq (x,k) \in \R \ts \C,
\]
which satisfy the standard condition for compactly supported potentials:
$$
    \begin{aligned}
        f^{+}(x,k) = e^{i k x \s_3},\qq \forall \qq x \geq \g,\\
        f^{-}(x,k) = e^{i k x \s_3},\qq \forall \qq x \leq 0.
    \end{aligned}
$$
Since equation (\ref{intro:equation}) has exactly one linear independent solution, it follows that
for any $k \in \C$, there exists a unique $2 \times 2$ \textit{transition matrix} $A(k)$ such that
$$
    f^{+}(x,k) = f^{-}(x,k)A(k),\qq x \in \R.
$$
The transition matrix $A$ has the form
$$
    A = \ma a & b_* \\ b & a_* \am,\qq a a_* - b b_* = 1,
$$
where we used the notation $g_*(k) = \ol{g(\ol{k})}$, $k \in \C$.
It is well-known that $a$ and $b$ are entire,
$a(k) \neq 0$ for any $k \in \ol \C_+ $ and it has zeros in $\C_-$, which are called
\textit{resonances} and they are also zeros of the Fredholm
determinant and poles of the resolvent of the operator $H$ (see e.g.
\cite{IK14a}). Note that the zeros of $b$ and $a$ do not coincide.
Let $H_o = -i \s_3$ be the free Dirac operator on $L^2(\R,\C^2)$.
Then the scattering matrix $S$ for the pair $H$, $H_o$ has the following form
\[ \label{p2e3}
    S(k) = \frac{1}{a(k)} \ma 1 & -\ol{b(k)} \\ b(k) & 1\am,\qq k \in \R.
\]
Here $1/a$ is a transmission coefficient and $r_+ = -\ol b/a$ (or $r_- = b/a$) is a right
(or left) reflection coefficient. The matrix-valued function $S$ admits a meromorphic continuation from $\R$ onto
$\C$, since $a$ and $b$ are entire. Poles of $S$ are resonances and zeros of the reflection
coefficients $r_{\pm}$ coincide with zeros of $b$ or $b_*$.
We sometimes write $a(\cdot,q)$, $b(\cdot,q)$, $\ldots$ instead of $a(\cdot)$, $b(\cdot)$, $\ldots$,
when several potentials are being dealt with.

Our main goal is to solve inverse problems for the Dirac operator
$H$ with different spectral data: \textit{the coefficients of the transition matrix} $a$ and $b$,
\textit{the coefficients of the scattering matrix} $r_{\pm}$, \textit{the zeros} of $b$,
and \textit{the resonances}. In general, an inverse problem is to determine the potential by some
data, and it consists at least of the four parts:
\begin{enumerate}[label={(\roman*)}]
    \item {\it Uniqueness.} Do data uniquely determine the potential?
    \item {\it Reconstruction.} Give an algorithm to recover the potential by data.
    \item {\it Characterization.} Give necessary and sufficient conditions that data correspond
    to a potential.
    \item {\it Continuity.} Is a potential a continuous function of data and how can
    data be changed so that they remain data for some potential?
\end{enumerate}

Firstly, we consider the inverse problem in terms of the reflection coefficients $r_{\pm}$,
the coefficient of transition matrix $b$ or the zeros of $b$, which coincide with zeros of $r_{\pm}$
as was noted above. For these data, we obtain:
\begin{enumerate}[label={(\roman*)}]
    \item Each of these data determine a potential uniquely.
    \item We solve the reconstruction and characterization problem for these data.
    \item We also solve these problems for even, odd or real-valued potentials.
    \item We solve the continuity problem in terms of $r_{\pm}$ and $b$ and
    we partially solve this problem in terms of zeros of $b$.
    Namely, we show that if a zero of $b$ is arbitrarily shifted, then we obtain a
    coefficient $\tilde{b}$ for some potential from $\cP$. We also prove that a potential
    continuously depends on one zero of $b$, where its other zeros are
    fixed.
\end{enumerate}

Secondly, we consider the inverse problem in terms of the coefficient $a$ of a transition matrix or
in terms of its zeros (the resonances). These data do not uniquely determine a potential from $\cP$.
In this case, we obtain:
\begin{enumerate}[label={(\roman*)}]
    \item We prove the uniqueness by adding to data a sequence
    $(\xi_n)_{n \geq 0}$, where $\xi_0 = e^{i \vp}$ for some $\vp \in \R$ and
    $\xi_n \in \{-1,0,1\}$, $n \geq 1$. In generic case, $\xi_0$ is a
    phase multiplier of $b$ and $\xi_n = \sign \Im z_n$, where $(z_n)_{n \geq 1}$ is a sequence
    of zeros of $b$ in $\C \sm \{0\}$.
    \item We solve the characterization problem for such extended data and describe isoresonance sets, i.e. the sets of potentials which have the same resonances.
    \item We also solve these problems for even, odd or real-valued potentials.
    \item Finally, we consider the stability problem for resonances and we solve it in some special cases.
    Note that the resonances do not completely determine a potential and they are not free parameters
    of the Dirac operator, i.e. we can not arbitrarily move a resonance. On the other hand,
    the zeros of $b$ completely determine a potential and they are free parameters.
\end{enumerate}

The coefficients $a$ and $b$ are important in study of the nonlinear
Schr{\"o}dinger equation (NLS). Namely, $\log |a(k)|$ and $\arg b(k)$, $k \in \R$,
are action-angle variables for the NLS equation (see e.g. p.~230 in \cite{FT07}).
Using the analytical properties of $a$ and $b$, we obtain
the representations of action-angle variables in terms of resonances and zeros of $b$.

Note also that the Dirac operators can be rewritten as canonical systems (see
e.g. p.~389 in \cite{GK67}). Using this relation, we describe the class of canonical systems,
which are unitary equivalent to the Dirac operators. Then we introduce the scattering matrix and
solve the inverse scattering problem for such canonical systems.

In our paper, we use methods from the paper \cite{K05}, where
the similar problems for the Schr{\"o}dinger operators on the line were
considered. However, there exist differences between Dirac and Schr{\"o}dinger cases,
which require an adaptation of the proofs. We describe
the main differences between Dirac and Schr{\"o}dinger cases:

\begin{enumerate}[label={(\roman*)}]
    \item In general, the resonances of the Dirac operators are not symmetric with respect to the imaginary
    line.

    \item Roughly speaking, the spectral problem for Dirac operators
    corresponds to spectral problem for the Schr{\"o}dinger operators with
    distributions.

    \item The second term in the asymptotic expansion of the Jost solutions of the Dirac operators
    decrease more slowly as spectral parameter goes to infinity. Maybe it is the
    main point.
\end{enumerate}

In our paper, we also use the solution of the inverse scattering problem for
the Dirac operators with not necessarily compactly supported potentials.
This problem has been widely studied, see \cite{APT04, FT07, FHMP09} and references therein.

There are a lot papers about resonances in the different setting,
see articles \cite{F97, H99, K04a, S00, Z87} and the book
 \cite{DZ19} and the references therein. The inverse resonance problem for
Schr{\"o}dinger operators with compactly supported potentials was
solved in \cite{K05} for the case of the real line and in
\cite{K04a} for the case of the half line. In these papers, the
uniqueness, reconstruction, and characterization problems were
solved, see also Zworski \cite{Z02}, Brown-Knowles-Weikard
\cite{BKW03} concerning the uniqueness.
Moreover, there are other results about
perturbations of the following model (unperturbed) potentials by
compactly supported potentials: step potentials \cite{C06}, periodic
potentials \cite{K11h}, and linear potentials (corresponding to
one-dimensional Stark operators) \cite{K17}.

In the theory of resonances, one of the basic result is the
asymptotics of the counting function of resonances, which is an
analogue of the Weyl law for eigenvalues. For Schr{\"o}dinger
operators on the real line with compactly supported potentials, such
result was first obtained by Zworski in \cite{Z87}. The "local
resonance" stability problems were considered in \cite{K04a, K04b, K05, MSW10}
and results about the Carleson measures for resonances
were obtained in \cite{K16}.

In our paper, we consider the inverse resonance problem for Dirac
operators on the real line. As far as we know, this problem has not
been studied enough. Now, we shortly discuss the known results on
the resonances of one-dimensional Dirac operators. Global estimates
of resonances for the massless Dirac operators on the real line were
obtained in \cite{K14}. Resonances for Dirac operators was also
studied in \cite{IK14b} for the massive Dirac operators on the
half-line and in \cite{IK14a} for the massless Dirac operators on
the real line for smooth compactly supported potentials $q' \in L^1(\R)$.
If $q' \in L^1(\R)$, then the second term in the asymptotic
expansion of the Jost solutions of the Dirac operators decrease as in case of the Schr{\"o}dinger
operator as spectral parameter goes to infinity. In these papers,
the following results were obtained:
\begin{enumerate}[label={(\roman*)}]
    \item asymptotics of counting function of the resonances;
    \item estimates on the resonances and the forbidden domain;
    \item the trace formula in terms of resonances for the massless case.
\end{enumerate}
In \cite{IK15}, the radial Dirac operator was considered.
The inverse resonance problem for the massless Dirac operator on the half-line
with compactly supported potentials
was solved in \cite{KM20}. Note that the problem on the half-line is simpler
than the problem on the real line and the main differences are
\begin{enumerate}[label={(\roman*)}]
    \item the scattering matrix depends on two coefficients $a$ and $b$;
    \item the potential are not uniquely determined by resonances;
    \item the resonances are not free parameters.
\end{enumerate}
There is a
number of papers dealing with other related problems for the
one-dimensional Dirac operators, for instance, the resonances for
Dirac fields in black holes was described, see e.g., \cite{I18}.

Recall that Dirac operators can be rewritten as canonical systems.
For these systems, the inverse problem can be solved in terms of
de Branges spaces (see \cite{dB, R14}).
There exist many papers devoted to de Branges spaces and canonical
systems. In particular, they are used in the inverse spectral theory
of Schr{\"o}dinger and Dirac operators (see e.g. \cite{R02}). It is
well-known that there exist the connection between Jost solutions
and de Branges spaces. In paper \cite{KM20}, the characterization
of de Brange spaces associated with the Dirac operators was given. Similar
characterization in case of the Schr{\"o}dinger operators was given
in \cite{BBP} (see also \cite{P}).

\section{Main results} \label{p1}
We introduce the Fourier transform $\cF$ on $L^2(\R)$ by
$$
    (\cF g)(k) = \int_{\R} g (s) e^{2iks} ds,\qq k \in \R.
$$
Then its inverse $\cF^{-1}$ on $L^2(\R)$ is given by
$$
    \cF^{-1} g(s) = \frac{1}{\pi} \int_{\R} g(k) e^{-2iks} dk,\qq s \in \R.
$$
We will use the notation $\hat g = \cF^{-1} g$.
Now, we introduce the classes of scattering data associated with the zeros of the reflection coefficients.

\begin{definition*}
    $\cB = \cB_{\g}$ is a metric space of all entire functions $b$ such that
    $\hat b \in \cP_{\g}$ equipped with the metric
    \[ \label{p2e10}
        \rho_{\cB}(b_1,b_2) = \| \hat b_1 - \hat b_2 \|_{L^2(0,\g)},\qq b_1,b_2 \in \cB.
    \]
\end{definition*}
We recall well-known facts about entire functions. An entire function $f(z)$ is said to be of
\textit{exponential type} if there exist constants $\t,C > 0$ such that $|f(z)| \leq C e^{\t |z|}$,
$z \in \C$. We introduce the Cartwright class of entire functions $\cE_{Cart}$ by
\begin{definition*}
    $\cE_{Cart}$ is a class of entire functions $f$ of exponential type such that
    $$
        \int_{\R} \frac{\log(1+|f(k)|)dk}{1 + k^2} < \iy,\qq \t_+(f) = 0,\qq \t_-(f) = 2\g,
    $$
    where $\t_{\pm}(f) = \lim \sup_{y \to +\iy} \frac{\log |f(\pm i y)|}{y}$.
\end{definition*}
\begin{remark}
    It follows from the Paley-Wiener Theorem (see e.g. p.30 in \cite{Koo98}) that $\cB \ss \cE_{Cart}$.
\end{remark}
Note that the metric space $\cB$ is not complete.
We also equip the class $\cP$ with the metric $\rho_{\cP}$ given by
\[ \label{p1e16}
    \rho_{\cP}(q_1,q_2) = \|q_1 - q_2\|_{L^2(0,\g)},\qq q_1,q_2 \in \cP.
\]
Thus, $\cP$ is a metric space, which is also not complete.
Now, we present our first result.

\begin{theorem} \label{t1}
        The mapping $q \mapsto b(\cdot,q)$ is a homeomorphism between $\cP$ and $\cB$.
\end{theorem}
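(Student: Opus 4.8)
The plan is to realize $q\mapsto b(\cdot,q)$ through a transformation operator for the Jost solution $f^{+}$, to read off continuity from the defining Volterra equations, and to obtain bijectivity together with continuity of the inverse from the Marchenko inverse scattering theory for Dirac systems, after observing that the remaining transition coefficient $a$ is itself a function of $b$.

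First I would write the integral equations for the first column of $f^{+}$ equivalent to \er{intro:equation} together with the normalization $f^{+}(x,k)=e^{ikx\s_3}$ for $x\ge\g$, and solve them by successive approximations. Since $f^{-}(0,k)=I$ we have $A(k)=f^{+}(0,k)$, so the series produces $b(k)=A_{21}(k)=\int_0^{\g}\hat b(s)e^{2iks}\,ds$ with $\hat b\in L^2(0,\g)$ and $\hat b=-\ol q+\cN(q)$, where $\cN$ is a nonlinear operator of Volterra type built from multiple $q$-integrals (the leading correction being cubic). In particular $\supp\hat b\ss[0,\g]$; moreover, if $q$ vanishes on $[\g-\ve,\g]$ one may lower the normalization point to $\g-\ve$ and get $\supp\hat b\ss[0,\g-\ve]$, and if $q$ vanishes on $[0,\ve]$ a translation of the variable gives $\supp\hat b\ss[\ve,\g]$; this reduces the claim $\mathrm{conv}\,\supp\hat b=[0,\g]$ to $q\in\cP$ plus the reverse inclusion $\mathrm{conv}\,\supp q\ss\mathrm{conv}\,\supp\hat b$, which I recover below. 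The same successive-approximation bounds yield $\|\hat b_1-\hat b_2\|_{L^2(0,\g)}\le C(\|q_1\|,\|q_2\|)\|q_1-q_2\|_{L^2(0,\g)}$, so $q\mapsto b$ is continuous for the metrics $\rho_{\cP}$, $\rho_{\cB}$.

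For injectivity, use that $\s(H)=\s_{ac}(H)=\R$, so $H$ has no bound states; hence by the Marchenko theory for Dirac/ZS systems (\cite{APT04,FT07,FHMP09}) a potential in this class is determined by a single reflection coefficient, say $r_-=b/a$, and it suffices to recover $a$ from $b$. On $\R$ one has $|a(k)|^2=1+|b(k)|^2$; also $a$ is entire, zero-free on $\ol\C_+$, and $a(k)\to1$ as $k\to\iy$ in $\ol\C_+$ (since $b(k)\to0$ on $\R$ by Riemann--Lebesgue). Therefore $a$ is the entire outer factor in the factorization $1+bb_*=a\,a_*$ of the positive-on-$\R$ entire function $1+bb_*$, namely the one with no zeros in $\C_+$ and $a(\iy)=1$; in particular $a$ is a function of $b$ alone. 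This proves injectivity and furnishes the candidate inverse. For surjectivity, take $b\in\cB$ and define $a$ as above (the needed log-integrability and Cartwright hypotheses hold since $\cB\ss\cE_{Cart}$); then $|r_{\pm}|<1$ on $\R$, $a$ is zero-free on $\ol\C_+$, and $(1/a,r_{\pm})$ are admissible scattering data with no bound states, so the Marchenko equations produce a unique $q\in L^2(\R)$ with $b(\cdot,q)=b$. The exponential types $\t_+(b)=0$, $\t_-(b)=2\g$ pass to $a$ through the factorization, and a Paley--Wiener argument on the Marchenko kernel then forces $\supp q\ss[0,\g]$; combined with the first paragraph, $\mathrm{conv}\,\supp\hat b\ss\mathrm{conv}\,\supp q\ss[0,\g]$, and since $b\in\cB$ means $\mathrm{conv}\,\supp\hat b=[0,\g]$ we conclude $\mathrm{conv}\,\supp q=[0,\g]$, i.e. $q\in\cP$. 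Applying this construction to $b=b(\cdot,q)$ and using injectivity also gives $\mathrm{conv}\,\supp q\ss\mathrm{conv}\,\supp\hat b$, closing the gap left above, so $q\mapsto b$ is onto $\cB$. Continuity of $b\mapsto q$ follows from the continuous dependence of the Marchenko solution on the scattering data, which in turn depend continuously on $\hat b\in L^2(0,\g)$; alternatively, the real-linear Fr\'echet derivative of $q\mapsto\hat b$ is everywhere of the form $-C+V$ with $C$ complex conjugation and $V$ a Volterra operator, hence invertible, so $q\mapsto b$ is open and the continuous bijection is automatically a homeomorphism.

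I expect the characterization half --- surjectivity together with the conclusion $q\in\cP$ --- to be the main obstacle. The delicate points are: verifying that the outer factor of $1+bb_*$ is genuinely entire of exponential type, zero-free on $\ol\C_+$, and with the correct $\t_{\pm}$; the Paley--Wiener step localizing $\supp q$ inside $[0,\g]$, where the slower decay of the second term in the Jost asymptotics for Dirac operators (noted in the introduction) makes the kernel estimates heavier than in the Schr\"odinger case of \cite{K05}; and the endpoint bookkeeping needed to pin the convex hull of $\supp q$ to be exactly $[0,\g]$ rather than a proper subinterval.
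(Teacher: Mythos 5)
Your proposal is correct and follows essentially the same route as the paper: the paper likewise reduces to the Marchenko/GLM inverse scattering homeomorphism between $\cL$ and $\cB_{\bu}$ (Theorem \ref{hlt2}, built on \cite{FHMP09}), recovers $a$ from $b$ as the unique zero-free-in-$\ol\C_+$ solution of $aa_*=1+bb_*$ normalized at infinity (Lemmas \ref{p3l1}, \ref{p3l6}, \ref{hll5}), and localizes $\supp q$ versus $\supp\hat b$ by a Paley--Wiener/GLM-kernel argument (Lemmas \ref{p4l1}, \ref{p3l2}). The only organizational difference is the endpoint bookkeeping: you close the loop by re-applying the reconstruction and invoking injectivity to get $\mathrm{conv}\,\supp q\ss\mathrm{conv}\,\supp\hat b$, whereas the paper proves the exact equalities $\sup\supp q=\sup\supp\hat b$ and $\inf\supp q=\inf\supp\hat b$ directly via the chain $\sup\supp q\le-\inf\supp\hat r_+\le\sup\supp\hat b\le\sup\supp q$.
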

\begin{remark}
    1) In Corollary \ref{p9c1}, we give the characterization of $b$ for even, odd or real-valued potentials.

    2) Similar results for the Schr{\"o}dinger operators were obtained in \cite{K05}.
\end{remark}

We also introduce the class of scattering data associated with the resonances, i.e.,
with the poles of the reflection coefficients. We need the following notation.

\begin{definition*}
    $\cA = \cA_{\g}$ is the set of all entire functions $a$ such that:
    \begin{enumerate}[label={\roman*)}]
        \item $a(k) \neq 0$ for any $k \in \C_+$;
        \item $|a(k)| \geq 1$ for any $k \in \R$;
        \item $|a|^2 - 1 \in L^1(\R)$;
        \item $a = 1 + \cF h$ for some $h \in \cP_{\g}$.
    \end{enumerate}
\end{definition*}
\begin{remark}
    By the Paley-Wiener Theorem, $\cA \ss \cE_{Cart}$.
\end{remark}
Below, we show that the coefficient $a$ does not uniquely determine a potential.
Thus, we need additional data from the coefficient $b$. Let $b \in \cB$ and let
$(z_n)_{n \geq 1}$, be the zeros of $b$ in $\C \sm \{ 0 \}$ counted with multiplicity and arranged
that $0 < |z_1| \leq |z_2| \leq \ldots$. Let also $p \geq 0$ be the multiplicity of the zero
$k = 0$ of $b(k)$. Then we introduce a sequence $\xi(b) = (\xi_n)_{n \geq 0}$ by
\[ \label{p1e18}
    \xi_0 = \frac{b^{(p)}(0)}{|b^{(p)}(0)|},\qq
    \xi_n = \sign \Im z_n =
    \begin{cases}
        1,& \Im z_n > 0\\
        0,& \Im z_n  = 0\\
        -1,& \Im z_n < 0
    \end{cases},\qq n \geq 1.
\]
Using $\xi(b)$, we can parametrize the space of solutions $b \in \cB$ of the equation
$b b_* = B$ for some exponential type function $B$ such that $B(z) \geq 0$ for any $z \in \R$.
In this case, the zeros of $B$ are symmetric with respect to the real line and then some zeros are
zeros of $b$ and the other are zeros of $b_*$. Thus, for such function $B$, we introduce
the following set
$$
    \Xi(B) = \{ \, \xi(b) \, \mid \, b \in \cB,\, b b_* = B \, \}.
$$
Now, we give our second main result.
\begin{theorem} \label{t2}
    The mapping $q \mapsto (a, \xi)$ is a bijection between $\cP$ and $\cA \ts \Xi(B)$,
    where $a = a(\cdot,q)$, $\xi = \xi(b(\cdot,q))$, and $B = aa_*-1$.
\end{theorem}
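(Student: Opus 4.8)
The plan is to use Theorem \ref{t1}, which identifies $q$ with $b=b(\cdot,q)$, and then to analyse the remaining map $b\mapsto(a,\xi(b))$ via the scalar identity $aa_*=1+bb_*$. For $q\in\cP$ write $a=a(\cdot,q)$, $b=b(\cdot,q)$; since $aa_*-bb_*=1$ one has $B:=aa_*-1=bb_*$, so $B$ is already determined by $b$. The argument splits into: (i) the map is well defined; (ii) uniqueness of the factor $a$ of $1+B$ inside $\cA$; (iii) recovery of $b$ from the pair $(B,\xi(b))$. \emph{Well-definedness.} I would first check $a\in\cA$: condition (i) of the definition of $\cA$ is the classical fact recalled above ($a\ne0$ on $\ol\C_+$); (ii) follows by restricting $aa_*-bb_*=1$ to $\R$, where $g_*=\ol g$, giving $|a|^2=1+|b|^2\ge1$; (iii) holds since $b=\cF\hat b$ with $\hat b\in L^2$ gives $b|_\R\in L^2(\R)$, whence $|a|^2-1=|b|^2\in L^1(\R)$; and (iv), $a=1+\cF h$ with $h\in\cP_\g$, is supplied by the integral representation of the transition matrix for compactly supported potentials established in the preceding sections. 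Next, $\xi=\xi(b)\in\Xi(B)$ holds \emph{by the definition} of $\Xi(\cdot)$, since $b\in\cB$ and $bb_*=B$. Hence $q\mapsto(a,\xi)$ maps $\cP$ into $\cA\ts\Xi(B)$ with $B=aa_*-1$.

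\emph{Surjectivity.} Given $(a,\xi)\in\cA\ts\Xi(B)$ with $B=aa_*-1$, the definition of $\Xi(B)$ furnishes $b\in\cB$ with $bb_*=B$ and $\xi(b)=\xi$, and Theorem \ref{t1} furnishes $q\in\cP$ with $b(\cdot,q)=b$, so $\xi(b(\cdot,q))=\xi$. It remains to identify $a(\cdot,q)$ with $a$. Both lie in $\cA$ and both satisfy $gg_*=1+B$ (for $a(\cdot,q)$ because $a(\cdot,q)a(\cdot,q)_*-bb_*=1$), so I need the uniqueness statement: if $g_1,g_2\in\cA$ and $g_1(g_1)_*=g_2(g_2)_*$ then $g_1=g_2$. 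For this, $g:=g_1/g_2$ is meromorphic with neither zeros nor poles in $\ol\C_+$ (all zeros of $g_1,g_2$ lie in $\C_-$) and $|g|=1$ on $\R$, hence $gg_*\equiv1$ by reflection; thus $g$ is entire and zero-free, and being a ratio of Cartwright functions with $\t_\pm=0$ it equals a unimodular constant, which $g\to1$ at $\infty$ along $\R$ (from property (iv) for $g_1,g_2$) forces to be $1$. Therefore $a(\cdot,q)=a$, and surjectivity follows.

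\emph{Injectivity.} Suppose $q_1,q_2\in\cP$ have the same image and put $b_j=b(\cdot,q_j)$. Then $b_1(b_1)_*=a(\cdot,q_1)a(\cdot,q_1)_*-1=a(\cdot,q_2)a(\cdot,q_2)_*-1=b_2(b_2)_*=:B$ and $\xi(b_1)=\xi(b_2)$, and I claim $b_1=b_2$; Theorem \ref{t1} then gives $q_1=q_2$. To prove the claim one recovers the zero multiset of $b_j$ from $(B,\xi(b_j))$: the multiplicity $p$ of the zero at $0$ satisfies $2p=\mathrm{ord}_0B$; a real zero $t\ne0$ of $B$ of order $2m$ is a zero of $b_j$ of order $m$ (since $|b_j|^2=B$ near $t$); and for each conjugate pair $\{w,\ol w\}$ of non-real zeros of $B$ the orders $\a=\mathrm{ord}_wb_j$, $\b=\mathrm{ord}_{\ol w}b_j$ satisfy $\a+\b=\mathrm{ord}_wB$, with the values $\sign\Im z_n$ — read off in blocks of equal modulus — encoding this split and $\xi_0$ pinning the remaining unimodular constant. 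Since $b_1,b_2$ then share the same zeros with multiplicity, satisfy $|b_1|=|b_2|$ on $\R$, and are Cartwright, $b_1/b_2$ is a zero-free Cartwright function with $\t_\pm=0$, hence (as above) a unimodular constant equal to $1$ by $\xi_0$; thus $b_1=b_2$.

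\emph{Main obstacle.} The crux is the last claim: that $\xi(b)$ determines, within the fibre $\{b\in\cB:bb_*=B\}$, how the non-real zeros of $B$ are distributed between $b$ and $b_*$ — equivalently, that $b\mapsto\xi(b)$ is \emph{injective} on this fibre (it is onto $\Xi(B)$ by definition). This will require careful bookkeeping of multiplicities, of possible coincidences among the moduli $|z_n|$, of real zeros and of the zero at the origin, and of the role of the normalizing phase $\xi_0$, together with the membership condition $\hat b\in\cP_\g$; it is precisely here that the extra data $\xi$ is indispensable, reflecting that $a$ alone does not determine $q$. The remaining ingredients — the representation $a=1+\cF h$ with $h\in\cP_\g$ and the homeomorphism of Theorem \ref{t1} — come from the preceding sections.
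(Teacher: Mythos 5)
Your proposal is correct and follows essentially the same route as the paper: reduce to the coefficient $b$ via Theorem \ref{t1}, then exploit the factorization $aa_* = 1 + bb_*$, where your uniqueness of the $\cA$-factor is the content of Lemma \ref{hll5} (proved there by a harmonic-conjugate argument rather than your ratio/reflection argument) and your ``main obstacle'' --- that $(B,\xi)$ determines $b$ within the fibre $\{\,b \in \cB \mid bb_* = B\,\}$ --- is exactly what Lemma \ref{hll6} supplies via an explicit Hadamard product over the zeros $\z_n$ selected by the signs $\xi_n$. The paper's treatment of that last point is scarcely more detailed than your sketch (it simply invokes that a Cartwright-class function is determined by its zeros and the phase $\xi_0$), so the bookkeeping you flag --- multiplicities, coincident moduli, real zeros --- is deferred there as well.
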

\begin{remark}
    1) Due to Lemma \ref{hll6}, for each $a \in \cA$, there exists at least one $b \in \cB$
    such that $aa_* - bb_* = 1$.
    It follows that $\Xi(aa_* - 1) \neq \es$ for any $a \in \cA$.

    2) We can give another characterization of solutions $b \in \cB$
    of the equation $aa_* - bb_* = 1$ for some fixed $a \in \cA$. Let $B = aa_*-1$ and let $z_o \in \C$
    be its zero with multiplicity $n_o \geq 1$. Below, we show that $\ol z_o$ is a zero of $B$ with
    multiplicity $n_o$. Let $b \in \cB$ and $bb_* = B$. Then, $z_o$ is a zero of $b$ with
    multiplicity $m_o$ and $\ol z_o$ is a zero of $b$ with multiplicity $n_o - m_o$.
    Thus, the solutions $b \in \cB$ are distinguished by the multiplicity of their zeros and
    by the constant $\xi_0(b)$.

    3) In Corollary \ref{p9c2}, we give the characterization of $a$ for even, odd or real-valued potentials.

    4) Similar result for the Schr{\"o}dinger operators was obtained in \cite{K05}.
\end{remark}

Using Theorems \ref{t1} and \ref{t2}, we give the characterization of the reflection coefficients
for the compactly supported potentials. We introduce the following classes.
\begin{definition*}
    $\cR^+ = \cR^+_{\g}$ (or $\cR^- = \cR^-_{\g}$) is a set of all meromorphic functions $r_+$ (or $r_-$),
    which have the form $r_+ = -\frac{\ol b}{a}$ (or $r_- = \frac{b}{a}$) for some
    $b \in \cB_{\g}$ and $a \in \cA_{\g}$ such that
    $$
        |a(k)|^2 - |b(k)|^2 = 1,\qq k \in \R.
    $$
\end{definition*}
Below, we prove that $\hat r_{\pm} \in L^2(\R) \cap L^1(\R)$ for any $r_{\pm} \in \cR^{\pm}$.
Thus, we introduce the metric
$$
    \rho_{\cR}(r_1,r_2) = \|\hat r_1 - \hat r_2\|_{L^2(\R)} + \|\hat r_1 - \hat r_2\|_{L^1(\R)},
    \qq r_1,r_2 \in \cR^{\pm}.
$$
Thus, $\cR^{\pm}$ are incomplete metric spaces equipped with the metric $\rho_{\cR}$.
\begin{theorem} \label{t11}
    The mappings $q \mapsto r_{\pm}(\cdot,q)$ are homeomorphisms between $\cP$ and $\cR^{\pm}$.
\end{theorem}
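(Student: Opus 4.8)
The plan is to reduce Theorem \ref{t11} to Theorem \ref{t1} by showing that the pair $(a,b)$ can be reconstructed from $r_{\pm}$ by explicit algebra together with the uniqueness of outer functions. I carry out the argument for $r_+$; the case of $r_-=b/a$ is identical with the evident sign and conjugation changes (note $|r_+|=|r_-|$ on $\R$). \emph{Well-definedness and algebraic inversion:} for $q\in\cP$ set $a=a(\cdot,q)$, $b=b(\cdot,q)$; by Theorem \ref{t1}, $b\in\cB$, while $a\in\cA$ and $aa_*-bb_*\equiv 1$ (so $|a|^2-|b|^2=1$ on $\R$) are part of the general description of the scattering coefficients recalled in the Introduction, hence $r_+(\cdot,q)=-\ol b/a\in\cR^+$. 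Conversely, if $r_+=-\ol b/a$ with $b\in\cB$, $a\in\cA$, $|a|^2-|b|^2=1$ on $\R$, then on $\R$ one has $|r_+|^2=|b|^2/|a|^2=1-|a|^{-2}$, so
\[
    |a(k)|^2=\frac{1}{1-|r_+(k)|^2},\qq k\in\R .
\]
Since $a\in\cA$ is zero-free in $\ol\C_+$, lies in $\cE_{Cart}$, and satisfies $a(k)\to 1$ as $|k|\to\iy$, it is the outer function in $\C_+$ with this boundary modulus and normalization, hence is uniquely determined by $r_+$; then $b=-\ol{r_+}\,a_*$ on $\R$ recovers $b\in\cB\ss\cE_{Cart}$ uniquely, an entire function of exponential type being determined by its restriction to $\R$.

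\emph{Bijectivity.} Injectivity follows at once: $r_+(\cdot,q_1)=r_+(\cdot,q_2)$ forces $b(\cdot,q_1)=b(\cdot,q_2)$ by the reconstruction above, hence $q_1=q_2$ by Theorem \ref{t1}. For surjectivity, let $r_+\in\cR^+$, say $r_+=-\ol b/a$ with $b\in\cB$, $a\in\cA$ and $aa_*-bb_*\equiv 1$. By Theorem \ref{t1} there is $q\in\cP$ with $b(\cdot,q)=b$, and then the general identity gives $a(\cdot,q)\,a(\cdot,q)_*=1+bb_*=aa_*$, so $|a(\cdot,q)|=|a|$ on $\R$; both being normalized outer functions in $\C_+$, we get $a(\cdot,q)=a$, whence $r_+(\cdot,q)=-\ol b/a=r_+$.

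\emph{Continuity.} By Theorem \ref{t1} it suffices to check that $b\mapsto r_+$ and $r_+\mapsto b$ are continuous for $\rho_{\cB}$ and $\rho_{\cR}$ on the relevant subsets. In one direction: since $\hat b$ is supported in $[0,\g]$, $b=\cF\hat b$ depends continuously on $\hat b\in L^2(0,\g)$ in $L^\iy(\R)$, hence $\log(1+|b|^2)$ does so in $L^1(\R)\cap L^\iy(\R)$; the outer factorization $b\mapsto a$ (a Cauchy--Herglotz transform of $\log|a|$ followed by exponentiation) is then continuous into the causal Wiener algebra $\{1+\cF g:\ g\in L^1(0,\iy)\}$, and since $|a|\ge 1$ on $\R$ and $a$ is zero-free on $\ol\C_+$ with $a\to 1$, the inverse $1/a=1+\cF g$ with $g\in L^1(0,\iy)$ exists and depends continuously on $a$ (Paley--Wiener inversion). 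Finally $\hat r_+=-\wh{b_*}-\wh{b_*}*g$ with $\wh{b_*}\in L^1(\R)\cap L^2(\R)$ supported in $[-\g,0]$, so by Young's inequality $\hat r_+\in L^1\cap L^2$ depends continuously on $\hat b$, i.e. in $\rho_{\cR}$. The reverse map is the same chain run backwards: $\hat r_+$ close in $L^1$ makes $r_+=\cF\hat r_+$ close in $L^\iy$ with $\sup_\R|r_+|<1$ (as $|r_+(k)|\to 0$), hence $|a|^2=(1-|r_+|^2)^{-1}$, then $a$, $a_*$, $1/a$ depend continuously on $r_+$, and $\hat b$, being a convolution of $\cF^{-1}(\ol{r_+})$ with $\wh{a_*}$, depends continuously on $r_+$ in $L^2(0,\g)$; Theorem \ref{t1} then yields $q$ continuously. (Alternatively one may obtain the continuity from the known continuity of the direct and inverse scattering maps on broader potential classes, since there are no bound states here, once Theorems \ref{t1}--\ref{t2} identify the image as $\cR^\pm$.)

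\emph{Main obstacle.} The technical core is the continuity of the outer factorization $b\mapsto a$ and of the Wiener-algebra inversion $a\mapsto 1/a$ in exactly the $L^1(\R)$- and $L^2(\R)$-topologies that define $\rho_{\cR}$; moreover the continuity of the inverse map cannot be shortcut by compactness, since neither $\cP$ nor $\cR^\pm$ is complete. These estimates run parallel to those for Schr{\"o}dinger operators in \cite{K05}, with the modifications forced by the slower decay of the second term in the Jost-solution asymptotics noted in the Introduction.
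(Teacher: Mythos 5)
Your proof is correct and follows essentially the same route as the paper: the paper likewise reduces everything to the characterization of the image via Theorems \ref{t1} and \ref{t2} together with the unique recovery of $(a,b)$ from $r_\pm$ (its Lemmas \ref{p3l1}, \ref{p3l4}, \ref{p3l6} are exactly your outer-factorization and Wiener-algebra steps), and it obtains the continuity by the shortcut you mention parenthetically, namely the known homeomorphism between $\cL$ and $\cR_{\bu}$ from Theorem \ref{hlt1} plus equivalence of the metrics on the subclasses. The only small imprecision is attributing $a(\cdot,q)\in\cA$ to the Introduction: that membership (in particular $a=1+\cF h$ with $h\in\cP_\g$) is the content of Lemma \ref{hll5}/Theorem \ref{t2}, which your argument does need in order to place $r_\pm(\cdot,q)$ in $\cR^{\pm}$.
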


Above, we introduced the Cartwright class of entire functions $\cE_{Cart}$.
Such functions have some remarkable properties.
Let $f \in \cE_{Cart}$ and let $p$ be the multiplicity of zero of $f(k)$ at $k = 0$.
We denote by $(k_n)_{n \geq 1}$ zeros of $f$ in $\C \sm \{ 0 \}$
counted with multiplicity and arranged that $0 < |k_1| \leq |k_2| \leq \ldots$.
Then $f$ has the Hadamard factorization
\[ \label{p2e13}
    f(k) = C k^p e^{i \g k} \lim_{r \to +\iy}
        \prod_{|k_n| \leq r} \left(1 - \frac{k}{k_n}\right),\qq k \in \C,
\]
see, e.g., pp.127-130 in \cite{L96}, where the product converges uniformly
on compact subsets of $\C$ and
\[
    C = \frac{f^{(p)}(0)}{p!},\qq
    \sum_{n \geq 1} \frac{|\Im k_n|}{|k_n|^2} < +\iy,\qq
    \exists \lim_{r \to +\iy} \sum_{|k_n| \leq r} \frac{1}{k_n} \neq \iy.
\]
For any entire function $f$ and $(r,\d) \in \R_+ \ts [0,\frac{\pi}{2}]$,
we introduce the following counting functions
$$
    N_{\pm}(r,\d,f) = \# \{ \, k \in \C \, \mid \,
    f(k) = 0,\, |k| \leq r,\,\pm \Re k \geq 0,\, \d < |\arg k| < \pi-\d  \}.
$$
We need the Levinson's result about zeros of functions from
$\cE_{Cart}$, see, e.g., p. 58 in \cite{Koo98}.
\begin{theorem*}[Levinson]
    Let $f \in \cE_{Cart}$. Then for each $\d > 0$ we have
    \[ \label{p2e15}
        N_{\pm}(r,0,f) = \frac{\g}{\pi} r + o(r),\qq N_{\pm}(r,\d,f) = o(r),
    \]
    as $r \to +\iy$.
\end{theorem*}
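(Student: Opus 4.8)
This is the classical theorem on the angular distribution of zeros of Cartwright class functions, specialised to the normalisation $\t_+(f)=0$, $\t_-(f)=2\g$; it is contained in \cite{Koo98} (see also \cite{L96}), and I only indicate the structure of the argument. First one reduces to the indicator. Put $g(z)=e^{-i\g z}f(z)$, so that $g$ is entire of exponential type, has exactly the zeros $(k_n)$ of $f$, is log-integrable on $\R$ (the factor has modulus $1$ there), and satisfies $\t_+(g)=\t_-(g)=\g$. The indicator $h_g$ of an entire function of exponential type is trigonometrically convex, and the log-integrability on $\R$ forces the indicator diagram of $g$ to be a segment of the imaginary axis; since $h_g(\pm\tfrac{\pi}{2})=\g$, this segment equals $[-i\g,i\g]$, and hence $h_g(\theta)=\g|\sin\theta|$. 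The length $2\g$ of this segment is the origin of the coefficient $\g/\pi$ below.

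Next, the estimate $N_\pm(r,\d,f)=o(r)$ for fixed $\d>0$ follows from the Hadamard factorisation \er{p2e13} alone: there $\sum_n|\Im k_n|/|k_n|^2<\iy$. If $|k_n|\le r$ and $\d<|\arg k_n|<\pi-\d$, then $|\Im k_n|/|k_n|^2\ge(\sin\d)/|k_n|$, so the number of such $n$ with $|k_n|>R$ is at most $\tfrac{r}{\sin\d}\sum_{|k_n|>R}|\Im k_n|/|k_n|^2$. Given $\varepsilon>0$, choose $R$ so that this tail sum is $<\varepsilon\sin\d$ and add the finitely many $k_n$ with $|k_n|\le R$; this gives $\limsup_{r\to\iy}N_\pm(r,\d,f)/r\le\varepsilon$, and since $\varepsilon$ was arbitrary, $N_\pm(r,\d,f)=o(r)$.

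Finally, for $N_\pm(r,0,f)=\tfrac\g\pi r+o(r)$ one applies the Carleman formula for $g$ on the half-disc $\{|z|<r,\ \pm\Re z>0\}$. On the diameter the relevant quantity is $\log|g(iy)g(-iy)|$: from $\t_\pm(g)=\g$ together with the fact that $g$ is of bounded type in $\C_+$ and in $\C_-$ (so that the Poisson representation keeps the defect $\g|y|-\log|g(\pm iy)|$ integrable against $(1+y^2)^{-1}$), the diameter integral contributes $\tfrac\g\pi\log r+o(\log r)$; crucially this quantity is the same for the right and for the left half-disc, since they share the diameter, and this is the source of the equal splitting of the zeros between $N_+$ and $N_-$. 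On the semicircle the contribution is $o(\log r)$, using $\log^+|g(re^{i\theta})|\le(h_g(\theta)+o(1))r$ on the part away from the real axis and the log-integrability on $\R$ on the part near it. Carleman's formula then yields
$$
\sum_{|k_n|<r,\ \pm\Re k_n>0}\left(\frac1{|k_n|}-\frac{|k_n|}{r^2}\right)\cos(\arg k_n)=\frac\g\pi\log r+o(\log r),
$$
and by the previous step the zeros counted here have $\arg k_n\to0$ (resp.\ $\pi$), so $\cos(\arg k_n)=1+o(1)$ in an averaged sense; a Tauberian argument, based on the monotonicity of the counting function, then turns this logarithmic asymptotics into $N_\pm(r,0,f)=\tfrac\g\pi r+o(r)$.

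The main obstacle is this last step: making rigorous the control of $\log|g|$ near the real axis — on the semicircle, where the zeros accumulate and $\log|g|$ oscillates, and on the imaginary axis, where a two-sided bound is needed — together with the Tauberian passage from the logarithmically averaged zero sum back to the counting function $N_\pm$ itself. This is precisely the content of the cited classical results, which is why we quote them rather than reprove them.
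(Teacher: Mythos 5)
The paper does not prove this statement at all: it is quoted verbatim as a classical theorem of Levinson/Cartwright with a pointer to p.~58 of \cite{Koo98}, so there is no internal proof to compare against. Your sketch is a correct outline of exactly that classical argument (normalising by $e^{-i\g z}$ to identify the indicator diagram as $[-i\g,i\g]$, deducing $N_{\pm}(r,\d,f)=o(r)$ from $\sum_n |\Im k_n|/|k_n|^2<\iy$, and then combining Carleman's formula on half-discs with a Tauberian step), and you correctly flag the genuinely delicate points (lower bounds on $\log|g|$ near $\R$ and the Tauberian passage) as the content of the cited references, so the proposal is consistent with the paper's treatment.
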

Recall that $\cA,\cB \ss \cE_{Cart}$. Thus, using the properties of $\cE_{Cart}$, we get the following corollary.
\begin{corollary} \label{t3}
    \begin{enumerate}[label={\roman*)}]
        \item The potential $q \in \cP$ is uniquely determined by zeros of $b(\cdot,q) \in \cB$ and
        $b^{(p)}(0,q) \in \C \sm \{ 0 \}$, where $p$ is the multiplicity of zero of $b(k,q)$ at $k = 0$.
        Moreover, $b(\cdot,q)$ satisfies (\ref{p2e13} -- \ref{p2e15}).
        \item The potential $q \in \cP$ is uniquely determined by zeros of $a = a(\cdot,q) \in \cA$
        and by $\xi(b(\cdot,q)) \in \Xi(B)$, where $B = aa_*-1$.
        Moreover, $a(\cdot,q)$ satisfies (\ref{p2e13} -- \ref{p2e15}).
    \end{enumerate}
\end{corollary}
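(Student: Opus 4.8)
The plan is to deduce Corollary~\ref{t3} directly from Theorems~\ref{t1} and~\ref{t2}, the Hadamard factorization~(\ref{p2e13}), and Levinson's theorem. Both ``moreover'' clauses are immediate: $\cB\subset\cE_{Cart}$ and $\cA\subset\cE_{Cart}$ by Paley--Wiener, so every $b\in\cB$ and every $a\in\cA$ has the factorization~(\ref{p2e13}) together with the attendant conditions on its zeros, and applying Levinson's theorem to $b$, respectively to $a$, yields the counting asymptotics~(\ref{p2e15}). It therefore remains only to prove the uniqueness of $q$, and by Theorems~\ref{t1} and~\ref{t2} this reduces to two elementary facts about $\cE_{Cart}$.

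For part (i): by Theorem~\ref{t1}, $q\mapsto b(\cdot,q)$ is a bijection of $\cP$ onto $\cB$, so it suffices to recover $b\in\cB$ from its zeros and the number $b^{(p)}(0)$. In~(\ref{p2e13}) the exponential $e^{i\g k}$ is universal (fixed by $\g$), the factor $k^{p}$ and the canonical product $\lim_{r\to+\infty}\prod_{|k_{n}|\le r}(1-k/k_{n})$ are determined by the zero set of $b$ (with multiplicities, $p$ being the order of vanishing at the origin, for which $b^{(p)}(0)\ne0$ automatically), and the leading constant is $C=b^{(p)}(0)/p!$, supplied by the datum $b^{(p)}(0)$. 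Hence $b$, and with it $q$, is uniquely determined.

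For part (ii): by Theorem~\ref{t2}, $q\mapsto\bigl(a(\cdot,q),\,\xi(b(\cdot,q))\bigr)$ is a bijection of $\cP$ onto $\cA\times\Xi(B)$ with $B=aa_{*}-1$, so once $\xi$ is given it remains only to recover $a$ from its zeros. Here one must check that a function $a\in\cA$ is determined by its zero set alone. Indeed, $|a(0)|\ge1$ forces the origin not to be a zero, so in~(\ref{p2e13}) we have $p=0$ and $a(k)=a(0)\,e^{i\g k}\lim_{r\to+\infty}\prod_{|k_{n}|\le r}(1-k/k_{n})$; thus two functions $a_{1},a_{2}\in\cA$ with the same zeros satisfy $a_{1}/a_{2}\equiv a_{1}(0)/a_{2}(0)$, a constant. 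Since $a_{j}=1+\cF h_{j}$ with $h_{j}\in\cP_{\g}\subset L^{1}(0,\g)$, dominated convergence gives $a_{j}(iy)=1+\int_{0}^{\g}h_{j}(s)e^{-2ys}\,ds\to1$ as $y\to+\infty$, whence $a_{1}(0)/a_{2}(0)=\lim_{y\to+\infty}a_{1}(iy)/a_{2}(iy)=1$ and $a_{1}=a_{2}$. Combining this with the datum $\xi$ and Theorem~\ref{t2} recovers $q$.

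The substance of Corollary~\ref{t3} lies entirely in Theorems~\ref{t1} and~\ref{t2}, which are assumed here; the rest is bookkeeping with the Hadamard product and Levinson's theorem. The only point requiring a small extra argument is that, unlike $b$, the coefficient $a$ carries no explicit leading-coefficient datum — but this is harmless, since the normalization $a=1+\cF h$ built into the class $\cA$ forces $a(iy)\to1$ and so pins down $a(0)$ from the zeros. I therefore do not anticipate any real obstacle beyond invoking the two main theorems.
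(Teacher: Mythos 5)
Your proposal is correct and follows essentially the same route as the paper: reduce to Theorems~\ref{t1} and~\ref{t2}, get membership in $\cE_{Cart}$ from Paley--Wiener so that (\ref{p2e13}--\ref{p2e15}) hold, and read off uniqueness from the Hadamard factorization. Your extra remark on pinning down the leading constant of $a$ via $a(iy)\to 1$ is just a slightly more explicit version of the paper's appeal to the normalization $a=1+\cF h$ and $a(k)=1+o(1)$ at infinity.
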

\begin{remark}
    The asymptotics of distribution of resonances of the Dirac operators on the real line was
    obtained in \cite{IK14a} for smooth compactly supported potentials $q' \in L^1(\R)$.
\end{remark}

It follows from Theorem \ref{t2} that for some $q_o \in \cP$ there exist distinct potentials $q \in \cP$
such that $a(\cdot,q_o) = a(\cdot,q)$. Moreover, the function $a \in \cA$ is uniquely determined
by its zeros. For any $q_o \in \cP$, we introduce an \textit{isoresonance set} in $\cP$ as follows
$$
    \Iso(q_o) = \{\, q \in \cP \, \mid \, a(\cdot,q) = a(\cdot,q_o)\, \}.
$$
Using Theorem \ref{t2} for fixed $a(\cdot,q_o)$,
we give the characterization of $\Iso(q_o)$ in terms of~$\Xi(B)$.
\begin{corollary} \label{c2}
    Let $q_o \in \cP$. Then the mapping $q \mapsto \xi$ is a bijection between $\Iso(q_o)$ and $\Xi(B)$,
    where $\xi = \xi(b(\cdot,q))$, $a = a(\cdot,q_o)$, and $B = aa_*-1$.
\end{corollary}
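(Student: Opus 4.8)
The plan is to obtain this as a direct consequence of Theorem \ref{t2} by restricting the bijection established there to a single fibre of the first coordinate. Fix $q_o \in \cP$ and put $a_o = a(\cdot,q_o) \in \cA$ and $B_o = a_o (a_o)_* - 1$. The first thing I would verify is that the set $\Xi(B)$ appearing in the statement is unambiguous on $\Iso(q_o)$: if $q \in \Iso(q_o)$ then $a(\cdot,q) = a_o$ by the very definition of the isoresonance set, hence $B = aa_* - 1$ evaluated at this $q$ equals $a_o (a_o)_* - 1 = B_o$. So $\Xi(B) = \Xi(B_o)$ for every $q \in \Iso(q_o)$, and the target of the claimed bijection is the single fixed set $\Xi(B_o)$. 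Note also that $q_o \in \Iso(q_o)$, so $\Iso(q_o) \neq \es$, and by Remark~1) after Theorem \ref{t2} (which rests on Lemma \ref{hll6}) we have $\Xi(B_o) \neq \es$ as well, so neither side of the claimed bijection is empty.

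Let $\Phi \colon \cP \to \cA \ts \Xi(B)$ denote the bijection of Theorem \ref{t2}, so $\Phi(q) = (a(\cdot,q), \xi(b(\cdot,q)))$. For $q \in \Iso(q_o)$ we have $\Phi(q) = (a_o, \xi(b(\cdot,q)))$, so the assignment $q \mapsto \xi(b(\cdot,q))$ is precisely $\Phi$ restricted to $\Iso(q_o)$ followed by the projection onto the second coordinate. Injectivity is then immediate: the second-coordinate projection is injective on $\{a_o\} \ts \Xi(B_o)$, and $\Phi$ is injective, so the composition $q \mapsto \xi$ is injective on $\Iso(q_o)$.

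For surjectivity, take any $\xi \in \Xi(B_o)$. Then $a_o \in \cA$ and $\xi \in \Xi(a_o (a_o)_* - 1)$, so $(a_o,\xi)$ is an admissible pair in the sense of Theorem \ref{t2}; hence there exists $q \in \cP$ with $\Phi(q) = (a_o,\xi)$. The equality $a(\cdot,q) = a_o$ shows $q \in \Iso(q_o)$, and the equality $\xi(b(\cdot,q)) = \xi$ shows that this $q$ is sent to the prescribed $\xi$. Therefore $q \mapsto \xi$ maps $\Iso(q_o)$ onto $\Xi(B_o)$, and together with the previous paragraph it is a bijection between $\Iso(q_o)$ and $\Xi(B_o)$.

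I do not expect a genuine obstacle here: Corollary \ref{c2} is a formal corollary of Theorem \ref{t2}, and the only step requiring a word of care is the verification that $\Xi(B)$ is the same set for every $q \in \Iso(q_o)$ — so that the statement even parses unambiguously — which holds because $B = aa_*-1$ depends on $q$ only through $a$, and $a$ is by definition constant on $\Iso(q_o)$. Once that is noted, the content is exactly the elementary fact that restricting a bijection to a coordinate fibre yields a bijection onto that fibre.
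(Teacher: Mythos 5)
Your proposal is correct and follows essentially the same route as the paper: both restrict the bijection $q \mapsto (a,\xi)$ of Theorem \ref{t2} to the fibre over $a_o = a(\cdot,q_o)$, noting that $a$ is constant on $\Iso(q_o)$ so the target set $\Xi(B)$ is fixed. Your extra remarks on non-emptiness and surjectivity are just a more explicit spelling-out of the same argument.
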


On the other hand, it follows from Theorem \ref{t1} that for each $q \in \cP$ there exists a unique
$b = b(\cdot,q) \in \cB$. Now, we describe how $b(\cdot,q)$ changes for $q \in \Iso(q_o)$.
\begin{theorem} \label{t14}
    Let $q^o \in \cP$ and let $(z_n)_{n \geq 1}$ be zeros of $b(\cdot,q^o)$ counted with multiplicity.
    Then $q \in \Iso(q^o)$ if and only if $b(\cdot,q) = b(\cdot,q^o) P(\cdot)$, where
    \[ \label{p1e3}
        P(k) = e^{i\a} \lim_{r \to +\iy} \prod_{z_n \in G,\, |z_n| < r}
        \left( 1 - \frac{k}{\ol z_n} \right)\left( 1 - \frac{k}{z_n} \right)^{-1},
    \]
    for any $k \in \C$ and for some $\a \in \R$ and non-real subsequence $G \ss (z_n)_{n \geq 1}$.
\end{theorem}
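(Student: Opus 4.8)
I would deduce Theorem~\ref{t14} from Theorem~\ref{t1}, from the fact (recalled in the introduction) that $a\in\cA$ is uniquely determined by its zeros, and from the Hadamard factorization \er{p2e13}. The key observation is that the hypothesis $q\in\Iso(q^o)$ can be rephrased as an identity between $bb_*$ for the two potentials, after which the statement reduces to a comparison of zero sets. So the plan is: first reduce to $bb_*$, then check the two implications separately.

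\emph{Step 1 (reduction to $bb_*$).} Put $b^o=b(\cdot,q^o)$, $a=a(\cdot,q^o)$, $B=aa_*-1=b^o(b^o)_*$. I would first verify that, for $q\in\cP$,
$$ q\in\Iso(q^o)\qquad\Longleftrightarrow\qquad b(\cdot,q)\,b(\cdot,q)_*=B. $$
If $q\in\Iso(q^o)$ then $a(\cdot,q)=a$, and since $a(\cdot,q)a(\cdot,q)_*-b(\cdot,q)b(\cdot,q)_*=1$ we get $b(\cdot,q)b(\cdot,q)_*=aa_*-1=B$. Conversely, $b(\cdot,q)b(\cdot,q)_*=B$ forces $a(\cdot,q)a(\cdot,q)_*=1+B=aa_*$; the zeros of $a(\cdot,q)$ and of $a$ lie in $\C_-$ and coincide there with the zeros of $a(\cdot,q)a(\cdot,q)_*$ and of $aa_*$ respectively, so $a(\cdot,q)$ and $a$ have the same zeros and hence $a(\cdot,q)=a$, i.e.\ $q\in\Iso(q^o)$.

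\emph{Step 2 (the ``if'' direction).} Suppose $b(\cdot,q)=b^oP$ with $P$ of the form \er{p1e3}, i.e.\ $P=e^{i\a}P_G$ where $P_G(k)=\lim_{r\to\iy}\prod_{z_n\in G,\,|z_n|<r}(1-k/\ol z_n)(1-k/z_n)^{-1}$. Using $g_*(k)=\ol{g(\ol k)}$ one checks immediately that $(P_G)_*=1/P_G$, hence $PP_*=e^{i\a}P_G\cdot e^{-i\a}(P_G)_*=1$, and therefore $b(\cdot,q)b(\cdot,q)_*=b^o(b^o)_*\,PP_*=B$. By Step~1, $q\in\Iso(q^o)$.

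\emph{Step 3 (the ``only if'' direction).} Let $q\in\Iso(q^o)$ and set $b=b(\cdot,q)$, so $bb_*=B=b^o(b^o)_*$ by Step~1. Since $B_*=B$, the zero set of $B$ is symmetric about $\R$; because $b$ and $b_*$ have the same order at every real point, at a real zero $\zeta$ of $B$ the order $n_\zeta$ is even and both $b$ and $b^o$ have order $n_\zeta/2$ there (in particular $b$ and $b^o$ have the same order $p$ at $0$), while at a conjugate pair $\{\zeta,\ol\zeta\}$ of $B$-order $n_\zeta$ the orders of $b$ at $\zeta$ and at $\ol\zeta$ sum to $n_\zeta$, and similarly for $b^o$. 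Hence $b$ and $b^o$ have the same real zeros, and their nonzero zeros $(w_n)$ and $(z_n)$ differ only in that finitely or infinitely many copies of some $z_n$ have been flipped to $\ol z_n$; let $G\ss(z_n)_{n\ge1}$ be the (non-real) sub-multiset of the flipped zeros. Both $b,b^o\in\cE_{Cart}$ (by the remark after the definition of $\cB$), so \er{p2e13} gives $b=C_bk^pe^{i\g k}\lim_r\prod_{|w_n|<r}(1-k/w_n)$ and $b^o=C_{b^o}k^pe^{i\g k}\lim_r\prod_{|z_n|<r}(1-k/z_n)$ with the same factor $e^{i\g k}$. Since $|w_n|=|z_n|$ for every flipped index, the truncated products differ exactly by $\prod_{z_n\in G,\,|z_n|<r}(1-k/\ol z_n)(1-k/z_n)^{-1}$, which converges as $r\to\iy$ because $\sum_n|\Im z_n|/|z_n|^2<\iy$; dividing the two factorizations yields $b/b^o=(C_b/C_{b^o})P_G$. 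Comparing leading coefficients at $k=0$ in $bb_*=b^o(b^o)_*$ gives $|C_b|=|C_{b^o}|$, so $C_b/C_{b^o}=e^{i\a}$ for some $\a\in\R$, and therefore $b(\cdot,q)=b^oP$ with $P=e^{i\a}P_G$ of the form \er{p1e3}.

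\emph{Main obstacle.} Once Step~1 is in place the argument is short; the only genuinely technical point is the manipulation of the (possibly infinite) canonical products in Step~3 — verifying that flipping a subsequence of the zeros again produces a convergent product whose radial truncations match those of $b$ and $b^o$, so that the quotient is exactly $b/b^o$. This uses the Cartwright-class information carried by \er{p2e13} (namely $\sum_n|\Im z_n|/|z_n|^2<\iy$ and the existence of $\lim_r\sum_{|z_n|<r}1/z_n$) and is the analogue of the corresponding step in \cite{K05}; for potentials $q^o$ for which $b(\cdot,q^o)$ has only finitely many zeros it is vacuous.
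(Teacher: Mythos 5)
Your proof is correct and follows essentially the same route as the paper: both directions rest on the Hadamard factorization of $b$ and $b^o$ as Cartwright-class solutions of $bb_*=B$ (so that solutions differ only by reflecting non-real zeros across $\R$, which is exactly the content of the factorization \er{p7e8} in Lemma \ref{hll6} that the paper invokes), together with the identity $PP_*=1$ for the converse. The only cosmetic difference is that the paper closes the ``if'' direction by citing Lemma \ref{p3l6} (uniqueness of $a$ from $|b|$ on $\R$) where you instead recover $a(\cdot,q)=a(\cdot,q^o)$ from coincidence of their zero sets; both justifications are available in the paper.
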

\begin{remark}
    It is easy to see that $P$ is a meromorphic function such that $P_* = \frac{1}{P}$,
    which yields that $|P(k)| = 1$ for each $k \in \R$ and then
    $$
        |b(k,q)| = |b(k,q^o)|,\qq |r_{\pm}(k,q)| = |r_{\pm}(k,q^o)|,\qq k \in \R,
    $$
    for each $q \in \Iso(q^o)$.
\end{remark}

Using the analytical properties of $a$ and $b$,
we obtain the representations of action-angle variables for the NLS equation
in terms of resonances and zeros of $b$. We give some known facts about NLS equation from \cite{FT07}.
Let $p(x,t)$
be a solution of the defocusing NLS equation
\[ \label{p1e22}
    i\frac{\partial p}{\partial t} = - \frac{\partial^2 p}{\partial t^2} + 2|p|^2 p
\]
satisfying the initial condition $p(\cdot,0) = q \in L^2(\R) \cap L^1(\R)$. For such equation,
there exists the Lax pair or the zero curvature representation and the Dirac equation is the
corresponding auxiliary linear problem. Thus, considering the solution $p(\cdot,t)$ of (\ref{p1e22})
as a potential of the Dirac operator $H$, we get $b(\cdot,p(\cdot,t))$, $a(\cdot,p(\cdot,t))$
and these functions have a simple dynamics:
\[ \label{p1e23}
    a(k,p(\cdot,t)) = a(k,q),\qq b(k,p(\cdot,t)) = e^{4itk^2} b(k,q),\qq (k,t) \in \R^2,
\]
see, e.g., p.~52 in \cite{FT07}.
Thus, solving the inverse problem for $b$ given by (\ref{p1e23}),
we obtain $p(x,t)$. Moreover, the NLS equation is a Hamiltonian system with the Hamiltonian
$$
    E = \int_{\R} \left( \left| \frac{\partial p}{\partial x}(x,t) \right|^2 + |p(x,t)|^4 \right) dx.
$$
Such system is completely integrable and there exist action variables $\r(k,t)$, $k \in \R$, and
angle variables $\phi(k,t)$, $k \in \R$, given by
$$
    \r(k,t) = \frac{1}{\pi} \log |a(k,q)|,\qq \phi(k,t) = 4k^2t + \arg b(k,q),\qq (k,t) \in \R^2,
$$
where $\log x \in \R$ and $\arg x = 0$ for any $x \in \R_+$ and $p(\cdot,0) = q$ (see, e.g., p.~230 in \cite{FT07}).
Now, we give the representation of these variables in terms of the resonances and zeros of $b$,
when $q \in \cP$.
\begin{theorem} \label{t12}
    Let $q \in \cP$ and let $a = a(\cdot,q)$, $b = b(\cdot,q)$, $\xi = \xi(b)$.
    Let also $(k_n)_{n \geq 1}$ be zeros of $a$ in $\C_-$ and
    let $(z_n)_{n \geq 1}$ be zeros of $b$ in $\C \sm \{0\}$. Then we have
    \begin{align}
        \log|a(k)| &= \log|a(0)| + \lim_{r \to +\iy}\sum_{|k_n| \leq r} \log \left| 1 - \frac{k}{k_n} \right|,\qq k \in \R, \label{p1e19}\\
        \arg b(k) &= \arg \xi_0 + \g k - \pi I(k) + \int_0^k w(s) ds,\qq k \in \R \label{p1e20},
    \end{align}
    where $\log x \in \R$ and $\arg x = 0$ for any $x \in \R_+$,
    \[ \label{p1e21}
        I(k) = \begin{cases}
            \# \{ z_n \in [0,k),\, n \geq 1 \}, & k \geq 0\\
            \# \{ z_n \in (k,0),\, n \geq 1 \}, & k < 0
        \end{cases}
        ,\qq w(k) = \sum_{n \geq 1} \xi_n \frac{|\Im z_n|}{|z_n - k|^2},\qq
        k \in \R,
    \]
    and the series in (\ref{p1e19}), (\ref{p1e21}) converge uniformly on compact subsets of $\R$.
\end{theorem}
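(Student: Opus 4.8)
The plan is to read off both identities from the Hadamard factorization (\ref{p2e13}), which applies to $a$ and $b$ since $a\in\cA\ss\cE_{Cart}$ and $b\in\cB\ss\cE_{Cart}$ by the remarks in Section~\ref{p1}.

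First I would prove (\ref{p1e19}). By the definition of $\cA$, the function $a$ has no zeros in $\C_+$ and $|a|\ge1$ on $\R$, so every zero of $a$ lies in $\C_-$ and $a(0)\ne0$; hence in (\ref{p2e13}) applied to $a$ the multiplicity of the zero at the origin is $0$ and the leading constant is $a(0)$, giving $a(k)=a(0)e^{i\g k}\lim_{r\to+\iy}\prod_{|k_n|\le r}(1-k/k_n)$ with the product converging uniformly on compact subsets of $\C$. Taking modulus along $\R$, where $|e^{i\g k}|=1$, and then logarithms, yields $\log|a(k)|=\log|a(0)|+\lim_{r\to+\iy}\sum_{|k_n|\le r}\log|1-k/k_n|$; the series converges uniformly on compact subsets of $\R$ because there $a$ is zero-free and the limit $a(k)/(a(0)e^{i\g k})$ is continuous and nonvanishing, so this is exactly (\ref{p1e19}).

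For (\ref{p1e20}) I would apply (\ref{p2e13}) to $b$: with $p$ the multiplicity of the zero of $b$ at the origin and $C=b^{(p)}(0)/p!$ one has $b(k)=C\,k^{p}e^{i\g k}\lim_{r\to+\iy}\prod_{|z_n|\le r}(1-k/z_n)$, and $\arg C=\arg\xi_0$ since $\xi_0=b^{(p)}(0)/|b^{(p)}(0)|$. Fixing $k\in\R$ that is not a zero of $b$ and taking the branch of $\arg b$ that is continuous on $\R$ off the zeros of $b$ and satisfies $\arg b(0^{\pm})=\arg\xi_0$ (the one-sided limits taken after factoring out $k^{p}$), as in \cite{K05}, I would split the linear factors into those with non-real $z_n$ and those with real $z_n$ and write $\arg b(k)=\arg\xi_0+\g k+\sum_{z_n\notin\R}\theta_n(k)+\sum_{x_n\in\R}\arg(1-k/x_n)$, where for non-real $z_n$ the map $\theta_n$ is the continuous determination of $\arg(1-k/z_n)$ with $\theta_n(0)=0$ (well defined, since $\{1-k/z_n:k\in\R\}$ is a line avoiding the origin when $z_n\notin\R$). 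A direct computation gives $\theta_n'(k)=\Im\frac{-1}{z_n-k}=\frac{\Im z_n}{|z_n-k|^{2}}=\xi_n\frac{|\Im z_n|}{|z_n-k|^{2}}$, so $\theta_n(k)=\int_0^k\xi_n\frac{|\Im z_n|}{|z_n-s|^{2}}\,ds$; summing over non-real $z_n$ and interchanging sum and integral — justified because $\sum_n\frac{|\Im z_n|}{|z_n|^{2}}<\iy$ for $b\in\cE_{Cart}$, which together with $|z_n-s|\ge|z_n|/2$ for $|z_n|$ large and $s$ in a compact set dominates the tails — gives $\sum_{z_n\notin\R}\theta_n(k)=\int_0^kw(s)\,ds$, using $\xi_n=0$ for real $z_n$. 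For the real zeros, each factor $1-k/x_n$ is real and changes sign exactly when $k$ passes $x_n$, so only the finitely many $x_n$ strictly between $0$ and $k$ contribute; fixing the $-\pi m$ representative for the (mod $2\pi$ ambiguous) jump of $\arg b$ at a real zero of multiplicity $m$ and absorbing the jump at the origin into the normalization above, their total contribution is $-\pi I(k)$. Adding the terms gives (\ref{p1e20}), and uniform convergence of the series defining $w$ on compact subsets of $\R$ follows once more from $\sum_n\frac{|\Im z_n|}{|z_n|^{2}}<\iy$ and the same elementary bound.

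The genuinely delicate point will be the bookkeeping of the branch of $\arg b$ across the real zeros of $b$, and across the origin when $p\ge1$; everything else (the two elementary tail estimates and the derivative computation for $\theta_n$) is routine, and the branch issue I would handle exactly as in the corresponding argument of \cite{K05}.
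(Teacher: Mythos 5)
Your proposal is correct and follows essentially the same route as the paper: both read (\ref{p1e19}) directly off the Hadamard factorization of $a$ after noting $a(0)\neq 0$, and both obtain (\ref{p1e20}) from the Hadamard factorization of $b$ via the identity $\frac{d}{dk}\arg\left(1-\frac{k}{z_n}\right)=\xi_n\frac{|\Im z_n|}{|z_n-k|^2}$ (the paper packages this as $\Im\frac{b'(s)}{b(s)}=\g+w(s)$ and integrates between real zeros, while you integrate factor by factor — the same computation), together with the $-\pi\times$(multiplicity) jump convention at real zeros and the normalization $\arg b(0)=\arg\xi_0$. The convergence justifications via $\sum_n|\Im z_n|/|z_n|^2<\iy$ match the paper's appeal to (\ref{p7e7}).
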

\begin{remark}
    1) It follows from (\ref{p1e23}) that if $q \in \cP$, then $a(\cdot,p(\cdot,t))$ and
    $b(\cdot,p(\cdot,t))$ are entire functions for any $t \in \R$. Moreover,
    theirs zeros do not move when $t$ changes.

    2) If $q \in \cP$, then $e^{4itz^2}b(z,q)$ is not exponential type as function of $z$ for any $t \neq 0$.
    Thus, by Theorem \ref{t1}, $p(\cdot,t)$ does not have compact support for any $t \neq 0$.
\end{remark}

Now, we describe some properties of resonances and zeros of $b$. Firstly, we describe
a forbidden domain for resonances.
\begin{theorem} \label{t5}
    Let $q \in \cP$ and let $(k_n)_{n \geq 1}$ be its resonances. Let $\ve > 0$. Then there exists
    a constant $C = C(\ve,q) \geq 0$ such that the following inequality holds true for each $n \geq 1$:
    \[ \label{p1e1}
        2 \g \Im k_n \leq \ln \left( \ve + \frac{C}{|k_n|} \right).
    \]
    In particular, for any $A > 0$, there are only finitely many resonances in the strip
    \[ \label{p1e2}
        \{ \, k \in \C \, \mid \, \Im k \in (-A;0) \, \}.
    \]
\end{theorem}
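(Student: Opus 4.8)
The plan is to extract the forbidden-domain estimate from the product representation of $a$ together with the Paley--Wiener bound coming from $a = 1 + \cF h$ with $h \in \cP_\g$. First I would recall that $a \in \cA \ss \cE_{Cart}$, so $a$ has the Hadamard factorization \er{p2e13} with exponent $e^{i\g k}$ (here $\t_+(a)=0$, $\t_-(a)=2\g$), and its zeros in $\C\sm\{0\}$ are exactly the resonances $(k_n)_{n\ge1}\ss\C_-$ (plus possibly a zero at $k=0$ of some order $p\ge0$, which contributes a harmless polynomial factor). From condition iv) in the definition of $\cA$, writing $a(k) = 1 + \int_0^\g h(s) e^{2iks}\,ds$, we get the one-sided exponential bound
\[
    |a(k)| \le 1 + \|h\|_{L^1(0,\g)}\, e^{2\g \Im k}, \qquad \Im k \le 0,
\]
and more precisely, for $\Im k \le 0$ and $|k|$ large, an estimate of the form $|a(k)| \le \ve_0 e^{2\g\Im k} + C_0/|k| \cdot e^{2\g \Im k}$ is \emph{not} quite what one wants directly; instead the correct route is to bound $a$ from below near a resonance in terms of its neighbours and then use that $a(k_n)=0$.

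The main step is the standard "a zero forces the whole function to be small nearby'' argument combined with the fact that an entire function of Cartwright class in the lower half-plane cannot decay faster than $e^{2\g\Im k}$ along most rays but the product over the other zeros is bounded below. Concretely: fix $\ve>0$. Using the refined asymptotics $a(k) = 1 + o(1)$ as $\Re k \to \pm\iy$ along $\R$ (from Riemann--Lebesgue, since $|a|^2-1\in L^1$ and $h\in L^2$), and the representation $a(k)e^{-i\g k} = \ol{a_*(k)}$-type symmetry, one writes
\[
    a(k) = e^{i\g k}\Big(c + \wh{g}(k)\Big)
\]
for a suitable constant $c$ with $|c|\ge$ something, where $\wh{g}$ is (the Fourier transform of) an $L^1$ function supported in $[0,\g]$ reflected appropriately, so that $|a(k)| \ge |c|\,e^{-\g\,|\Im k|} \cdot(1 - o(1))$ is \emph{false} for a resonance; rather the point is that evaluating the Hadamard product at $k_n$ and isolating the factor $(1 - k/k_n)$, which vanishes, is circular. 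So the genuinely correct mechanism is: the function $a(k)$ satisfies $|a(k)| \ge 1$ on $\R$ and $a$ has no zeros in $\C_+$; hence by the argument principle / Blaschke-type reasoning for Cartwright functions, $\log|a|$ is controlled, and one transfers an \emph{upper} bound for $|a|$ on a horizontal line $\Im k = -A$ (namely $|a(k)| \le \ve + C(\ve,q)/|k|$ there, which follows by splitting $a-1 = \cF h$ into low- and high-frequency parts of $h\in\cP_\g$: the tail gives the $C/|k|$ and interior gives $\ve \cdot e^{2\g\Im k}$) into the statement that no zero can lie where this upper bound is violated by the lower bound $|a|\ge 1$ on the real line via a Phragmén--Lindelöf / maximum-principle comparison in the strip $-A<\Im k<0$.

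Assembling it: for a resonance $k_n$ with $\Im k_n \in (-A,0)$, since $a(k_n)=0$, and since $a(k)-1 = \cF h$ with $h$ supported in $[0,\g]$, we have $1 = |a(k_n)-1| = |\int_0^\g h(s)e^{2ik_n s}\,ds| \le e^{2\g\Im k_n}\int_0^\g |h(s)|\,ds$ when $\Im k_n<0$ — but this only gives $2\g\Im k_n \ge -\ln\|h\|_{L^1}$, a \emph{fixed} lower strip, not \er{p1e1}. To get the $\ve$-sharp inequality \er{p1e1} one must instead integrate by parts: split $[0,\g]$ at a point $\g-\d$, bound the contribution of $h$ on $(\g-\d,\g)$ (which is small in $L^1$, giving the $\ve$) and on $(0,\g-\d)$ extract $e^{2ik_n(\g-\d)}$ and estimate the rest by $O(1/|k_n|)$ using that $h\in L^2$ (or, if $h$ is more regular, by genuine integration by parts); this yields $1 \le \ve\, e^{2\g\Im k_n} + (C/|k_n|)\,e^{2\g\Im k_n}$ up to relabelling, i.e.\ $1 \le (\ve + C/|k_n|)e^{2\g\Im k_n}$, which is exactly \er{p1e1} after taking logarithms. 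The finiteness claim \er{p1e2} is then immediate: fix $A>0$, choose $\ve$ with $\ve e^{2\g A}<\tfrac12$; then \er{p1e1} forces $e^{2\g A} \ge e^{2\g\Im k_n}\cdot|k_n|/(2C)$ for all but finitely many $n$ with $\Im k_n>-A$, so $|k_n|$ is bounded, hence (zeros being isolated with no finite accumulation point) there are finitely many such resonances.

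The hard part, I expect, will be step two: making rigorous the extraction of the $O(1/|k_n|)$ term with a constant depending only on $\ve$ and $q$ (not on $n$), uniformly over all resonances in the strip — this is where the slower decay of the second term in the Jost asymptotics (point iii) in the introduction's list of Dirac-vs-Schrödinger differences) bites, since for general $q\in L^2$ one does not have classical integration by parts and must argue via density or via the precise structure of $h$ coming from the Volterra/Jost construction. A cleaner alternative, which I would try first, is to use the already-established membership $a \in \cE_{Cart}$ and Levinson's theorem to control the counting function, combined with a direct Cartwright-class upper bound $|a(k)| \le \ve\, e^{2\g|\Im k|}$ outside a disc (valid for any $\ve>0$ once $|k|$ is large, by the indicator-function characterization of type), plus the sharper gain $C/|k|$ from the $L^1$-Fourier structure; then $a(k_n)=0$ together with a local lower bound for $|a|$ away from its zero set finishes it.
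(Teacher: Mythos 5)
Your assembled argument is essentially the paper's proof: a resonance $k_n\in\C_-$ gives $|\cF h(k_n)|=1$ for $a=1+\cF h$ with $h\in\cP_\g$, and the paper realizes the split you describe by exactly the density route you name as the fallback --- $h=h_1+h_2$ with $h_1\in C_o^{\iy}(0,\g)$ (one integration by parts yields the $(C/|k_n|)\,e^{-2\g\Im k_n}$ term) and $\|h_2\|_{L^2(0,\g)}<\ve\,\g^{-1/2}$ (so $\|h_2\|_{L^1}\le\ve$ yields the $\ve\,e^{-2\g\Im k_n}$ term); note that the interval-splitting variant you state first does not by itself produce the $O(1/|k_n|)$ gain for $h$ merely in $L^2$, so the density step is not optional. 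The only slip is a recurring sign in the exponent: for $\Im k<0$ and $\supp h\ss[0,\g]$ the uniform bound is $|\cF h(k)|\le \|h\|_{L^1}e^{-2\g\Im k}$, not $e^{+2\g\Im k}$, so the inequality to take logarithms of is $1\le(\ve+C/|k_n|)\,e^{-2\g\Im k_n}$, which then gives \er{p1e1} and, as you correctly conclude, the finiteness statement \er{p1e2}.
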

\begin{remark}
    If $q' \in L^1(\R)$, then estimate (\ref{p1e1}) and the forbidden domain (\ref{p1e2})
    can be improved (see Theorem 1.3 in \cite{IK14a}).
\end{remark}
Secondly, we describe some automorphisms of the classes $\cA$ and $\cB$.
We show that the zeros of $b$ are free parameters and prove that $b$ continuously depends on a zero.
\begin{theorem} \label{t4}
    Let $q^o \in \cP$ and let $(z_n^o)_{n \geq 1}$ be zeros of $b(\cdot,q^o)$.
    Let $N = \# \{ z_j^o \mid |z_j^o| < r \}$ for some $r>1$.
    Let $z_j \in \C$, $|z_j|<r$, $j=1,\ldots,N$. Then there exist a unique $q \in \cP$ and
    a unique $p \in \cP$ such~that
    $$
        \begin{aligned}
            b(k,q) &= b(k,q^o)\prod_{j = 1}^N\frac{k-z_j}{k-z_j^o},\qq k \in \C,\\
            b(k,p) &= b(k,q^o)\prod_{j = 1}^N\frac{1}{k-z_j^o},\qq k \in \C.
        \end{aligned}
    $$
    In particular, any point on $\C$ can be a zero of $b$ with any
    multiplicity for some $q \in \cP$. Moreover, if each $z_j \to z_j^o$,
    $j=1,\ldots,N$, then we have
    $$
        \rho_{\cP}(q^o, q) \to 0,\qq
        \rho_{\cB}(b(\cdot,q^o), b(\cdot,q)) \to 0,\qq
        \rho_{\cR}(r_{\pm}(\cdot,q^o), r_{\pm}(\cdot,q)) \to 0.
    $$
\end{theorem}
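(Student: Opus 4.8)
The whole statement will be reduced to Theorem~\ref{t1} (the map $q\mapsto b(\cdot,q)$ is a homeomorphism $\cP\to\cB$) and, for the $r_\pm$–part, to Theorem~\ref{t11}. Fix $z_1,\dots,z_N\in\C$ with $|z_j|<r$ and set
$$
    \wt b(k)=b(k,q^o)\prod_{j=1}^N\frac{k-z_j}{k-z_j^o},\qquad
    \wh b(k)=b(k,q^o)\prod_{j=1}^N\frac1{k-z_j^o}.
$$
The plan is to first prove that \emph{both $\wt b$ and $\wh b$ belong to $\cB$}; granting this, Theorem~\ref{t1} instantly produces unique $q,p\in\cP$ with $b(\cdot,q)=\wt b$ and $b(\cdot,p)=\wh b$, which is exactly the existence–uniqueness assertion. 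The claim that any point of $\C$ can be a zero of $b$ with any multiplicity then follows by choosing any $q^o\in\cP$ — so that, by Levinson's theorem, $b(\cdot,q^o)$ has infinitely many zeros — taking $r>\max\{1,|w|\}$ so large that $b(\cdot,q^o)$ has at least $m$ zeros in $\{|k|<r\}$, and putting $z_1=\dots=z_m=w$ with $z_{m+1},\dots,z_N\in\{|k|<r\}\setminus\{w\}$; the resulting $q$ then has a zero of multiplicity exactly $m$ at $w$.

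The heart of the argument is thus $\wt b,\wh b\in\cB$, i.e.\ $\cF^{-1}\wt b,\cF^{-1}\wh b\in\cP_\g$. Entireness is immediate: by the definition of $N$, the numbers $z_1^o,\dots,z_N^o$ are precisely the zeros of $b(\cdot,q^o)$ in $\{|k|<r\}$ counted with multiplicity, so $b(\cdot,q^o)/\prod_j(k-z_j^o)$ is entire, and multiplying by the polynomial $\prod_j(k-z_j)$ keeps it so. For the $L^2$–membership I would use that $b(\cdot,q^o)\in L^2(\R)\cap L^\infty(\R)$ — here $b\in L^2(\R)$ follows from $\cF^{-1}b\in L^2$ by Parseval, while the bound $|b|\le|a|$ on $\R$ (from $aa_*-bb_*=1$) together with $a=1+\cF h$, $h\in L^1(0,\g)$, gives the $L^\infty$ bound — and that the rational factors are bounded near $\infty$ and (since $\wt b,\wh b$ are entire) create no trouble at real $z_j^o$; hence $\wt b,\wh b\in L^2(\R)$, and by the Parseval identity for $\cF$ also $\cF^{-1}\wt b,\cF^{-1}\wh b\in L^2(\R)$. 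Finally, a rational function with value $1$ at infinity has indicator identically $0$, so by additivity of indicators — valid since the Cartwright-class function $b(\cdot,q^o)$ is of completely regular growth — one gets $\t_+(\wt b)=\t_+(\wh b)=0$ and $\t_-(\wt b)=\t_-(\wh b)=2\g$; the Paley--Wiener theorem then yields $\supp\cF^{-1}\wt b,\supp\cF^{-1}\wh b\subseteq[0,\g]$, while the \emph{exact} values of $\t_\pm$ force the convex hull of each support to equal $[0,\g]$. So $\cF^{-1}\wt b,\cF^{-1}\wh b\in\cP_\g$. This indicator/$L^2$ bookkeeping — the only point where the precise definition of $\cB$ really enters — is the step I expect to require the most care.

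For continuity I would argue in $\cB$ first and then transport through the homeomorphisms. Put $\Pi_z(k)=\prod_{j=1}^N\frac{k-z_j}{k-z_j^o}$. By Parseval,
$$
    \rho_\cB\big(b(\cdot,q),b(\cdot,q^o)\big)^2
    =\big\|\cF^{-1}\big(b(\cdot,q^o)(\Pi_z-1)\big)\big\|_{L^2(\R)}^2
    =\tfrac1\pi\big\|\,b(\cdot,q^o)(\Pi_z-1)\,\big\|_{L^2(\R)}^2 .
$$
As $z_j\to z_j^o$ the entire function $b(\cdot,q^o)(\Pi_z-1)=b(\cdot,q^o)\Pi_z-b(\cdot,q^o)$ tends to $0$ uniformly on compact subsets of $\R$ (no issue at real $z_j^o$, the product being entire), while on $\{|k|>R\}$ one has $|\Pi_z(k)-1|\le C/|k|$ uniformly for $z$ near $z^o$, so there $|b(k,q^o)(\Pi_z(k)-1)|^2\le C\,\|b(\cdot,q^o)\|_\infty^2/|k|^2$, with tail integral $O(1/R)$ uniformly in $z$. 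Splitting the integral at $|k|=R$, letting $z\to z^o$ and then $R\to\infty$ gives $\rho_\cB(b(\cdot,q),b(\cdot,q^o))\to0$. Since $b\mapsto q$ is continuous by Theorem~\ref{t1} and $q\mapsto r_\pm$ is continuous by Theorem~\ref{t11}, the remaining convergences $\rho_\cP(q^o,q)\to0$ and $\rho_\cR(r_\pm(\cdot,q^o),r_\pm(\cdot,q))\to0$ follow immediately.
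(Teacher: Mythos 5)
Your proposal is correct and follows essentially the same route as the paper: show the modified functions lie in $\cB$ (entire, $L^2$ on $\R$, types $\t_+=0$ and $\t_-=2\g$, then Paley--Wiener), invoke Theorem \ref{t1} for existence and uniqueness, and prove continuity by estimating $\|b(\cdot,q^o)(\Pi_z-1)\|_{L^2(\R)}$ (the paper treats $N=1$ "for simplicity" and uses the identity $Bb-b=(z_1-z_0)Rb$ in place of your split-the-integral tail bound). One cosmetic point: the factor $\prod_j(k-z_j^o)^{-1}$ in $\wh b$ tends to $0$, not $1$, at infinity, but its indicator is still identically zero since $\log(|k|^{-N})=o(|k|)$, so your type computation goes through unchanged.
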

\begin{remark}
    For Schr{\"o}dinger operator, similar results are obtained in \cite{K05}.
\end{remark}
Now, we consider the resonances. In general, the resonances of Dirac operators on
the real line are not free parameters. However, we prove that
the resonances can be shifted in some domain when they are symmetric with respect
to the imaginary line. We introduce the following subspace
$$
    \cA_{symm} = \{\, a \in \cA \, \mid \, a(k) = a_*(-k),\, k \in \C \,\}.
$$
\begin{theorem} \label{t8}
    Let $a(\cdot,q^o) \in \cA_{symm}$ for some $q^o \in \cP$ and let $a(k_o,q^o) = 0$
    for some $k_o \in \C_-$.
    Let $k_1 \in \C_-$ be such that the following inequalities hold true:
    \[ \label{p6e10}
        |k_1| \geq |k_o|,\qq \Re k_1^2 \leq \Re k_o^2.
    \]
    Then there exists $q \in \cP$ such that $a(\cdot,q) \in \cA_{symm}$ and
    $$
        a(k,q) = \frac{(k - k_1)(k + \ol k_1)}{(k - k_o)(k + \ol k_o)} a(k,q^o),\qq k \in \C.
    $$
\end{theorem}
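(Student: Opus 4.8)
Set
$$
  m(k)=\frac{(k-k_1)(k+\ol k_1)}{(k-k_o)(k+\ol k_o)},\qq \tilde a(k)=m(k)\,a(k,q^o),\qq k\in\C .
$$
The plan is to prove that $\tilde a\in\cA_{symm}$. Granting this, the Remark after Theorem \ref{t2} (via Lemma \ref{hll6}) gives $\Xi(\tilde a\tilde a_*-1)\neq\es$, so Theorem \ref{t2} yields a (unique) $q\in\cP$ with $a(\cdot,q)=\tilde a$; then $a(\cdot,q)\in\cA_{symm}$ and the asserted identity for $a(k,q)$ holds because $m$ is precisely the prescribed rational factor.

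First I would check entireness and the symmetry. From $a(\cdot,q^o)\in\cA_{symm}$ we get $a(k,q^o)=a_*(-k,q^o)$, so $a(k_o,q^o)=0$ forces $a(-\ol k_o,q^o)=0$; since $k_o,-\ol k_o\in\C_-$, the pole(s) of $m$ at $k_o$ and $-\ol k_o$ are cancelled by zeros of $a(\cdot,q^o)$, so $\tilde a$ is entire. A short computation gives $m_*(-k)=m(k)$, whence $\tilde a_*(-k)=m_*(-k)a_*(-k,q^o)=m(k)a(k,q^o)=\tilde a(k)$, i.e.\ $\tilde a\in\cA_{symm}$. It remains to verify the four properties defining $\cA$. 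Property (i) is immediate: neither $a(\cdot,q^o)$ nor $m$ vanishes on $\C_+$. Property (ii) is the crux, and the only place where the hypotheses \er{p6e10} enter: since $|a(k,q^o)|\ge1$ on $\R$ it suffices that $|m(k)|\ge1$ for $k\in\R$, and for real $k$,
$$
  |m(k)|^2=\frac{u^2-2u\,\Re(k_1^2)+|k_1|^4}{u^2-2u\,\Re(k_o^2)+|k_o|^4},\qq u=k^2\ge0,
$$
with strictly positive denominator, so $|m(k)|^2\ge1$ reduces to $2u\bigl(\Re(k_o^2)-\Re(k_1^2)\bigr)+|k_1|^4-|k_o|^4\ge0$ for all $u\ge0$; its value at $u=0$ is nonnegative by $|k_1|\ge|k_o|$ and its slope in $u$ is nonnegative by $\Re k_1^2\le\Re k_o^2$, so this affine function of $u$ stays nonnegative.

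For (iii) and (iv) I would invoke the Paley--Wiener Theorem. On $\R$ the function $m-1$ is continuous (its poles lie in $\C_-$), bounded, and $O(|k|^{-1})$ at infinity, while $a(\cdot,q^o)$ is bounded and $a(\cdot,q^o)-1\in L^2(\R)$; hence $\tilde a-1=(m-1)\,a(\cdot,q^o)+\bigl(a(\cdot,q^o)-1\bigr)\in L^2(\R)$, and since $|m|^2$ is bounded on $\R$ with $|m|^2-1=O(|k|^{-2})$, also $|\tilde a|^2-1=|m|^2\bigl(|a(\cdot,q^o)|^2-1\bigr)+(|m|^2-1)\in L^1(\R)$, which is (iii). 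Moreover $\tilde a=m\,a(\cdot,q^o)$ is entire of exponential type (as $m$ is rational and bounded at infinity), with $\t_-(\tilde a-1)=\t_-(a(\cdot,q^o))=2\g$ and finite logarithmic integral, so $\tilde a\in\cE_{Cart}$; Paley--Wiener then gives $\tilde a-1=\cF h$ with $h\in L^2(\R)$ and $\supp h\ss[0,\g]$. The step I expect to cost the most care is upgrading this to $h\in\cP_\g$, i.e.\ that the convex hull of $\supp h$ is \emph{exactly} $[0,\g]$ and not a proper subinterval: the right endpoint $\g$ is pinned down by $\t_-(\tilde a-1)=2\g$, and the left endpoint $0$ amounts to $\t_+(\tilde a-1)=0$, which I would read off the expansion $\tilde a(iy)-1=(m(iy)-1)a(iy,q^o)+(a(iy,q^o)-1)$, decaying only like $|y|^{-1}$ (polynomially, not exponentially) as $y\to+\iy$. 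Once $\tilde a\in\cA_{symm}$ is established, Theorem \ref{t2} completes the proof.
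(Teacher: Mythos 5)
Your proposal is correct and follows essentially the same route as the paper: you multiply $a(\cdot,q^o)$ by the same rational factor, verify the symmetry $m_*(-k)=m(k)$, reduce $|m|\ge 1$ on $\R$ to the nonnegativity of an affine function of $u=k^2$ whose value at $0$ and slope are controlled by the two hypotheses in \er{p6e10} (the paper's $G(E)\le 0$ computation in disguise), establish the $L^1$ and $L^2$ conditions from the same decompositions $|\tilde a|^2-1=|m|^2(|a_o|^2-1)+(|m|^2-1)$ and $\tilde a-1=(m-1)a_o+(a_o-1)$, apply Paley--Wiener, and invoke Theorem \ref{t2} via $\Xi(\tilde a\tilde a_*-1)\neq\es$. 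Your extra care about the endpoints of $\supp h$ is a welcome refinement of a step the paper passes over quickly, but the argument is the same.
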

\begin{remark}
    Let $k_o = u_o + i v_o$ and $k_1 = u_1 + i v_1$. Then inequalities (\ref{p6e10})
    have the following form (see also Fig.~\ref{fig1}):
    \[ \label{p6e12}
        u_1^2 + v_1^2 \geq u_o^2 + v_o^2,\qq u_1^2 - v_1^2 \leq u_o^2 - v_o^2.
    \]
    Note that in any neighborhood of $k_o$ there exist $k_1$
    such (\ref{p6e12}) holds true and $k_2$ such that (\ref{p6e12}) does not hold true.
\end{remark}
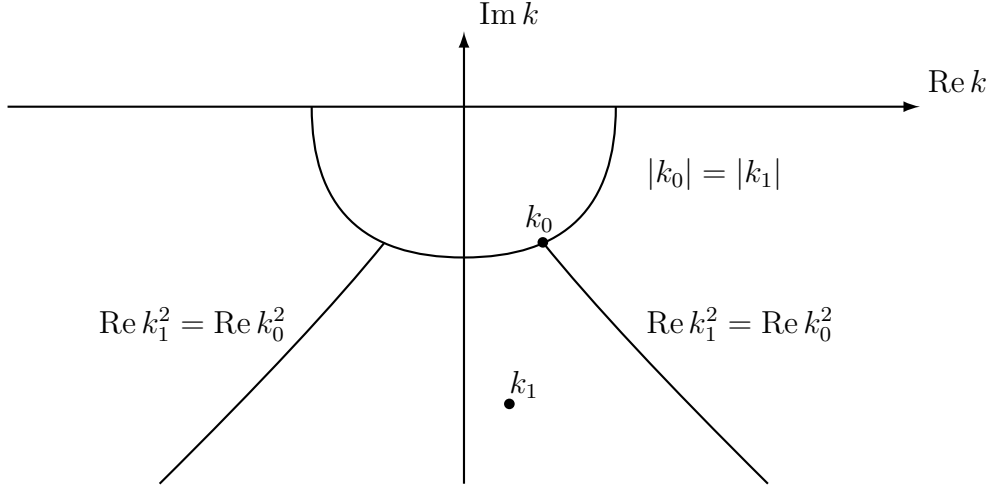
\begin{figure}[h!]
    \begin{center}
        \setlength{\unitlength}{2cm}
        \begin{picture}(6,3)
            \thicklines
            \put(0,2.5){\vector(1,0){6}} 
            \put(3,0){\vector(0,1){3}} 

            \qbezier(2,2.5)(2,1.5)(3,1.5) 
            \qbezier(3,1.5)(4,1.5)(4,2.5) 

            \put(3.52,1.6){\circle*{0.07}} 

            \qbezier(3.52,1.6)(4,1)(5,0) 

            \qbezier(2.48,1.6)(2,1)(1,0) 

            \put(3.4,1.7){$k_0$}

            \put(4.2,2){$|k_0| = |k_1|$}

            \put(4.2,1){$\Re k_1^2 = \Re k_0^2$}

            \put(0.6,1){$\Re k_1^2 = \Re k_0^2$}

            \put(6.05, 2.6){$\Re k$}

            \put(3.1,3.05){$\Im k$}

            \put(3.3,0.6){$k_1$}
            \put(3.3,0.53){\circle*{0.07}}
        \end{picture}
        \label{fig1}
        \caption{Example of $k_1$ and $k_0$, which satisfy (\ref{p6e10}).}
    \end{center}
\end{figure}

\section{Preliminary} \label{p2}

\subsection{Notations}
In this section, we recall results about inverse scattering problem
for the Dirac operator on the real line. We introduce the following
Banach spaces
$$
    \begin{aligned}
        \cL_{\pm} &= L^2(\R_{\pm}) \cap L^1(\R_{\pm}),\qq
        \| \cdot \|_{\cL_{\pm}} = \| \cdot \|_{L^2(\R_{\pm})} + \| \cdot \|_{L^1(\R_{\pm})},\\
        \cL &= L^2(\R) \cap L^1(\R),\qq
        \| \cdot \|_{\cL} = \| \cdot \|_{L^2(\R)} + \| \cdot \|_{L^1(\R)}.
    \end{aligned}
$$
We also define the following Banach algebras with pointwise multiplication
$$
    \begin{aligned}
        \hat\cL_{\pm} &= \{ \,\cF g \, \mid \, g \in \cL_{\pm} \,\},\qq
        \| \cF g \|_{{\hat \cL}_{\pm}} = \| g \|_{\cL_{\pm}},\\
        \hat \cL &= \{ \,\cF g \, \mid \, g \in \cL \,\},\qq
        \| \cF g \|_{{\hat \cL}} = \| g \|_{\cL},\\
        {\cW}_{\pm} &= \{ \, c + g \, \mid \, (c,g) \in \C \ts \hat \cL_{\pm} \,\},\qq
        \| c + g \|_{{\cW}_{\pm}} = |c| + \| g \|_{\hat \cL_{\pm}},\\
        {\cW} &= \{ \, c + g \, \mid \, (c,g) \in \C \ts \hat \cL \,\},\qq
        \| c + g \|_{{\cW}} = |c| + \| g \|_{\hat \cL},
    \end{aligned}
$$
It is well-known that ${\cW}_+$ and ${\cW}$ are unital Banach
algebras (see e.g. Chapter 17 in \cite{GRS64}). We denote by
$\cM_2(\C)$ the space of $2 \times 2$ matrices with
complex entries.

\subsection{Jost solutions}

We consider Dirac operator $H y = -i \s_3 y' + i \s_3 Q y$ on $L^2(\R,\C^2)$,
where the potential $Q$ has form (\ref{intro:potential}) and $q \in \cL$.
For $q \in \cL$ and $k \in \R$, we also introduce the Jost solutions $f^{\pm}(x,k)$
of Dirac equation
\[ \label{p3e2}
    (f^{\pm})'(x,k) = Q(x) f^{\pm}(x,k) + i k \s_3 f^{\pm}(x,k),\qq x \in \R,
\]
satisfying asymptotic conditions:
\[ \label{p3e1}
    f^{\pm}(x,k) = e^{i k x \s_3} \left( 1 + o(1) \right)\qq \text{as $x \to \pm \iy$}.
\]
For each $q \in \cL$ and $k \in \R$ there exist Jost solitons and they have
integral representations in terms of the transformation operators.
We recall these known results, see p.39 in  \cite{FT07} and  Proposition 3.5 in \cite{FHMP09}.

\begin{lemma} \label{hll1}
    Let $q \in \cL$. Then there exist functions $\G^{\pm}:\R \ts \R_{\pm} \to \cM_2(\C)$ such that
    \[ \label{p3e16}
        \G^{\pm} = \ma \G^{\pm}_{11} & \G^{\pm}_{12} \\ \G^{\pm}_{21} & \G^{\pm}_{22} \am,\qq
        \G^{\pm}_{11} = \ol{\G^{\pm}_{22}},\qq \G^{\pm}_{21} = \ol{\G^{\pm}_{12}},
    \]
    and
    \[ \label{hle2}
        f^{\pm}(x,k) = e^{i x k \s_3} \pm \int_0^{\pm \iy} \G^{\pm}(x,s) e^{i (2s+x) k \s_3} ds,
        \qq (x,k) \in \R^2.
    \]
    Moreover, for each $n,m = 1,2$, the following statements hold true:
    \begin{enumerate}[label={\roman*)}]
        \item For each fixed $q \in \cL$, the mappings $x \mapsto \G^{\pm}_{nm}(x, \cdot,q)$ from $\R$ into $\cL_{\pm}$
        are continuous and, for any $x \in \R$, they satisfy:
        \[ \label{hle3}
            \| \G^{\pm}_{nm}(x,\cdot,q) \|_{\cL_{\pm}} \leq e^{\eta^{\pm}(x)}(1 + \nu^{\pm}(x)) - 1,
        \]
        where
        $$
            \eta^{\pm}(x) = \left| \int_x^{\pm \iy} |q(s)| ds \right|,\qq
            \nu^{\pm}(x) = \left| \int_x^{ \pm \iy} |q(s)|^2 ds \right|^{1/2};
        $$
        \item For each fixed $x \in \R$, the mappings
        $q \mapsto \G^{\pm}_{nm}(x, \cdot, q)$ from $\cL$ into
        $\cL_{\pm}$ are continuous;
        \item For each fixed $q \in \cL$, the mappings
        $s \mapsto \G^{\pm}_{12}(\cdot, s, q)$ from $\R_{\pm}$ into $\cL$ are continuous
        and satisfy:
        \[ \label{p3e15}
            q(x) = \G^{-}_{12}(x,0,q) = - \G^{+}_{12}(x,0,q),\qq x \in \R.
        \]
    \end{enumerate}
\end{lemma}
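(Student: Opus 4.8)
The plan is to reduce the boundary-value problem \er{p3e2}--\er{p3e1} to a Volterra integral equation, insert the transformation-operator ansatz \er{hle2}, solve the resulting Goursat-type system by successive approximations, and then read the symmetry, continuity and reconstruction statements off the iteration itself. This is the classical transformation-operator scheme for Jost solutions; in detail I would follow \cite{FT07, FHMP09}.

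First I would turn equation \er{p3e2} together with the normalization at $+\iy$ into an integral equation. Since $\s_3$ commutes with $e^{ixk\s_3}$, the substitution $g = e^{-ixk\s_3}f^{+}$ gives
$$
    f^{+}(x,k) = e^{ixk\s_3} - \int_x^{\iy} e^{i(x-t)k\s_3}\,Q(t)\,f^{+}(t,k)\,dt,
$$
and symmetrically for $f^{-}$ with $\int_{-\iy}^{x}$ in place of $\int_x^{\iy}$. Writing \er{hle2} in the equivalent form $f^{+}(x,k) = e^{ixk\s_3} + \int_x^{\iy} K^{+}(x,t)\,e^{itk\s_3}\,dt$, where $\G^{+}(x,s) = 2K^{+}(x,x+2s)$, substituting it into this integral equation, using that $Q$ is off-diagonal so that $e^{ck\s_3}Q = Qe^{-ck\s_3}$ for every scalar $c$, interchanging the order of integration, and matching the ``coefficients'' of the functions $t\mapsto e^{itk\s_3}$ (injectivity of the Fourier transform), one arrives at a Goursat-type Volterra system
$$
    \G^{+} = \G^{+}_0 + \cK\,\G^{+},
$$
in which the free term is $\G^{+}_0(x,s) = -Q(x+s)$ and $\cK$ is an integral operator whose kernel is built from $Q$ and which couples the diagonal and off-diagonal parts of the matrix kernel; an analogous system holds for $\G^{-}$.

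Next I would solve this system by iteration: $\G^{+}_{n+1} = \cK\,\G^{+}_n$ and $\G^{+} = \sum_{n\ge0}\G^{+}_n$. Using Young's convolution inequality $\|h_1 * h_2\|_{L^p}\le\|h_1\|_{L^1}\|h_2\|_{L^p}$ (one factor of $Q$ enters the kernel of $\cK$ at each stage) one estimates by induction the $\cL_{\pm}$-norm of $\G^{+}_n(x,\cdot)$; summation then gives convergence and the explicit bound \er{hle3}, and it also establishes continuity of $x\mapsto\G^{\pm}_{nm}(x,\cdot)$ from $\R$ into $\cL_{\pm}$, inherited from the partial sums which converge uniformly on compact sets of $x$. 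This last step is where I expect the real work: the bound one reads off the iteration naively is of order $\|\G^{+}_0(x,\cdot)\|_{\cL_{\pm}}\,e^{\eta^{\pm}(x)}$, which is strictly weaker than \er{hle3}, and to reach the sharp form $e^{\eta^{\pm}(x)}(1+\nu^{\pm}(x))-1$ one must follow the combinatorics of the iteration carefully, separating the $L^1$-norm of $q$, which exponentiates to $e^{\eta^{\pm}}$, from the $L^2$-norm, which enters only through the linear factor $1+\nu^{\pm}$; it is precisely this that singles out $\cL = L^1\cap L^2$ as the natural class of potentials.

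Finally the remaining items. The symmetry relations \er{p3e16} follow from $\ol{Q} = JQJ$, where $J = \bigl(\begin{smallmatrix}0&1\\1&0\end{smallmatrix}\bigr)$: both $\G^{+}_0$ and $\cK$ preserve the set of matrix kernels $M$ with $M_{11}=\ol{M_{22}}$ and $M_{21}=\ol{M_{12}}$, hence so does the whole series; equivalently, $J\,\ol{f^{+}(x,\ol{k})}\,J$ solves the same equation with the same normalization and so coincides with $f^{+}(x,k)$. Continuity in $q$ is again inherited from the iteration, since $q\mapsto\G^{+}_0(x,\cdot)$ is continuous from $\cL$ into $\cL_{\pm}$, $\cK$ depends continuously on $q$, and the series converges uniformly on $\cL$-bounded sets. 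For the reconstruction identity \er{p3e15} I would substitute \er{hle2} into \er{p3e2}, integrate by parts in the kernel integral, and isolate the term proportional to $e^{ixk\s_3}$; comparing coefficients yields $q(x) = -\G^{+}_{12}(x,0) = \G^{-}_{12}(x,0)$, while the continuity of $s\mapsto\G^{\pm}_{12}(\cdot,s)$ from $\R_{\pm}$ into $\cL$ is read off from the integral equation, the free term enjoying it and $\cK$ preserving it.
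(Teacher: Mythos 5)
The paper does not prove this lemma at all: it is recalled as a known result, with the proof deferred to p.~39 of \cite{FT07} and Proposition 3.5 of \cite{FHMP09}. Your sketch is precisely the classical transformation-operator argument used in those references (Volterra integral equation for $f^{\pm}$, Goursat system for the kernel with free term $-Q(x+s)$, successive approximations with the $L^1$/$L^2$ norms of $q$ tracked separately to get \er{hle3}, and the symmetry $f^{+}=J\,\ol{f^{+}}\,J$ for \er{p3e16}), so it is consistent with the paper's intended justification; the only part you explicitly leave unfinished is the combinatorial bookkeeping behind the sharp bound \er{hle3}, which you correctly identify as the substantive step.
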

This lemma gives that there exist matrix-valued Jost solutions only for $k \in \R$. However,
we can construct the vector-value Jost solutions for any $k \in \C_+$ or $k \in \C_-$.
It follows from the Paley-Wiener Theorem and representation (\ref{hle2}). Thus, we give the
following lemma (see e.g. Proposition 3.7 in \cite{FHMP09}).
\begin{lemma} \label{p3l8}
    Let $q \in \cL$. Then, for any fixed $x \in \R$, the functions $f^{\pm}_{11}(x,\cdot,q)$ and
    $f^{\pm}_{21}(x,\cdot,q)$ admit analytical continuation from $\R$ onto $\C_{\pm}$ and
    the functions $f^{\pm}_{12}(x,\cdot,q)$ and
    $f^{\pm}_{22}(x,\cdot,q)$ admit analytical continuation from $\R$ onto $\C_{\mp}$.
\end{lemma}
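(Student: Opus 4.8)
\emph{Proof proposal.} The plan is to obtain the claim directly from the transformation-operator representation (\ref{hle2}) of Lemma \ref{hll1}. Written out entry by entry, that representation exhibits each $f^{\pm}_{nm}(x,\cdot)$ as a plane wave plus a Fourier--Laplace type integral of the half-line function $\G^{\pm}_{nm}(x,\cdot)$, and such an integral is holomorphic precisely in the half-plane where its exponential kernel decays. Concretely, since $e^{i(2s+x)k\s_3}=\diag\left(e^{i(2s+x)k},e^{-i(2s+x)k}\right)$, right-multiplication of $\G^{\pm}(x,s)$ by this diagonal matrix scales its first column by $e^{i(2s+x)k}$ and its second by $e^{-i(2s+x)k}$, so for each fixed $x\in\R$ we get
\begin{align*}
    f^{\pm}_{11}(x,k)&=e^{ixk}\pm\int_0^{\pm\iy}\G^{\pm}_{11}(x,s)\,e^{i(2s+x)k}\,ds, &
    f^{\pm}_{21}(x,k)&=\pm\int_0^{\pm\iy}\G^{\pm}_{21}(x,s)\,e^{i(2s+x)k}\,ds,\\
    f^{\pm}_{12}(x,k)&=\pm\int_0^{\pm\iy}\G^{\pm}_{12}(x,s)\,e^{-i(2s+x)k}\,ds, &
    f^{\pm}_{22}(x,k)&=e^{-ixk}\pm\int_0^{\pm\iy}\G^{\pm}_{22}(x,s)\,e^{-i(2s+x)k}\,ds.
\end{align*}

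Next I would read off the relevant half-plane for each entry. In the upper-sign case the integration variable runs over $\R_+$, so $2s+x\to+\iy$; then $\left|e^{i(2s+x)k}\right|=e^{-(2s+x)\Im k}$ decays as $s\to+\iy$ exactly when $\Im k>0$, whereas $\left|e^{-i(2s+x)k}\right|=e^{(2s+x)\Im k}$ decays exactly when $\Im k<0$. Hence $f^+_{11},f^+_{21}$ are given by absolutely convergent integrals on $\C_+$ and $f^+_{12},f^+_{22}$ on $\C_-$. In the lower-sign case the variable runs over $\R_-$, so $2s+x\to-\iy$, the signs in these estimates reverse, and one gets $f^-_{11},f^-_{21}$ on $\C_-$ and $f^-_{12},f^-_{22}$ on $\C_+$ --- which is exactly the assertion.

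Finally, to turn absolute convergence into analyticity I would run the standard argument: on any compact subset $K$ of the relevant open half-plane the exponential kernel is bounded by a constant $C_{x,K}$ uniformly in $k\in K$ and in the integration variable, and $\G^{\pm}_{nm}(x,\cdot)\in\cL_{\pm}\ss L^1(\R_{\pm})$ by Lemma \ref{hll1}, so the integrands are dominated by the fixed $L^1$ majorant $C_{x,K}\left|\G^{\pm}_{nm}(x,\cdot)\right|$. Thus the truncated integrals over $[0,\pm R]$ --- each holomorphic in $k$ by Fubini and Morera's theorem --- converge uniformly on $K$ as $R\to+\iy$, so their limit is holomorphic on that half-plane, and it agrees with $f^{\pm}_{nm}(x,\cdot)$ on $\R$ by (\ref{hle2}). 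There is no serious obstacle here: the lemma is a corollary of (\ref{hle2}) together with the bound (\ref{hle3}), and the only point requiring care is the bookkeeping of signs and the routine justification of holomorphy under the integral sign via the $L^1$-domination --- equivalently, this is the Paley--Wiener theorem applied to the half-line functions $\G^{\pm}_{nm}(x,\cdot)$.
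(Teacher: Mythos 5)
Your argument is correct and is exactly the route the paper indicates: it states that the lemma ``follows from the Paley--Wiener Theorem and representation (\ref{hle2})'' (citing Proposition 3.7 of \cite{FHMP09}) without writing out the details, and your entry-by-entry sign bookkeeping plus the $L^1$-domination/Morera step is precisely the fleshed-out version of that, with all half-planes matching the statement. No gaps.
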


\subsection{Transition matrix}
Since equation (\ref{p3e2}) has exactly one linear independent matrix-valued solution, it follows
that for any $k \in \R$ there exists a unique $2 \times 2$ \textit{transition matrix}
$A(k) \in \cM_2(\C)$ such that
\[ \label{p3e17}
    f^{+}(x,k) = f^{-}(x,k) A(k),\qq x \in \R.
\]
Using Lemma \ref{hll1}, we obtain known representation of the matrix $A(k)$. In order to formulate
this result, we introduce the following classes.

\begin{definition*}
    $\cB_{\bu} = \hat \cL$ is a metric space equipped with the metric
    $$
        \rho_{\cB_{\bu}}(b_1,b_2) = \|\hat b_1 - \hat b_2\|_{\cL},\qq b_1,b_2 \in \cB_{\bu}.
    $$
\end{definition*}

\begin{definition*}
    $\cA_{\bu}$ is a metric space equipped with the metric $\rho_{\cA_{\bu}}$,
    where $\cA_{\bu}$ is the set of all analytic on $\C_+$ and continuous on $\ol{\C_+}$
    functions $a$ such that:
    \begin{enumerate}[label={\roman*)}]
        \item $a(k) \neq 0$ for any $k \in \C_+$;
        \item $|a(k)| \geq 1$ for any $k \in \R$;
        \item $|a|^2 - 1 \in L^1(\R)$;
        \item $a - 1 \in \hat \cL_+$;
    \end{enumerate}
    and the metric $\rho_{\cA_{\bu}}$ is given by
    $$
        \rho_{\cA_{\bu}}(a_1,a_2) = \|\cF^{-1}(a_1-a_2)\|_{\cL_+},\qq a_1,a_2 \in \cA_{\bu}.
    $$
\end{definition*}
\begin{remark}
    Note that $\cB_{\bu} = \hat \cL \ss \cW$ and $\cA_{\bu} \ss \cW_+ \ss \cW$ isometrically.
\end{remark}

Thus, we formulate the following known lemma (see e.g. \cite{FHMP09}).

\begin{lemma} \label{hll4}
    Let $q \in \cL$ and let $A$ be given by (\ref{p3e17}). Then $A$ has the following form:
    \[ \label{p3e25}
        A(k) = \ma a & \ol b \\ b & \ol a \am(k),\qq |a(k)|^2 - |b(k)|^2 = 1,\qq k \in \R,\qq (a,b) \in \cA_{\bu} \ts \cB_{\bu}
    \]
    and the mappings $q \mapsto a(\cdot,q)$ from $\cL$ into $\cA_{\bu}$ and $q \mapsto b(\cdot,q)$ from
    $\cL$ into $\cB_{\bu}$ are continuous.

    Moreover, let $a = 1 + \cF h$ for some $h \in \cL_+$.
    Then the following representations hold true:
    \[ \label{p3e23}
        \begin{aligned}
            \hat b(s) &= \G^{-}_{22}(0,-s) + \G^{+}_{11}(0,s) + \G^{-}_{22}(0,-s) * \G^{+}_{11}(0,s) -
            \G^{-}_{12}(0,-s) * \G^{+}_{21}(0,s),\, s \in \R_+, \\
            h(s) &= \G^{+}_{21}(0,s) - \G^{-}_{21}(0,s) + \G^{-}_{11}(0,s) * \G^{+}_{21}(0,s) -
            \G^{-}_{21}(0,s) * \G^{+}_{11}(0,s),\, s \in \R.
        \end{aligned}
    \]
\end{lemma}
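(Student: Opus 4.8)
The plan is to recover the transition matrix $A$ from the Jost solutions evaluated at the single point $x=0$, and then to read off every assertion from the integral representation of Lemma~\ref{hll1}.

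\emph{Algebraic form of $A$.} First I would note that $\tr(Q(x)+ik\s_3)=0$, so $\det f^{\pm}(x,k)$ is independent of $x$; letting $x\to\pm\iy$ in \er{p3e1} and using $\det e^{ikx\s_3}=1$ gives $\det f^{\pm}(x,k)\equiv1$ for $k\in\R$. Hence $f^{-}(0,k)$ is invertible and \er{p3e17} at $x=0$ reads $A(k)=f^{-}(0,k)^{-1}f^{+}(0,k)$, so $\det A(k)=1$. The symmetry \er{p3e16} of the kernels $\G^{\pm}$, fed through \er{hle2}, yields $f^{\pm}_{22}(x,k)=\ol{f^{\pm}_{11}(x,\ol k)}$ and $f^{\pm}_{12}(x,k)=\ol{f^{\pm}_{21}(x,\ol k)}$, and this symmetry is preserved under the product $f^{-}(0,k)^{-1}f^{+}(0,k)$; so, with $a:=A_{11}$ and $b:=A_{21}$, we get $A_{22}=a_*$, $A_{12}=b_*$, i.e.\ $A$ has the form \er{p3e25} on $\R$, where moreover $|a|^2-|b|^2=\det A=1$.

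\emph{Explicit formulas.} Next I would put $x=0$ in \er{hle2}, so that $f^{\pm}(0,k)=I\pm\int_0^{\pm\iy}\G^{\pm}(0,s)e^{2isk\s_3}\,ds$, and read off the four scalar entries, substituting $s\mapsto-s$ in the $\G^{-}$ integrals so the integration variable runs over $\R_+$. Each entry of $f^{\pm}(0,\cdot)$ then has the shape $1+\cF(\cdot)$ on the diagonal and $\cF(\cdot)$ off it, the arguments of $\cF$ lying in $\cL_+$ (for the $f^{+}$ entries and for the reversed $f^{-}_{12},f^{-}_{22}$) or in $\cL_-$ (for $f^{-}_{11},f^{-}_{21}$), which is legitimate by the $L^1\cap L^2$ bounds \er{hle3}. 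Multiplying out $(\adj f^{-}(0,k))\,f^{+}(0,k)$ and using that $\cF$ carries convolution into pointwise product while $\hat\cL_{\pm},\hat\cL,\cW_{\pm},\cW$ are Banach algebras, the $(1,1)$ entry comes out as $a=1+\cF h$ with $h\in\cL_+$ the asserted convolution (so $a-1\in\hat\cL_+$), and the $(2,1)$ entry as $b=\cF(\hat b)$ with $\hat b\in\cL$ the asserted convolution (so $b\in\hat\cL=\cB_{\bu}$); applying $\cF^{-1}$ recovers \er{p3e23}.

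\emph{Membership $a\in\cA_{\bu}$ and continuity.} Then, for $a\in\cA_{\bu}$: by Lemma~\ref{p3l8} the four entries $f^{-}_{22},f^{-}_{12},f^{+}_{11},f^{+}_{21}$ making up $a$ all continue analytically to $\C_+$ and continuously to $\ol{\C_+}$, hence so does $a$; on $\R$, $|a|^2=1+|b|^2\ge1$ and $|a|^2-1=|b|^2\in L^1(\R)$ since $\hat b\in\cL\ss L^2(\R)$; and $a-1\in\hat\cL_+$ was already obtained. For $a(k)\ne0$ on $\C_+$ I would argue that a zero $k_0\in\C_+$ of $a$ produces, via $aa_*-bb_*=1$ (extended off $\R$) together with the relation between the columns of $f^{\pm}$, a nonzero exponentially decaying solution of \er{p3e2}, i.e.\ an $L^2(\R,\C^2)$-eigenfunction of the self-adjoint operator $H$ at a non-real point — impossible since $\s(H)=\s_{ac}(H)=\R$. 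Finally, the metrics $\rho_{\cA_{\bu}},\rho_{\cB_{\bu}}$ are exactly the $\cL_+$- and $\cL$-norms of $\cF^{-1}$ of the increments, so continuity of $q\mapsto a$ and $q\mapsto b$ follows from the explicit formulas, the continuity of $q\mapsto\G^{\pm}_{nm}(0,\cdot,q)$ into $\cL_\pm$ (Lemma~\ref{hll1}(ii)), and boundedness of convolution on $\cL_+$ and on $\cL$.

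\emph{Main obstacle.} The one step with genuine content is the non-vanishing of $a$ on $\C_+$ (absence of bound states), which leans on self-adjointness of $H$; everything else is bookkeeping, whose only delicate part is keeping track of which kernel $\G^{\pm}_{nm}$ is supported on $\R_+$ and which on $\R_-$ — the source of the asymmetry between the two lines of \er{p3e23}.
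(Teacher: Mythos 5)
Your proposal is correct and follows essentially the same route as the paper: substitute the representation \er{hle2} at $x=0$ into \er{p3e17}, read off \er{p3e23} from the convolution (Banach-algebra) structure, use the determinant representation of $a$ in terms of $f^{+}_{11},f^{+}_{21},f^{-}_{12},f^{-}_{22}$ together with Lemma \ref{p3l8} for the continuation to $\C_+$, and exclude zeros of $a$ in $\C_+$ by producing an $L^2$-eigenfunction of the self-adjoint $H$. One small caveat: for general $q\in\cL$ the relation $aa_*-bb_*=1$ cannot be ``extended off $\R$'' (for non-compactly supported $q$ the coefficient $b$ need not continue off the real line), but your argument does not actually need it --- the linear dependence of the two exponentially decaying columns at a zero of the Wronskian-type determinant already yields the eigenfunction, exactly as in the paper's proof.
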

\begin{proof}
    Substituting (\ref{hle2}) in (\ref{p3e17}) and using (\ref{p3e16}), we get that
    the matrix $A(k)$ have the following form:
    $$
        A(k) = \ma a & \ol b \\ b & \ol a \am(k),\qq k \in \R,
    $$
    where $a = 1 + \cF h$ and $b = \cF \hat b$ and representation (\ref{p3e23}) holds true.
    By Lemma \ref{hll1}, i), we have $\G^{\pm}_{nm}(0,\cdot) \in \cL_{\pm}$ for each $n,m = 1,2$
    and then $\hat b \in \cL$, $h \in \cL_+$. Thus, we get $b \in \cB_{\bu}$.

    Now, we check the other conditions of $\cA_{\bu}$.
    Since $\det f^{\pm} = 1$, it follows that $\det A = 1$, which yields $|a|^2 - |b|^2 = 1$.
    This implies that $|a(k)| \geq 1$ for any $k \in \R$ and
    $|a|^2-1 \in L^1(\R)$. Using (\ref{p3e17}), we obtain the following representation:
    \[ \label{p3e26}
        a(k) = \det \ma f^+_{11} & f^-_{12} \\ f^+_{21} & f^-_{22} \am (x,k),\qq (x,k) \in \R^2.
    \]
    Due to Lemma \ref{p3l8} and (\ref{p3e26}), the function $a$ admit an analytical continuation
    from $\R$ onto $\C_+$. We show that $a(k) \neq 0$ for any $k \in \C_+$.
    It follows from (\ref{p3e26}) that if $a(k) = 0$ for some $k \in \C_+$,
    then the vector-valued solutions
    $\left( \begin{smallmatrix} f^+_{11} \\ f^+_{21} \end{smallmatrix} \right)(x,k)$ and
    $\left( \begin{smallmatrix} f^-_{12} \\ f^-_{22} \end{smallmatrix} \right)(x,k)$
    are linearly dependent. Moreover, due to (\ref{p3e1}), they exponentially decrease as
    $x \to \pm \iy$ for any $k \in \C_+$. Thus,
    if $a(k) = 0$ for some $k \in \C_+$, then $k$ is an eigenvalue of $H$. Since $H$ is
    self-adjoint, it has no eigenvalues in $\C_+$. Thus, $a(k) \neq 0$ for any $k \in \C_+$ and
    then $a \in \cA_{\bu}$.

    Finally, it follows from Lemma \ref{hll1}, ii), that the mappings
    $q \mapsto \G^{\pm}_{nm}(x, \cdot, q)$ from $\cL$ into
    $\cL_{\pm}$ are continuous for each $n,m = 1,2$ and each fixed $x \in \R$. Due to (\ref{p3e23}),
    we get that the mappings $q \mapsto a(\cdot,q)$ from $\cL$ into $\cA_{\bu}$ and
    $q \mapsto b(\cdot,q)$ from $\cL$ into $\cB_{\bu}$ are continuous.
\end{proof}

Note that $f \in \cW_+$ is invertible if and only if $f(k) \neq 0$ for any
$k \in \ol \C_+$ (see e.g. Lemma 2.9 in \cite{HM16}).
Moreover, the inverse mapping is continuous on
the subspace of invertible elements of Banach algebra (see e.g. Chapter 2 in \cite{GRS64}).
Since $a \in \cA_{\bu}$, we get the following.

\begin{corollary} \label{hll2}
    Let $q \in \cL$ and let $a = a(\cdot,q)$. Then the mapping
    $a \mapsto a^{-1}$ from $\cA_{\bu}$ into $\cW_+$ is continuous.
    In particular, there exists a unique $g \in \cL_+$ such that
    \[ \label{p3e10}
        a^{-1}(k) = 1 + \int_0^{+\iy} g(s) e^{2 i k s} ds,\qq k \in \ol \C_+.
    \]
\end{corollary}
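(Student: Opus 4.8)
The plan is to exhibit $a$ as an invertible element of the unital Banach algebra $\cW_+$ and then to invoke the continuity of the inversion map on the set of invertible elements.

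Since $a \in \cA_{\bu}$, condition iv) in the definition of $\cA_{\bu}$ gives $a - 1 \in \hat \cL_+$, so $a \in \cW_+$ with constant term $1$. To apply the invertibility criterion quoted above (Lemma 2.9 in \cite{HM16}) we must check that $a(k) \neq 0$ for every $k \in \ol\C_+$: non-vanishing on $\C_+$ is condition i) of $\cA_{\bu}$ (already established in Lemma \ref{hll4}), non-vanishing on $\R$ follows from condition ii), $|a(k)| \geq 1$, and the value of $a$ "at infinity" is the constant term $1 \neq 0$ (indeed $\cF h(k) \to 0$ as $|k| \to \iy$ in $\ol\C_+$ for $h \in \cL_+ \ss L^1(\R_+)$). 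Hence $a$ is invertible in $\cW_+$ and $a^{-1} \in \cW_+$.

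For continuity, observe that the inclusion $\cA_{\bu} \hookrightarrow \cW_+$ is isometric: for $a_1,a_2 \in \cA_{\bu}$ the difference $a_1 - a_2 = (a_1 - 1) - (a_2 - 1)$ lies in $\hat \cL_+$ with vanishing constant term, so $\| a_1 - a_2 \|_{\cW_+} = \| \cF^{-1}(a_1 - a_2) \|_{\cL_+} = \rho_{\cA_{\bu}}(a_1,a_2)$. Since every element of $\cA_{\bu}$ is invertible in $\cW_+$ and inversion is continuous on the (open) set of invertible elements of a unital Banach algebra (see Chapter 2 in \cite{GRS64}), the composition $a \mapsto a^{-1}$ is continuous from $\cA_{\bu}$ into $\cW_+$.

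It remains to read off the representation. Since $a^{-1} \in \cW_+$, we have $a^{-1} = c + \cF g$ for some $c \in \C$ and $g \in \cL_+$; comparing constant terms in the identity $a\,a^{-1} = 1$ in $\cW_+$ forces $c = 1$. Unwinding the definition of $\cF$ on $\cL_+$ (extend $g$ by zero to $\R$) yields $a^{-1}(k) = 1 + \int_0^{+\iy} g(s) e^{2iks}\,ds$ for all $k \in \ol\C_+$, and uniqueness of $g$ follows from injectivity of $\cF$ on $L^1(\R)$. The only mildly delicate points are the bookkeeping at the point at infinity — ensuring the constant terms of $a$ and $a^{-1}$ are exactly $1$ — and confirming that the metric $\rho_{\cA_{\bu}}$ makes the inclusion into $\cW_+$ an isometry; everything else is a direct application of the quoted facts.
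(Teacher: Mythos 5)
Your proposal is correct and follows essentially the same route as the paper, which derives the corollary directly from the invertibility criterion for $\cW_+$ (Lemma 2.9 in \cite{HM16}) together with the continuity of inversion on the invertible elements of a unital Banach algebra (\cite{GRS64}). You merely fill in the details the paper leaves implicit --- the non-vanishing of $a$ on $\ol\C_+$ and at infinity, the isometric inclusion $\cA_{\bu}\hookrightarrow\cW_+$, and the identification of the constant term of $a^{-1}$ as $1$ --- all of which are handled correctly.
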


Now, we show that $a$ and $b$ are not independent.
In order to get this result, we need the following known lemma about the Banach algebra $\cW$
(see e.g. Chapter 6 in \cite{GRS64}).
\begin{lemma} \label{a1l1}
    Let $f \in \cW$ and let $\Omega$ be an open neighborhood of the closure of the range
    of $f$. Let $\phi$ be an analytic function on $\Omega$. Then we have $\phi \circ f \in \cW$.
    Moreover, the mapping $f \mapsto \phi \circ f$ is an continuous mapping on the subspace
    of all function $f \in \cW$ such that its range contained in $\Omega$.
\end{lemma}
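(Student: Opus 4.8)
The plan is to deduce the lemma from the holomorphic (Riesz--Dunford) functional calculus together with the Gelfand theory of the algebra $\cW$, which may be quoted from \cite{GRS64}. The relevant facts are: $\cW$ is a commutative unital semisimple Banach algebra whose maximal ideal space is the one-point compactification $\R\cup\{\infty\}$ of $\R$, and whose Gelfand transform sends $f=c+g$ (with $g\in\hat\cL\subset C_0(\R)$) to the function $k\mapsto f(k)$ on $\R$, extended by $f(\infty):=c$, i.e. to $f$ itself. In particular the Gelfand transform is injective, and the spectrum of $f$ in $\cW$ equals the closure of its range, $\sigma_{\cW}(f)=\overline{f(\R)}$ (which already contains $c$, since $g(k)\to0$ as $k\to\pm\infty$). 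I would also record the estimate $\|h\|_{\infty}\le\|h\|_{\cW}$ for $h\in\cW$, with $\|\cdot\|_\infty$ the supremum over $\R\cup\{\infty\}$; this is immediate from $|(\cF u)(k)|\le\|u\|_{L^1}$.

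First I would fix $f\in\cW$ with $\overline{f(\R)}\subset\Omega$ and choose a cycle $\Gamma$ in $\Omega\setminus\overline{f(\R)}$ with winding number one about each point of $\overline{f(\R)}$ and zero about each point of $\C\setminus\Omega$. Since $\sigma_{\cW}(f)=\overline{f(\R)}$, the element $\zeta\cdot1-f$ is invertible in $\cW$ for every $\zeta\in\Gamma$, and $\zeta\mapsto(\zeta-f)^{-1}$ is continuous from $\Gamma$ into $\cW$; hence
$$
    \phi(f):=\frac{1}{2\pi i}\oint_{\Gamma}\phi(\zeta)\,(\zeta-f)^{-1}\,d\zeta
$$
is a well-defined element of $\cW$, independent of the admissible choice of $\Gamma$. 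Applying the Gelfand transform (a continuous algebra homomorphism) under the integral and evaluating at $k\in\R$, using that $(\zeta\cdot1-f)^{-1}$ corresponds to the function $k\mapsto(\zeta-f(k))^{-1}$ and Cauchy's formula, one gets
$$
    \phi(f)(k)=\frac{1}{2\pi i}\oint_{\Gamma}\frac{\phi(\zeta)}{\zeta-f(k)}\,d\zeta=\phi\bigl(f(k)\bigr),
$$
because $\Gamma$ winds once about $f(k)\in\overline{f(\R)}$. Thus the element $\phi(f)\in\cW$ is the function $\phi\circ f$, which is the first assertion.

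For the continuity, let $f_n\to f$ in $\cW$ with all ranges contained in $\Omega$. By $\|\cdot\|_\infty\le\|\cdot\|_{\cW}$ the convergence is uniform, so, $\overline{f(\R)}$ being a compact subset of the open $\Omega$, I may fix a compact $K\subset\Omega$ with $\overline{f(\R)}$ in its interior such that $f_n(\R)\subset K$ for all large $n$, and then a cycle $\Gamma\subset\Omega\setminus K$ as above that is admissible simultaneously for $f$ and for these $f_n$. The resolvent identity gives
$$
    \phi(f_n)-\phi(f)=\frac{1}{2\pi i}\oint_{\Gamma}\phi(\zeta)\,(\zeta-f_n)^{-1}(f_n-f)(\zeta-f)^{-1}\,d\zeta .
$$
Put $M:=\sup_{\zeta\in\Gamma}\|(\zeta-f)^{-1}\|_{\cW}<\infty$ (a continuous function on the compact $\Gamma$). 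Since inversion is continuous in a Banach algebra, once $\|f_n-f\|_{\cW}\le(2M)^{-1}$ one has $\|(\zeta-f_n)^{-1}\|_{\cW}\le2M$ uniformly in $\zeta\in\Gamma$, whence for large $n$
$$
    \|\phi(f_n)-\phi(f)\|_{\cW}\le\frac{|\Gamma|}{2\pi}\bigl(\sup_{\zeta\in\Gamma}|\phi(\zeta)|\bigr)2M^2\,\|f_n-f\|_{\cW}\to0\qquad(n\to\infty),
$$
with $|\Gamma|$ the length of $\Gamma$. This yields the asserted continuity on the indicated subspace.

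The one genuinely delicate point is the uniformity just exploited: that a single contour $\Gamma$ can be made to serve $f$ and all $f_n$ with $n$ large. This is exactly where the domination $\|\cdot\|_\infty\le\|\cdot\|_{\cW}$ on the maximal ideal space enters, keeping the ranges $f_n(\R)$ inside a fixed compact neighbourhood of $\overline{f(\R)}$ inside $\Omega$. Everything else is the standard Riesz--Dunford machinery over a commutative Banach algebra, which may, if preferred, simply be cited wholesale from Ch.~6 of \cite{GRS64}.
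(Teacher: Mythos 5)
The paper gives no proof of this lemma at all: it is stated as a known fact with a citation to Chapter 6 of \cite{GRS64}, and your argument --- identifying the maximal ideal space of $\cW$ with $\R\cup\{\infty\}$ so that $\sigma_{\cW}(f)=\overline{f(\R)}$, defining $\phi(f)$ by the Riesz--Dunford contour integral, recovering $\phi\circ f$ through the (injective) Gelfand transform, and getting continuity from the resolvent identity together with the bound $\|\cdot\|_\infty\le\|\cdot\|_{\cW}$, which keeps the ranges of nearby elements inside a fixed compact subset of $\Omega$ so that one contour serves them all --- is precisely the standard argument from that reference. Your proposal is correct and simply supplies the details the paper delegates to the citation.
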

In particular, we obtain the following corollary for the logarithm and the exponential function.
We introduce subspaces of $\cW$:
$$
    \begin{aligned}
        \cW_{real} &= \{\, f = 1 + g \, \mid \, g \in \hat \cL,\, f(x) > 0,\, x \in \R \, \},\\
        \cW_1 &= \{\, f = 1 + g \, \mid \, g \in \hat \cL,\, |f(x)| > 0,\, x \in \R \, \}.
    \end{aligned}
$$
We also introduce the mappings $\exp: f \mapsto e^{f(\cdot)}$, $f \in \hat \cL$, and
$\log : f \mapsto \log(f(\cdot))$, $f \in \cW_{real}$, where we
fixed the branch of the logarithm by $\log(x) \in \R$ for any $x \in \R$.
\begin{corollary} \label{p3c1}
    The mappings $\log : \cW_{real} \to \hat \cL$ and $\exp : \hat \cL \to \cW_1$
    are continuous.
\end{corollary}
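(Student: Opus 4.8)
The plan is to deduce both assertions from Lemma \ref{a1l1} applied to the entire function $\phi = \exp$ and to the branch $\phi = \log$ analytic on $\C \sm (-\iy, 0]$, once two elementary facts about $\cW$ are in place. First, every $f = c + g \in \cW$ is a bounded continuous function on $\R$ with $g = \cF u$, $u \in \cL$; by Riemann--Lebesgue (since $u \in L^1(\R)$) one has $g(x) \to 0$ as $|x| \to \iy$, and $\| g \|_{L^\iy(\R)} \le \| u \|_{L^1(\R)} \le \| g \|_{\hat \cL}$. Consequently $\| f \|_{L^\iy(\R)} \le \| f \|_{\cW}$, the closure of the range of $f$ equals $\ol{f(\R)}$, a compact set containing the constant term $c$ (because $f(x) \to c$ as $|x| \to \iy$), and convergence in $\cW$ entails uniform convergence on $\R$. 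Second, $\hat \cL = \{\, 0 + g \mid g \in \hat \cL \,\}$ sits isometrically in $\cW$, and $\cW_{real}, \cW_1 \ss \cW$ carry the subspace topology; hence it suffices to prove that $\exp$ and $\log$ are continuous as maps into $\cW$ and to verify that their images land in $\cW_1$, respectively $\hat \cL$.

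For the exponential: since $\exp$ is entire one may take $\Omega = \C$ in Lemma \ref{a1l1}, which is an open neighbourhood of the range of every $f \in \cW$; hence $f \mapsto e^{f(\cdot)}$ maps $\cW$ into $\cW$ and is continuous on all of $\cW$, in particular on the subspace $\hat \cL$. If $f \in \hat \cL$ then $f(x) \to 0$ as $|x| \to \iy$, so $e^{f(x)} \to 1$; writing $e^{f(\cdot)} = c' + g'$ with $g' \in \hat \cL$ and letting $|x| \to \iy$ forces $c' = 1$, while $e^{f(x)} \ne 0$ for all $x \in \R$. Thus $e^{f(\cdot)} \in \cW_1$, and $\exp : \hat \cL \to \cW_1$ is continuous.

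For the logarithm: take $\Omega = \C \sm (-\iy, 0]$, on which the chosen branch ($\log x \in \R$ for $x \in \R_+$) is analytic, and put $U = \{\, f \in \cW \mid \ol{f(\R)} \ss \Omega \,\}$. Every $f \in \cW_{real}$ has $f(x) > 0$ for all $x$ and constant term $1$, so $\ol{f(\R)} \ss (0, \iy) \ss \Omega$; thus $\cW_{real} \ss U$, and $U$ is open in $\cW$ since $\cW$-convergence is uniform and $\ol{f(\R)}$ is compact. Lemma \ref{a1l1} then gives that $f \mapsto \log(f(\cdot))$ maps $U$ into $\cW$ and is continuous on $U$, hence continuous on $\cW_{real}$. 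For $f \in \cW_{real}$ one has $f(x) \to 1$, so $\log(f(x)) \to 0$, and arguing as above the constant term of $\log(f(\cdot))$ vanishes, i.e.\ $\log(f(\cdot)) \in \hat \cL$. This proves the continuity of $\log : \cW_{real} \to \hat \cL$.

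The only real subtlety — the main obstacle in an otherwise routine argument — is that the continuity half of Lemma \ref{a1l1} is stated relative to a \emph{fixed} domain $\Omega$ containing the range, so one must exhibit a single $\Omega$ valid on the whole relevant subspace ($\Omega = \C$ for $\exp$, $\Omega = \C \sm (-\iy, 0]$ for $\log$) rather than a neighbourhood depending on the individual function, and then pin down the constant term of the image through its limit at $\pm \iy$ so as to land in $\cW_1$ (resp.\ $\hat \cL$) and not merely in $\cW$.
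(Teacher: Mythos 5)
Your proposal is correct and follows essentially the same route as the paper: both apply Lemma \ref{a1l1} with a fixed domain ($\C$ for $\exp$, a slit plane containing $(0,\iy)$ for $\log$) and then use the Riemann--Lebesgue lemma to identify the constant term of the image at $\pm\iy$, placing it in $\cW_1$, respectively $\hat \cL$. Your treatment is merely a bit more explicit about why a single $\Omega$ works uniformly on the relevant subspace, which the paper leaves implicit.
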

\begin{proof}
    For any $f \in \cW_{real}$, the range of $f$ is a compact subset of $\R_+$ and then,
    by Lemma \ref{a1l1}, $\log$ is a continuous mapping from $\cW_{real}$ to $\cW$.
    Moreover, it follows from the Riemann-Lebesgue lemma (see e.g. Theorem IX.7 in \cite{RS80})
    that $f(x) \to 1$ as $x \to \pm \iy$ for any $f \in \cW_{real}$ and
    then $\log(f(x)) \to 0$ as $x \to \pm \iy$. Thus, we have $\log(f) \in \hat \cL$.

    For any $f \in \hat \cL$, the range of $f$ is a compact subset of $\C$ and then, by Lemma \ref{a1l1},
    $\exp$ is a continuous mapping from $\hat \cL$ to $\cW$. As above, $f(x) \to 0$ as
    $x \to \pm \iy$ for any $f \in \hat \cL$ and then $\exp(f(x)) \to 1$ as $x \to \pm \iy$.
    Since $\exp(x) \neq 0$ for any $x \in \C$, it follows that $\exp(f) \in \cW_1$.
\end{proof}
We also introduce the Cauchy integral operator $\cC_+$ on $L^2(\R)$ by
$$
    (\cC_+ f)(k) = \lim_{\ve \to +0} \frac{1}{2\pi i} \int_{\R} \frac{f(s)}{s - k - i \ve} ds,
        \qq k \in \R,
$$
where the limit is taken in $L^2(\R)$. It is well-known that $\cC_+$ has the representation:
\[ \label{p3e3}
    \cC_+ = \cF \chi_+ \cF^{-1},
\]
where $\chi_+$ is the indicator function of $\R_+$. We need the following known lemma.
\begin{lemma} \label{a1l2}
    The mapping $f \mapsto \cC_+ f$ from $\hat \cL$ into $\hat \cL_+$ is continuous.
\end{lemma}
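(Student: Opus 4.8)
The plan is to reduce the statement to the factorization $\cC_+ = \cF \chi_+ \cF^{-1}$ recorded in \er{p3e3} together with the trivial observation that multiplication by the indicator $\chi_+$ maps $\cL$ contractively into $\cL_+$. Since the norms on $\hat\cL$ and $\hat\cL_+$ are transported from $\cL$ and $\cL_+$ via $\cF$, this will immediately give a bound on $\|\cC_+ f\|_{\hat\cL_+}$ in terms of $\|f\|_{\hat\cL}$, and linearity of $\cC_+$ then yields continuity.

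In detail, I would take an arbitrary $f \in \hat\cL$ and write $f = \cF g$ with $g \in \cL = L^2(\R)\cap L^1(\R)$, so that $\|f\|_{\hat\cL} = \|g\|_{\cL}$ by definition. Because $g \in L^1(\R)\cap L^2(\R)$, the Fourier inversion formula gives $\cF^{-1} f = \cF^{-1}\cF g = g$, whence, by \er{p3e3},
\[
    \cC_+ f = \cF \chi_+ \cF^{-1}\cF g = \cF(\chi_+ g).
\]
Now $\chi_+ g$ is just the restriction of $g$ to $\R_+$ (extended by zero), so $\|\chi_+ g\|_{L^2(\R_+)} \le \|g\|_{L^2(\R)}$ and $\|\chi_+ g\|_{L^1(\R_+)} \le \|g\|_{L^1(\R)}$; hence $\chi_+ g \in \cL_+$ with $\|\chi_+ g\|_{\cL_+} \le \|g\|_{\cL}$. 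Consequently $\cC_+ f = \cF(\chi_+ g) \in \hat\cL_+$, and, by the definition of the $\hat\cL_+$-norm,
\[
    \|\cC_+ f\|_{\hat\cL_+} = \|\chi_+ g\|_{\cL_+} \le \|g\|_{\cL} = \|f\|_{\hat\cL}.
\]

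Since $\cC_+$ is linear, the estimate $\|\cC_+ f\|_{\hat\cL_+} \le \|f\|_{\hat\cL}$ shows that $\cC_+ : \hat\cL \to \hat\cL_+$ is bounded, and therefore continuous. I do not anticipate any genuine obstacle: the only points deserving a word of care are the use of the inversion formula $\cF^{-1}\cF g = g$ valid on $\cL = L^1\cap L^2$, which is classical, and the fact that the $L^2(\R)$-limit defining $\cC_+$ agrees with $\cF\chi_+\cF^{-1}$ on the class $\hat\cL$ (which sits inside $L^2(\R)$), this being exactly the content of \er{p3e3}.
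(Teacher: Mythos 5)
Your proof is correct and follows essentially the same route as the paper: write $f=\cF g$, use the factorization $\cC_+=\cF\chi_+\cF^{-1}$ to get $\cC_+f=\cF(\chi_+g)$, and observe that truncation $g\mapsto\chi_+g$ is a contraction from $\cL$ to $\cL_+$, which transfers to the hatted spaces by definition of their norms. You merely spell out the norm estimate and the inversion step that the paper leaves implicit.
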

\begin{proof}
    Let $f = \cF g \in \hat \cL$.
    Using (\ref{p3e3}), we get $\cC_+ f = \cF (\chi_+ g) \in \hat \cL_+$. Since
    the mapping $g \mapsto \chi_+ g$ from $\cL$ to $\cL_+$ is continuous, we get the statement
    of the lemma.
\end{proof}
We also need the following technical lemma.
\begin{lemma} \label{p3l1}
    Let $h \in \hat \cL \cap L^1(\R)$ and let $h(k) \geq 0$ for any $k \in \R$.
    Then there exists a unique solution $a \in \cA_{\bu}$ of the equation $|a|^2 = 1+h$.
    Moreover, the mapping $h \mapsto a$ from $\hat \cL$ into $\cA_{\bu}$ is continuous.
\end{lemma}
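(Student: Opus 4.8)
The plan is to treat this as a spectral (outer) factorization problem: first construct explicitly the outer function on $\C_+$ whose boundary modulus is $(1+h)^{1/2}$, and then show it is the only element of $\cA_{\bu}$ with that modulus.

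\emph{Existence.} Since $h\in\hat\cL$ and $h\ge0$ we have $1+h\in\cW_{real}$, so by Corollary \ref{p3c1} the ordinary real logarithm $u:=\log(1+h)$ lies in $\hat\cL$. Applying the Cauchy projection, Lemma \ref{a1l2} gives $v:=\cC_+u\in\hat\cL_+$, and I set $a:=\exp(v)$. It then remains to check the four conditions defining $\cA_{\bu}$. Because $v\in\hat\cL_+=\cF\cL_+$, it is analytic on $\C_+$ and continuous on $\ol{\C_+}$, hence so is $a=e^v$, and $a$ is nowhere zero, which gives condition i). Since $\hat\cL_+$ is a Banach algebra, $a-1=\sum_{n\ge1}v^n/n!$ converges in $\hat\cL_+$, so $a-1\in\hat\cL_+$, which is condition iv). For the boundary behaviour I would use the Plemelj-type identity: for a real-valued $u$ and $k\in\R$ one has $\ol{(\cC_+u)(k)}=((I-\cC_+)u)(k)$, since complex conjugation interchanges $\cC_+$ with $I-\cC_+$; hence $v(k)+\ol{v(k)}=u(k)$, i.e. $2\Re v(k)=u(k)$ on $\R$, so $|a(k)|^2=e^{2\Re v(k)}=1+h(k)$. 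In particular $|a|\ge1$ on $\R$ (condition ii)), and $|a|^2-1=h\in L^1(\R)$ by hypothesis (condition iii); this is the only place the $L^1$ assumption is used). Thus $a\in\cA_{\bu}$ and $|a|^2=1+h$.

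\emph{Uniqueness.} Let $a_1,a_2\in\cA_{\bu}$ both satisfy $|a_j|^2=1+h$ on $\R$. Each $a_j$ is bounded on $\ol{\C_+}$ (as an element of $\cW_+$) and zero-free there (zero-free on $\C_+$ by definition and nonzero on $\R$ since $|a_j|\ge1$), while $a_j^{-1}$ is bounded on $\ol{\C_+}$ by Corollary \ref{hll2}. Hence $F:=a_1/a_2$ and $1/F$ are both bounded and analytic on $\C_+$, continuous on $\ol{\C_+}$, with $|F|=1$ on $\R$; moreover $a_j(k)\to1$ as $|k|\to\iy$ in $\ol{\C_+}$ (because $a_j-1\in\hat\cL_+$), so $F(k)\to1$. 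Applying the maximum principle on the compactification $\ol{\C_+}\cup\{\iy\}$ (equivalently, transporting to the unit disc by a Cayley transform) to $F$ and to $1/F$ gives $|F|\le1$ and $|1/F|\le1$ on $\C_+$, hence $|F|\equiv1$ and $F$ is a constant of modulus $1$; the normalization $F(k)\to1$ forces $F\equiv1$, i.e. $a_1=a_2$.

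\emph{Continuity.} The map $h\mapsto a$ factors as $h\mapsto 1+h\mapsto u=\log(1+h)\mapsto v=\cC_+u\mapsto a=\exp(v)$. The step into $\hat\cL$ is continuous by Corollary \ref{p3c1}, the step $\hat\cL\to\hat\cL_+$ by Lemma \ref{a1l2}, and the step $\hat\cL_+\to 1+\hat\cL_+$ by Lemma \ref{a1l1} applied to the entire function $\exp$ (alternatively, directly, since $\|e^{v_1}-e^{v_2}\|_{\hat\cL_+}\le\|v_1-v_2\|_{\hat\cL_+}\,e^{\max(\|v_1\|,\|v_2\|)}$, which follows from the Banach-algebra identity $v_1^n-v_2^n=(v_1-v_2)\sum_{j=0}^{n-1}v_1^{j}v_2^{n-1-j}$). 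Since the metric of $\cA_{\bu}$ is precisely the $\hat\cL_+$-distance between the differences, this gives continuity of $h\mapsto a$ from $\hat\cL$ into $\cA_{\bu}$.

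\emph{Expected difficulty.} Most of the above is routine bookkeeping with the Banach algebras of the preliminary section; the two points needing care are (a) verifying that the explicit outer function $\exp(\cC_+\log(1+h))$ meets \emph{all four} defining conditions of $\cA_{\bu}$, in particular that $a-1$ lands in $\hat\cL_+$ and not merely in $\hat\cL$ — which is where one uses that $\hat\cL_+$ is multiplicatively closed — and (b) the uniqueness, where one must exploit that an element of $\cA_{\bu}$ is bounded and zero-free on $\ol{\C_+}$ together with its reciprocal, so that the ratio of two solutions collapses to a unimodular constant that is then pinned to $1$ by the behaviour at infinity.
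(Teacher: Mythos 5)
Your proposal is correct and follows essentially the same route as the paper: both construct $a$ as the outer function $\exp$ of the half-plane Cauchy transform of $\log(1+h)$, verify the four conditions of $\cA_{\bu}$, and obtain continuity by composing the continuous maps $\log$, $\cC_+$, $\exp$. The only cosmetic difference is in the uniqueness step, where you apply the maximum principle to the ratio $a_1/a_2$ while the paper argues via uniqueness of the harmonic conjugate of $\Re\log a$; both reduce to the same fact that the ratio is a unimodular constant pinned to $1$ by the normalization at infinity.
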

\begin{proof}
    Firstly, we prove the existence and continuity.
    It is easy to see that $1+h \in \cW_{real}$. Thus,
    it follows from Corollary \ref{p3c1} that
    $\frac{1}{2} \log(1+h) \in \hat \cL \ss L^2(\R)$ and the mapping
    $h \mapsto \frac{1}{2} \log(1+h)$ from $\hat \cL$ into $\hat \cL$ is continuous.
    We introduce
    $$
        F(k) = \frac{i}{2\pi} \int_{\R} \frac{\log(1+h)}{k - t}dt,\qq k \in \C_+.
    $$
    It follows from Corollary on p.~128 in~\cite{Koo2} that $F$ is analytic on $\C_+$
    and $\Re F(k) \to \frac{1}{2} \log(1+h(t))$ as $k \to t$ and $k \in \C_+$, for almost all $t \in \R$.
    Moreover, it follows from Lemma \ref{a1l2} that $F \in \hat \cL_+$, where
    $F(t) = \lim_{k \to t} F(k)$, $t \in \R$ and $F$ depends continuously on $h \in \hat \cL$.

    Now, let $a = \exp(F)$. Then it follows from Corollary \ref{p3c1} that $a \in \cW_1$ and $a$
    depends continuously on $F \in \hat \cL_+ \ss \hat \cL$. It also follows that $a$ is analytic in $\C_+$ and
    $a(k) \neq 0$ for any $k \in \C_+$. Since $h(k) \geq 0$ for any $k \in \R$ and $h \in L^1(\R)$,
    we have that $a(k) \geq 1$ for any $k \in \R$ and $|a|^2-1 \in L^1(\R)$. Thus, we have $a \in \cA_{\bu}$.
    Recall that $F$ depends continuously on $h$ and then $a$ also depends continuously on $h$.

    Secondly, we prove the uniqueness. Let $a \in \cA_{\bu}$ be a solution of the equation $|a|^2 = 1+h$.
    Then $\log(a)$ is an analytic function in $\C_+$ and then $\Re \log a$ is a harmonic function in
    $\C_+$ and it is a uniquely determined by its boundary value $\frac{1}{2} \Re \log (1+h) \in L^2(\R)$.
    Since $\C_+$ is a connected open subset, it follows that $\Re \log a$ has a unique harmonic conjugate up to
    the constant. Thus, if $a_1$ is an other solution of the equation $|a|^2 = 1+h$,
    which is analytic in $\C_+$ and $a_1(k) \to 1$ as $k \to \pm \iy$,
    then we have $\log a = \log a_1 + iC$ for some $C \in \R$. Due to $a(k) \to 1$ and $a_1(k) \to 1$
    as $k \to \pm \iy$, it follows that $\log a(k) \to 2\pi i n$ and $\log a_1(k) \to 2\pi i n_1$ as $k \to \pm \iy$
    for some $n,n_1 \in \Z$. Then $\log a(k) = \log a_1(k) + 2 \pi i m$ for any
    $k \in \ol \C_+$ for some $m \in \Z$ and then $a = a_1$.
\end{proof}
Now, we show that for any $b \in \cB_{\bu}$ there exists a unique $a \in \cA_{\bu}$ such that
the determinant of the associated transition matrix equals $1$.
\begin{lemma} \label{p3l6}
    Let $b \in \cB_{\bu}$. Then there exists a unique solution $a \in \cA_{\bu}$ of the equation
    \[ \label{p3e5}
        |a(k)|^2 - |b(k)|^2 = 1,\qq k \in \R.
    \]
    Moreover, the mapping $b \mapsto a$ from $\cB_{\bu}$ in $\cA_{\bu}$ is continuous.
\end{lemma}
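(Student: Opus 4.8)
The plan is to deduce the statement from Lemma~\ref{p3l1} applied to the nonnegative function $h = |b|^2$. The first step is therefore to check that $h$ satisfies the hypotheses of that lemma, namely $h \in \hat\cL \cap L^1(\R)$ and $h(k) \ge 0$ for all $k \in \R$. Nonnegativity is clear. For $k \in \R$ we have $b_*(k) = \ol{b(k)}$, so $|b|^2 = b\, b_*$ as functions on $\R$. Since $b \in \cB_\bu = \hat\cL$, we have $\hat b \in \cL$, and a short computation gives $\widehat{b_*}(s) = \ol{\hat b(-s)}$; hence $\widehat{b_*} \in \cL$, so $b_* \in \hat\cL$, and the map $b \mapsto b_*$ is an isometry of $\cB_\bu$. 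Because $\hat\cL$ is a Banach algebra under pointwise multiplication, it follows that $h = b\, b_* \in \hat\cL$. Finally $b \in \hat\cL \subset L^2(\R)$ gives $|b|^2 \in L^1(\R)$. Thus $h \in \hat\cL \cap L^1(\R)$ with $h \ge 0$ on $\R$.

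Next I would apply Lemma~\ref{p3l1} to this $h$: it produces a unique $a \in \cA_\bu$ with $|a|^2 = 1 + h$, which is exactly $|a(k)|^2 - |b(k)|^2 = 1$ for $k \in \R$, i.e.~\er{p3e5}. Uniqueness transfers immediately: if $a \in \cA_\bu$ satisfies~\er{p3e5}, then $|a|^2 = 1 + |b|^2 = 1 + h$, so $a$ must coincide with the solution provided by Lemma~\ref{p3l1}.

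For the continuity claim, I would note that $b \mapsto h = b\, b_*$ is a continuous map from $\cB_\bu = \hat\cL$ into $\hat\cL$, since $b \mapsto b_*$ is an isometry of $\hat\cL$ and multiplication $\hat\cL \times \hat\cL \to \hat\cL$ is bounded bilinear. Composing with the continuous map $h \mapsto a$ from $\hat\cL$ into $\cA_\bu$ furnished by Lemma~\ref{p3l1}, we obtain that $b \mapsto a$ from $\cB_\bu$ into $\cA_\bu$ is continuous.

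All the analytic work is already contained in Lemma~\ref{p3l1} (where $a$ is built as an outer function from $\log(1+h)$), so I do not expect a genuine obstacle. The only point needing a little care is that $|b|^2$, which a priori is merely an element of $L^1(\R)$, in fact belongs to the Banach algebra $\hat\cL$; this is exactly where one uses that $\hat\cL$ is closed under pointwise multiplication and under the involution $b \mapsto b_*$.
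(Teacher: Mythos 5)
Your proposal is correct and follows essentially the same route as the paper: both set $h = |b|^2$, verify $h \in \hat\cL \cap L^1(\R)$ with $h \ge 0$ (the paper simply cites "basic properties of Banach algebras" where you spell out the involution $b \mapsto b_*$ and the algebra multiplication), and then invoke Lemma~\ref{p3l1} for existence, uniqueness, and continuity.
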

\begin{proof}
    Let $h = |b|^2$. Then it follows from the basic properties of Banach algebras that
    $h \in \hat \cL$ and $h$ depends continuously on $b \in \cB_{\bu} = \hat \cL$.
    Due to $b \in \hat \cL \ss L^2(\R)$, we have $h = |b|^2 \in L^1(\R)$. At last, we have
    $h(k) \geq 0$ for any $k \in \R$. Thus, it follows from Lemma \ref{p3l1} that there exists
    a unique solution $a \in \cA_{\bu}$ of the equation (\ref{p3e5}) and it depends continuously on $b$.
\end{proof}

\begin{remark}
    Note that the similar result for compactly supported potentials
    can be obtained using the theory of entire functions (see e.g. Theorem 2.3 in \cite{K05}).
\end{remark}

\subsection{Direct scattering}

Recall that $H_o y = -i \s_3 y'$ on $L^2(\R,\C^2)$ is the free Dirac operator.
The scattering matrix for the pair $H_o$, $H$ has the following form
$$
    S(z) = \frac{1}{a(z)} \ma 1 & -\ol b(z) \\ b(z) & 1\am,\qq z \in \R.
$$
Here $1/a$ is the transmission coefficient and $r_+ = -\ol b/a$ (or $r_- = b/a$) is the right
(or left) reflection coefficient. We introduce the following class of all reflection coefficients.
\begin{definition*}
    $\cR_{\bu}$ is a metric space of all functions $r \in \hat \cL$ such that
    $|r(k)| < 1$ for any $k \in \R$ equipped with the metric
    $$
        \rho_{\cR_{\bu}}(r_1,r_2) = \|\hat r_1 - \hat r_2\|_{\cL},\qq r_1,r_2 \in \cR_{\bu}.
    $$
\end{definition*}

It follows from the definition of the scattering matrix that it can be obtained from
the transition matrix. Due to Corollary \ref{hll2}, the transmission coefficient
$\frac{1}{a} \in \cW_+$ and it depends continuously on $a \in \cA_{\bu}$. Now, we show that the reflection
coefficient $r_{+}$ also depends continuously on $a \in \cA_{\bu}$ and $b \in \cB_{\bu}$. Note that this
result is also well-known (see e.g. \cite{FHMP09}).
\begin{lemma} \label{p3l3}
    Let $a \in \cA_{\bu}$, $b \in \cB_{\bu}$ be such that $|a|^2 - |b|^2 = 1$.
    Then we have $r_+ = -\ol b/a \in \cR_{\bu}$ and $r_+$ depends continuously on $a \in \cA_{\bu}$ and
    $b \in \cB_{\bu}$.
\end{lemma}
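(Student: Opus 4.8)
The plan is to read everything off from the Banach algebra structure of $\cW$ and $\cW_+$ together with Corollary \ref{hll2}, which already supplies the only nontrivial ingredient, namely the continuity of $a\mapsto a^{-1}$ and the fact that $a^{-1}$ has the form $1+\cF g$ with $g\in\cL_+$. So the proof splits into: (1) showing $r_+\in\cR_{\bu}$, and (2) the continuity, both essentially bookkeeping.

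\emph{Step 1: membership in $\cR_{\bu}$.} Since $a\in\cA_{\bu}$ satisfies $|a(k)|\ge 1$ for $k\in\R$, the function $a$ has no real zeros, so $r_+=-\ol b/a$ is well defined on $\R$. From $|a(k)|^2-|b(k)|^2=1$ one gets $|a(k)|^2=1+|b(k)|^2$, hence $|r_+(k)|^2=|b(k)|^2/|a(k)|^2=|b(k)|^2/(1+|b(k)|^2)<1$ for every $k\in\R$. To see $r_+\in\hat\cL$, I would first note that if $b=\cF\hat b$ with $\hat b\in\cL$, then for $k\in\R$ one has $\ol{b(k)}=(\cF\tilde b)(k)$ with $\tilde b(s)=\ol{\hat b(-s)}$, and $\tilde b\in\cL$ with $\|\tilde b\|_{\cL}=\|\hat b\|_{\cL}$; thus $\ol b\in\hat\cL$ and $b\mapsto\ol b$ is an isometry of $\hat\cL$. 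By Corollary \ref{hll2}, $a^{-1}=1+\cF g$ for some $g\in\cL_+$, so $a^{-1}\in\cW_+\ss\cW$. Since $\hat\cL$ is a Banach algebra and an ideal of $\cW$ (it is a subalgebra and absorbs the constants under multiplication), the product $r_+=-\ol b\cdot a^{-1}$ lies in $\hat\cL$. Together with the pointwise bound this yields $r_+\in\cR_{\bu}$.

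\emph{Step 2: continuity.} I would factor the map $(a,b)\mapsto r_+$ as $(a,b)\longmapsto(a^{-1},\ol b)\longmapsto -\ol b\cdot a^{-1}$. The first arrow is continuous from $\cA_{\bu}\ts\cB_{\bu}$ into $\cW_+\ts\hat\cL$, since $a\mapsto a^{-1}$ is continuous by Corollary \ref{hll2} and $b\mapsto\ol b$ is an isometry of $\hat\cL$. The second arrow is the restriction of the multiplication $\cW\ts\cW\to\cW$, which is continuous because $\cW$ is a Banach algebra; by Step 1 it takes values in $\hat\cL$, and the inclusion $\hat\cL\hookrightarrow\cW$ is isometric, so this restriction is continuous as a map into $\hat\cL$. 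Since the metric on $\cR_{\bu}$ is precisely the $\hat\cL$-metric, the composition is continuous into $\cR_{\bu}$.

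\emph{Where the difficulty is.} There is no real obstacle; the lemma is a soft consequence of the algebra structure and of continuity facts already established. The one point I would state carefully is that $r_+$ lands in $\hat\cL$ and not merely in $\cW$: this is forced because $\ol b\in\hat\cL$ and $\hat\cL$ is an ideal of $\cW$, so multiplying by $a^{-1}=1+(\text{something in }\hat\cL_+)$ cannot create a constant term. The strict inequality $|r_+|<1$ and the continuity are then immediate from $|a|^2=1+|b|^2$ and from Corollary \ref{hll2}.
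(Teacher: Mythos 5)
Your proposal is correct and follows essentially the same route as the paper: the pointwise bound $|r_+|^2=|b|^2/(1+|b|^2)<1$, Corollary \ref{hll2} for $a^{-1}\in\cW_+$ and its continuity, the observation that $\ol b\in\hat\cL$, and the Banach algebra structure to conclude $r_+\in\hat\cL$ and the continuity of $(a,b)\mapsto r_+$. The only difference is that you spell out the ideal property $\hat\cL\cdot\cW\ss\hat\cL$, which the paper leaves implicit.
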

\begin{proof}
    Since $|a|^2 =1 + |b|^2$ and $|r_+| = \frac{|b|}{|a|}$, we have
    $$
        |r_+|^2 \leq \frac{|b|^2}{1 + |b|^2} < 1.
    $$
    By Lemma \ref{hll2}, $a^{-1} \in \cW_+$. Due to $b \in \cB_{\bu} = \hat \cL$, we get
    $\ol b \in \hat \cL$ and then $r_+ \in \hat \cL$. It follows from Corollary \ref{hll2} that
    $a \mapsto a^{-1}$ is a continuous mapping from $\cA_{\bu}$ to $\cW_+$.
    Moreover, the multiplication and the complex conjugate are also continuous mappings
    from $\hat \cL$ to $\hat \cL$. Then $r_+$ depends continuously on $a$ and $b$.
\end{proof}
On the other hand, the coefficients $a$ and $b$ are uniquely determined by the reflection coefficient.
\begin{lemma} \label{p3l4}
    Let $r_+ \in \cR_{\bu}$. Then there exist a unique solution $a \in \cA_{\bu}$ of the equation
    \[ \label{p3e6}
        1 - |r_+|^2 = \frac{1}{|a|^2}
    \]
    and the mapping $r_+ \mapsto a$ from $\cR_{\bu}$ into $\cA_{\bu}$ is continuous. Moreover, in this case
    $b = - \ol{r_+} \ol{a} \in \cB_{\bu}$ and $b$ depends continuously on $r_+$.
\end{lemma}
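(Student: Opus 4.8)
The plan is to reduce the existence and uniqueness of $a$ to Lemma~\ref{p3l1}. Setting $h = \frac{|r_+|^2}{1-|r_+|^2}$ we have $1+h = \frac{1}{1-|r_+|^2}$, so a function $a\in\cA_\bu$ solves~\er{p3e6} if and only if $|a|^2 = 1+h$; hence the first task is to check that $h$ satisfies the hypotheses of Lemma~\ref{p3l1}. Since $r_+\in\cR_\bu\ss\hat\cL$, and multiplication and complex conjugation are continuous operations on the Banach algebra $\hat\cL$, the function $|r_+|^2 = r_+\overline{r_+}$ lies in $\hat\cL$ and depends continuously on $r_+$. Writing $r_+ = \cF\hat r_+$ with $\hat r_+\in\cL\ss L^1(\R)$, the Riemann–Lebesgue lemma shows $r_+$ is continuous with $r_+(k)\to0$ as $k\to\pm\iy$; together with $|r_+(k)|<1$ on $\R$ this forces $\sup_\R|r_+|<1$, so $1-|r_+|^2\in\cW_{real}$ and $\delta:=\inf_\R\big(1-|r_+|^2\big)>0$. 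By Corollary~\ref{p3c1}, $\log(1-|r_+|^2)\in\hat\cL$ depends continuously on $r_+$, and hence so does $\frac{1}{1-|r_+|^2} = \exp\big(-\log(1-|r_+|^2)\big)\in\cW_1$; therefore $h = |r_+|^2\cdot\frac{1}{1-|r_+|^2}\in\hat\cL$ and $r_+\mapsto h$ is continuous from $\cR_\bu$ into $\hat\cL$. Finally $h\ge0$ on $\R$, and $h\in L^1(\R)$ since $0\le h\le\delta^{-1}|r_+|^2$ with $|r_+|^2\in L^1(\R)$ (because $\hat\cL\ss L^2(\R)$).

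Granting this, Lemma~\ref{p3l1} yields a unique $a\in\cA_\bu$ with $|a|^2 = 1+h$, i.e. $1-|r_+|^2 = \frac{1}{|a|^2}$, and with $h\mapsto a$ continuous; composing with $r_+\mapsto h$ shows that $r_+\mapsto a$ is continuous from $\cR_\bu$ into $\cA_\bu$. Uniqueness of the solution of~\er{p3e6} is immediate: any other $a'\in\cA_\bu$ with $1-|r_+|^2 = \frac{1}{|a'|^2}$ satisfies $|a'|^2 = 1+h$ on $\R$, hence $a'=a$ by the uniqueness clause of Lemma~\ref{p3l1}. For the last statement I would set $b = -\overline{r_+}\,\overline a$; since $a-1\in\hat\cL_+$ we have $\overline a = 1+\overline{a-1}$ with $\overline{a-1}\in\hat\cL$, so $b = -\overline{r_+}-\overline{r_+}\,\overline{a-1}$ is a sum of elements of $\hat\cL$, whence $b\in\hat\cL = \cB_\bu$; moreover $|b|^2 = |r_+|^2|a|^2$ gives $|a|^2-|b|^2 = |a|^2\big(1-|r_+|^2\big) = 1$. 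Continuity of $r_+\mapsto b$ then follows from continuity of conjugation, of $r_+\mapsto a$, and of multiplication in $\cW$.

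I expect the only genuine difficulty to be the verification that $h\in\hat\cL\cap L^1(\R)$ with continuous dependence on $r_+$. The integrability of $h$ is not a formal consequence of $h\in\hat\cL$ — elements of $\hat\cL$ need not lie in $L^1(\R)$ — and really uses both $r_+\in L^2(\R)$ and the uniform bound $\sup_\R|r_+|<1$; the membership $h\in\hat\cL$ together with its continuous dependence on $r_+$ rests on the Banach-algebra functional calculus recorded in Corollary~\ref{p3c1}. Everything else reduces to Lemma~\ref{p3l1} and routine Banach-algebra bookkeeping.
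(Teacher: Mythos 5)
Your proposal is correct and follows essentially the same route as the paper: reduce to Lemma \ref{p3l1} via $h=\frac{|r_+|^2}{1-|r_+|^2}$, verify $h\in\hat\cL\cap L^1(\R)$ with $h\ge 0$ and continuous dependence, then handle $b=-\ol{r_+}\,\ol a$ by Banach-algebra bookkeeping. The only (immaterial) difference is that you obtain the invertibility of $1-|r_+|^2$ in $\cW$ through the $\exp$/$\log$ functional calculus of Corollary \ref{p3c1}, whereas the paper cites the Wiener-type invertibility criterion for $\cW$ directly.
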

\begin{proof}
    It follows from (\ref{p3e6}) that
    $$
        |a|^2 = \frac{1}{1 - |r_+|^2} = 1 + \frac{|r_+|^2}{1 - |r_+|^2} = 1 + h.
    $$
    Due to $|r_+(k)| < 1$, for any $k \in \R$, we get $h(k) \geq 0$ for any $k \in \R$.
    Using the multiplication properties of the Banach algebra, we have $|r_+|^2 \in \hat \cL$.
    Note that $f \in \cW$ is invertible if and only if $f(k) \neq 0$ for any
    $k \in \R$ (see e.g. Lemma 2.9 in \cite{HM16}). Since $1 - |r_+|^2 \in \cW$ and
    $1 - |r_+(k)|^2 > 0$ for any $k \in \R$, it follows that
    $(1 - |r_+|^2)^{-1} \in \cW$ and then $h \in \hat \cL$ and it depends continuously on $r_+ \in \cR_{\bu}$.
    It follows from the Riemann-Lebesgue lemma that $r_+(k) \to 0$ as $k \to \pm \iy$.
    Since $r_+ = \cF g$ for some $g \in \cL$, we have that $r_+ \in C(\R)$. Recall that
    $1 - |r_+(k)|^2 > 0$ for any $k \in \R$. Hence, we obtain $(1 - |r_+|^2)^{-1} \in L^{\iy}(\R)$.
    Due to $r_+ \in L^2(\R)$, we have $|r_+|^2 \in L^1(\R)$ and then $h \in L^1(\R)$.
    Thus, by Lemma \ref{p3l1}, there exists a unique solution $a \in \cA_{\bu}$ of the equation (\ref{p3e6})
    and it depends continuously on $r_+ \in \cR_{\bu}$.
    Finally, using the multiplication properties of the Banach algebra, we get
    $b = - \ol{r_+} \ol{a} \in \cB_{\bu}$ and $b$ depends continuously on $r_+$.
\end{proof}

Above, we considered only the right reflection coefficient. However,
the left reflection coefficient is uniquely determined by the right one.
We introduce the mapping $\cI: r_+ \mapsto r_- = -\ol r_+ \frac{\ol a}{a}$ from $\cR_{\bu}$ in $\cR_{\bu}$,
where $a$ is given by Lemma \ref{p3l4}. The following result holds true (see e.g. Lemma 3.4 in \cite{FHMP09}).

\begin{lemma} \label{p3l5}
    The mapping $\cI$ is a homeomorphism and $\cI \circ \cI = I_{\cR_{\bu}}$, where $I_{\cR_{\bu}}$ is
    the identity mapping on $\cR_{\bu}$.
\end{lemma}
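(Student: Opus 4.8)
The plan is to verify three things in turn: that $\cI$ maps $\cR_{\bu}$ into itself, that $\cI\circ\cI$ is the identity on $\cR_{\bu}$, and that $\cI$ is continuous. The last two together say that $\cI$ is a continuous bijection which coincides with its own inverse, hence a homeomorphism, which is exactly the assertion.

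First I would fix $r_+\in\cR_{\bu}$ and let $a\in\cA_{\bu}$ be the unique solution of $1-|r_+|^2=|a|^{-2}$ furnished by Lemma~\ref{p3l4}, so that $\cI(r_+)=r_-=-\ol{r_+}\,\ol a/a$. On $\R$ one has $|\ol a(k)/a(k)|=1$, so $|r_-(k)|=|r_+(k)|<1$ for all $k\in\R$; the only point needing an argument is $r_-\in\hat\cL$. By Corollary~\ref{hll2}, $a^{-1}-1\in\hat\cL_+$, and since complex conjugation sends $\hat\cL_+$ isometrically onto $\hat\cL_-$, also $\ol a-1\in\hat\cL_-$. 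Expanding $\ol a/a-1=(\ol a-1)+(a^{-1}-1)+(\ol a-1)(a^{-1}-1)$ and using that $\hat\cL$ is a Banach algebra under pointwise multiplication, we get $\ol a/a-1\in\hat\cL$, whence $r_-=-\ol{r_+}-\ol{r_+}(\ol a/a-1)\in\hat\cL$. Thus $r_-\in\cR_{\bu}$ and $\cI$ is well defined.

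Next I would compute $\cI\circ\cI$. Keeping the notation above, the key observation is that the coefficient attached to $r_-$ by Lemma~\ref{p3l4} is again $a$: indeed $1-|r_-|^2=1-|r_+|^2=|a|^{-2}$, and such $a\in\cA_{\bu}$ is unique. Hence $\cI(r_-)=-\ol{r_-}\,\ol a/a$, and a direct pointwise computation on $\R$ gives
$$
    \cI(r_-)(k)=-\,\ol{\left(-\,\ol{r_+(k)}\,\ol{a(k)}/a(k)\right)}\cdot\ol{a(k)}/a(k)
    =r_+(k)\cdot\frac{a(k)}{\ol{a(k)}}\cdot\frac{\ol{a(k)}}{a(k)}=r_+(k),
$$
so $\cI\circ\cI=I_{\cR_{\bu}}$.

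For continuity I would simply trace through the construction: $r_+\mapsto a$ is continuous from $\cR_{\bu}$ into $\cA_{\bu}$ (Lemma~\ref{p3l4}), $a\mapsto a^{-1}$ is continuous from $\cA_{\bu}$ into $\cW_+$ (Corollary~\ref{hll2}), hence $a\mapsto a^{-1}-1$ and $a\mapsto\ol a-1$ are continuous into $\hat\cL$; multiplication and conjugation are continuous on $\hat\cL$; therefore $r_+\mapsto\ol a/a-1$ and then $r_+\mapsto r_-=-\ol{r_+}-\ol{r_+}(\ol a/a-1)$ are continuous with respect to $\rho_{\cR_{\bu}}$. Then $\cI$ is a continuous bijection whose inverse is again $\cI$, i.e.\ a homeomorphism. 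I do not expect a genuine obstacle here; the only things to keep an eye on are that $\ol a/a-1$ lands in $\hat\cL$ (not merely in $\cW$), so that $r_-$ really belongs to $\cR_{\bu}$, and that the uniqueness in Lemma~\ref{p3l4} forces the second application of $\cI$ to use the same coefficient $a$.
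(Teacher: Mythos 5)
Your proposal is correct, but it is worth noting that the paper does not actually prove Lemma \ref{p3l5} at all: it simply cites Lemma 3.4 of \cite{FHMP09}. What you have written is therefore a self-contained argument where the paper defers to the literature, and it holds up. The three ingredients all check out: the decomposition $\ol a/a-1=(\ol a-1)+(a^{-1}-1)+(\ol a-1)(a^{-1}-1)$ is an identity, $a-1\in\hat\cL_+$ (definition of $\cA_{\bu}$) and $a^{-1}-1\in\hat\cL_+$ (Corollary \ref{hll2}) combined with the fact that conjugation carries $\hat\cL_+$ onto $\hat\cL_-$ and with the Banach-algebra property of $\hat\cL$ indeed put $\ol a/a-1$ in $\hat\cL$ rather than merely in $\cW$, which is exactly the point you flag; the involution property rests precisely on the uniqueness clause of Lemma \ref{p3l4}, since $|r_-|=|r_+|$ on $\R$ forces the second application of $\cI$ to reuse the same $a$; and continuity follows by composing the continuous maps $r_+\mapsto a$, $a\mapsto a^{-1}$, conjugation, and multiplication, after which $\cI=\cI^{-1}$ upgrades the continuous bijection to a homeomorphism. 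The only cosmetic remark is that your identity $r_-=-\ol{r_+}-\ol{r_+}(\ol a/a-1)$ is also consistent with the scattering relations $r_+=-\ol b/a$, $r_-=b/a$ from \er{p2e3}, which is a quick sanity check you could mention. In short: the approach is more elementary and self-contained than the paper's (which is a pure citation), at the cost of a small amount of Banach-algebra bookkeeping that the cited reference presumably carries out in a similar way.
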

\begin{remark}
    It follows from Lemmas \ref{p3l4}, \ref{p3l5} that we can recover the scattering matrix from
    one reflection coefficient. Moreover, by Lemmas \ref{p3l6}, \ref{p3l3}, we can recover
    the reflection coefficient from the coefficient $b \in \cB_{\bu}$ and then the scattering matrix too.
\end{remark}

\subsection{Inverse scattering}
Now, we give the solution of the inverse scattering problem for the Dirac operators
(see e.g. Theorem 1.1 in \cite{FHMP09}).
\begin{theorem} \label{hlt1}
    The mappings $q \mapsto r_{\pm}(\cdot,q)$ are homeomorphisms between $\cL$ and $\cR_{\bu}$.
\end{theorem}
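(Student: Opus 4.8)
The plan is to obtain the forward direction by composing the lemmas already established, and the inverse direction via the Marchenko reconstruction. For the forward direction: Lemma~\ref{hll4} gives that $q\mapsto(a,b)$ is continuous from $\cL$ into $\cA_{\bu}\ts\cB_{\bu}$ with $|a|^2-|b|^2=1$ on $\R$, and Lemma~\ref{p3l3} then gives that $q\mapsto r_+(\cdot,q)=-\ol b(\cdot,q)/a(\cdot,q)$ is continuous from $\cL$ into $\cR_{\bu}$. Since $\cI:r_+\mapsto r_-$ is a homeomorphism of $\cR_{\bu}$ by Lemma~\ref{p3l5}, the map $q\mapsto r_-(\cdot,q)=\cI(r_+(\cdot,q))$ is continuous as well, and it suffices to prove that $q\mapsto r_+(\cdot,q)$ is a bijection onto $\cR_{\bu}$ with continuous inverse; the statement for $r_-$ then follows by composing with $\cI$.

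For the inverse direction, fix $r_+\in\cR_{\bu}$. By Lemma~\ref{p3l4} there is a unique $a\in\cA_{\bu}$ with $1-|r_+|^2=1/|a|^2$, and then $b=-\ol{r_+}\,\ol a\in\cB_{\bu}$ satisfies $|a|^2-|b|^2=1$; thus the entire transition matrix $A$, hence the full scattering matrix, is already determined by $r_+$. Since $H$ is self-adjoint with $\s(H)=\s_{ac}(H)=\R$, there are no eigenvalues, so no discrete data enter. Starting from the Jost representation \er{hle2} together with $f^+=f^-A$, one derives in the classical way --- take Fourier transforms, use the analytic continuation and Hardy-class membership of the relevant entries from Lemma~\ref{p3l8} and Lemma~\ref{hll4}, and project onto $\R_+$ --- a coupled system of linear integral equations for $x\mapsto\G^+(x,\cdot)$, the right Marchenko equations, of the schematic form
$$
    \G^+(x,\cdot)+\Phi_x+\mathbf T_x\,\G^+(x,\cdot)=0,
$$
where $\Phi_x$ and the kernel of the integral operator $\mathbf T_x$ on $\cL_+$ are explicit expressions built from $\hat r_+$ shifted by $x$ (for $r_-$ one uses the left Marchenko equation and $\G^-$). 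One then sets $q:=-\G^+_{12}(\cdot,0)$, as dictated by the reconstruction formula \er{p3e15}.

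The remaining work is: (a) show $I+\mathbf T_x$ is invertible on $\cL_+$ for every $x\in\R$ --- here one uses $|r_+(k)|<1$ on $\R$ via the Fredholm alternative, with triviality of the kernel coming from the absence of bound states (for $x\to+\iy$ one also has a contraction estimate, from which analytic Fredholm theory propagates invertibility to all $x$); (b) show the solution $\G^+(x,\cdot)$ depends continuously on $x$ into $\cL_+$, tends to $0$ in $\cL_+$ as $x\to+\iy$ with bounds good enough to place $q=-\G^+_{12}(\cdot,0)$ in $\cL=L^2(\R)\cap L^1(\R)$, and depends continuously on $r_+\in\cR_{\bu}$; and (c) the \emph{consistency step}: the potential $q$ just constructed actually has $r_+(\cdot,q)=r_+$, which one checks by verifying that the matrix $f^+$ assembled from the Marchenko kernel solves \er{p3e2} with this $q$ and realizes the prescribed scattering data, so that the direct map composed with the reconstruction is the identity. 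Combined with uniqueness of the Marchenko solution, (c) also yields injectivity of $q\mapsto r_+$, and continuity of the inverse $r_+\mapsto q$ follows from (b) and \er{p3e15}.

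The main obstacle is carrying out (a) and (c) with the correct matrix bookkeeping: the off-diagonal form of $Q$ together with the $\s_3$-twist in \er{p3e2} makes the Marchenko system genuinely coupled, unlike the scalar Schr\"odinger case, so one must track the symmetry relations $\G^+_{11}=\ol{\G^+_{22}}$, $\G^+_{21}=\ol{\G^+_{12}}$ throughout, prove unique solvability uniformly enough in $x$ to control the $\cL$-norm of $q$, and then verify that the reconstruction genuinely inverts the direct map. This is classical inverse scattering for the Zakharov--Shabat system; I would follow \cite{FHMP09} (see also \cite{APT04, FT07}), adapting the estimates to the function spaces $\cL$ and $\cR_{\bu}$ used here.
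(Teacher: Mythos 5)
The paper does not actually prove Theorem \ref{hlt1}: it is imported verbatim as Theorem 1.1 of \cite{FHMP09} (with the GLM machinery recorded separately in Lemma \ref{hll3}, citing \cite{HM16}), so there is no internal proof to compare against. Your outline is, in essence, a compressed account of the proof in \cite{FHMP09}: the forward direction by composing Lemmas \ref{hll4}, \ref{p3l3}, \ref{p3l5}, and the inverse direction by recovering $(a,b)$ from $r_+$ via Lemma \ref{p3l4}, solving the GLM equation \er{GLM+}, reading off $q$ from \er{p3e15}, and closing the loop with the consistency check. That architecture is correct, and your identification of (a) solvability of $I+\mathbf T_x$, (b) continuity/decay, and (c) consistency as the substantive work matches where the real effort lies in the cited source.

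One point deserves sharper attention than you give it: in step (b) you propose to place $q=-\G^+_{12}(\cdot,0)$ in $\cL=L^2(\R)\cap L^1(\R)$ using bounds on $\G^+(x,\cdot)$ as $x\to+\iy$. The right GLM equation \er{GLM+} only controls the reconstructed potential well on the right half-line, because the operator $\mathbf T_x$ built from $\hat r_+(x+\cdot)$ is small only for large positive $x$; integrability of $q$ near $-\iy$ cannot be extracted from it. You must run the left equation \er{GLM-} with $r_-=\cI r_+$ in parallel and invoke the compatibility $\G^-_{12}(x,0)=-\G^+_{12}(x,0)$ (Lemma \ref{hll3}~iv)) to control $q$ on $\R_-$; this compatibility is itself part of the consistency step (c), not a free consequence of it. You gesture at this parenthetically, but as written the two-sided membership $q\in\cL$ is not established. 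With that repaired, the sketch is a faithful (if necessarily schematic) reconstruction of the external proof the paper relies on.
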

It follows from this theorem that a potential is uniquely determined
by the right or left reflection coefficient. Thus, it solves the uniqueness, the characterization
and the continuity problems for potentials from $\cL$ in terms of the reflection coefficients.
Using this result, we solve the inverse problem in terms of the coefficient $b$.
\begin{theorem} \label{hlt2}
    The mapping $q \mapsto b(\cdot,q)$ is a homeomorphism between $\cL$ and $\cB_{\bu}$.
\end{theorem}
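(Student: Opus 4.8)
The plan is to realize the inverse of $q \mapsto b(\cdot,q)$ as a composition of continuous maps already built in the preliminaries, and then to verify that this composition is a genuine two-sided inverse by invoking the uniqueness clauses of Lemmas \ref{p3l6} and \ref{p3l4}.

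Continuity of $q \mapsto b(\cdot,q)$ from $\cL$ into $\cB_{\bu}$ is already part of Lemma \ref{hll4}, so the work lies entirely in constructing a continuous inverse. Given $b \in \cB_{\bu}$, I would first apply Lemma \ref{p3l6} to obtain the unique $a \in \cA_{\bu}$ with $|a|^2 - |b|^2 = 1$, depending continuously on $b$; then, by Lemma \ref{p3l3}, the function $r_+ := -\ol b/a$ lies in $\cR_{\bu}$ and depends continuously on $(a,b)$, hence on $b$. Composing this with the inverse of the homeomorphism $q \mapsto r_+(\cdot,q)$ supplied by Theorem \ref{hlt1} yields a continuous map $\Psi : \cB_{\bu} \to \cL$.

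It then remains to check that $\Psi$ inverts $q \mapsto b(\cdot,q)$ on both sides. For $q \in \cL$ with $a = a(\cdot,q)$ and $b = b(\cdot,q)$, Lemma \ref{hll4} gives $a \in \cA_{\bu}$ and $|a|^2 - |b|^2 = 1$, so the uniqueness part of Lemma \ref{p3l6} identifies the coefficient attached to $b$ with $a(\cdot,q)$; thus $-\ol b/a = r_+(\cdot,q)$ and, since $q \mapsto r_+(\cdot,q)$ is a bijection, $\Psi(b(\cdot,q)) = q$. Conversely, for $b \in \cB_{\bu}$ with $a$ from Lemma \ref{p3l6}, $r_+ = -\ol b/a$ and $q = \Psi(b)$ (so that $r_+(\cdot,q) = r_+$), I would note that $|a|^2 - |b|^2 = 1$ gives $1 - |r_+|^2 = |a|^{-2}$, while Lemma \ref{hll4} gives $1 - |r_+(\cdot,q)|^2 = |a(\cdot,q)|^{-2}$; uniqueness in Lemma \ref{p3l4} then forces $a = a(\cdot,q)$, and hence $b = -\ol{r_+}\,\ol a = -\ol{r_+(\cdot,q)}\,\ol{a(\cdot,q)} = b(\cdot,q)$.

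Combining these facts, $q \mapsto b(\cdot,q)$ is a bijection $\cL \to \cB_{\bu}$, continuous by Lemma \ref{hll4}, with continuous inverse $\Psi$, so it is a homeomorphism. The main obstacle I anticipate is the bookkeeping in the last step: one must be careful that the auxiliary coefficient $a$ reconstructed from $b$ via Lemma \ref{p3l6} is the same as the one reconstructed from the reflection coefficient via Lemma \ref{p3l4}, and this matching is exactly what the two uniqueness assertions provide; everything else is a routine composition of the continuity statements already in hand.
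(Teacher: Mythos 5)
Your proposal is correct and follows essentially the same route as the paper: the inverse is built as the composition $b \mapsto a \mapsto r_+ \mapsto q$ via Lemmas \ref{p3l6}, \ref{p3l3} and Theorem \ref{hlt1}, with the forward continuity taken from the preliminaries (you cite Lemma \ref{hll4} directly, the paper composes through $r_+$ using Theorem \ref{hlt1} and Lemma \ref{p3l4} — an immaterial difference). Your explicit verification that the two reconstructions of $a$ agree, via the uniqueness clauses of Lemmas \ref{p3l6} and \ref{p3l4}, is exactly the bookkeeping the paper leaves implicit.
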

\begin{proof}
    We consider the composition of mappings:
    $$
        q \mapsto r_+(\cdot,q) \mapsto \ma r_+(\cdot,q) \\ a(\cdot, q) \am \mapsto b(\cdot,q).
    $$
    Using Theorem \ref{hlt1} and Lemma \ref{p3l4}, we get that for any $q \in \cL$ there exists
    a unique $b(\cdot,q) \in \cB_{\bu}$ and $b(\cdot,q)$ depends continuously on $q$.
    On the other side, we consider the composition of mappings:
    $$
        b \mapsto \ma b \\ a \am \mapsto r_+ \mapsto q.
    $$
    Using Lemmas \ref{p3l6},\ref{p3l3}, and Theorem \ref{hlt1}, we get that for any $b \in \cB_{\bu}$
    there exists a unique $q \in \cL$ such that $b = b(\cdot,q)$ and $q$ depends continuously on $b$.
\end{proof}

In order to recover a potential from the reflection coefficient, one can use
the Gelfand-Levitan-Marchenko (GLM) equation.
For any $r_{\pm} \in \cR_{\bu}$, we introduce the matrix-valued functions
\[ \label{hlF_scat}
    \Omega_{+}(s) = \ma 0 & F_{+}(-s) \\ \ol{F_{+}(-s)} & 0\am,\qq
    \Omega_{-}(s) = \ma 0 & \ol{F_{-}(s)} \\ F_{-}(s) & 0\am,\qq s \in \R,
\]
where $F_{\pm} = \hat r_{\pm} \in \cL$. The following result was also obtained in \cite{HM16}.
\begin{lemma} \label{hll3}
    \begin{enumerate}[label={\roman*)}]
        \item Let $\G^{\pm}(x,s) = \G^{\pm}(x,s,q)$ and $\Omega_{\pm}(s) = \Omega_{\pm}(s,q)$ for
        some $q \in \cL$ and for any $(x,s) \in \R \ts \R_{\pm}$. Then $\G^{\pm}$ and $\Omega_{\pm}$
        satisfy the GLM equations:
        \begin{align}
            \G^+(x,s) + \Omega_+(x+s) + \int_0^{+\iy} \G^+(x,t) \Omega_+(x+t+s) dt &= 0, \label{GLM+} \\
            \G^-(x,s) + \Omega_-(x+s) + \int_{-\iy}^{0} \G^-(x,t) \Omega_-(x+t+s) dt &= 0 \label{GLM-}
        \end{align}
        for each $x \in \R$ and almost all $s \in \R_{\pm}$.
        \item Let $\Omega_{+}$ be given by (\ref{hlF_scat}) for some $r_+ \in \cR_{\bu}$.
        Then equation (\ref{GLM+}) has a unique solution $\G^+(x,\cdot) \in \cL_+ \otimes \cM_2(\C)$
        and this solution depends continuously on $x \in \R$.
        Moreover, the mapping $s \mapsto \G^+_{12}(\cdot,s)$ from $\R_+$ into $\cL$ is continuous.
        \item Let $\Omega_{-}$ be given by (\ref{hlF_scat}) for some $r_- \in \cR_{\bu}$.
        Then equation (\ref{GLM-}) has a unique solution $\G^-(x,\cdot) \in \cL_- \otimes \cM_2(\C)$
        and this solution depends continuously on $x \in \R$.
        Moreover, the mapping $s \mapsto \G^-_{12}(\cdot,s)$ from $\R_-$ into $\cL$ is continuous.
        \item Let $r_+, r_- \in \cR_{\bu}$ such that $r_+ = \cI r_-$. Then we get
        $$
            \G^-_{12}(x,0) = -\G^+_{12}(x,0),\qq x \in \R.
        $$
    \end{enumerate}
\end{lemma}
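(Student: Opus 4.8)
The plan is to handle the four parts in turn, with parts (ii)--(iv) resting on (i); parts (ii)--(iii) reproduce the inverse‑scattering analysis of \cite{HM16} and contain the technical core. For part (i) I would start from the scattering relation $f^+(x,k)=f^-(x,k)A(k)$, $k\in\R$, of (\ref{p3e17}), and read it off column by column using the form of $A$ from Lemma \ref{hll4}. Dividing the first column by $a$ — legitimate since $1/a\in\cW_+$ by Corollary \ref{hll2} — and the second column by $\ol a$, one side becomes a combination of functions analytic in $\C_+$ (by Lemma \ref{p3l8} together with $1/a\in\cW_+$), while the other side is one of the columns of $f^-$ analytic in $\C_-$ multiplied by $r_-=b/a$, respectively by $r_+=-\ol b/a$. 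Into these identities I would substitute the integral representation (\ref{hle2}) for $f^\pm$, apply $\cF^{-1}$ in $k$, and project onto the half‑line where the $\C_+$‑analytic contribution is annihilated; using $\hat r_\pm=F_\pm$ and the symmetries $\G^\pm_{11}=\ol{\G^\pm_{22}}$, $\G^\pm_{21}=\ol{\G^\pm_{12}}$ from (\ref{p3e16}), this assembles exactly into the matrix equations (\ref{GLM-}) and (\ref{GLM+}), the off‑diagonal shape of $\Omega_\pm$ in (\ref{hlF_scat}) being produced by those very symmetries.

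For parts (ii) and (iii), fix $r_+\in\cR_{\bu}$ and $x\in\R$ and rewrite (\ref{GLM+}) as $(I+T_x)\G^+(x,\cdot)=-\Omega_+(x+\cdot)$, where $(T_x g)(s)=\int_0^{+\iy} g(t)\,\Omega_+(x+t+s)\,dt$. Since $F_+=\hat r_+\in\cL$, the right‑hand side lies in $\cL_+\otimes\cM_2(\C)$ and depends continuously on $x$. On $L^2(\R_+)\otimes\cM_2(\C)$ the operator $T_x$ is, after a Fourier conjugation and the shift by $x$, a compression of multiplication by $r_+$, hence of Hankel type with $\|T_x\|\le\|r_+\|_{L^{\iy}(\R)}<1$ uniformly in $x$ — here one uses $r_+\in\hat\cL\ss C_0(\R)$ and $|r_+|<1$ pointwise, so $\max_\R|r_+|<1$ — so that $I+T_x$ is invertible on $L^2(\R_+)\otimes\cM_2(\C)$ by a Neumann series; alternatively one shows the homogeneous equation has only the trivial solution via Parseval and the strict inequality $1-|r_+|^2>0$ and invokes the Fredholm alternative. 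The transfer of this solvability to $\cL_+\otimes\cM_2(\C)$, the continuous dependence of $\G^+(x,\cdot)$ on $x$, and the continuity of $s\mapsto\G^+_{12}(\cdot,s)$ from $\R_+$ into $\cL$ — the last via Fubini applied to the $12$‑entry of (\ref{GLM+}), which is symmetric in $x$ and $s$ — I would carry out inside the Banach algebras $\cW$ and $\hat\cL$ as in \cite{HM16, FHMP09}. Part (iii) is verbatim the same with $\R_+$, $\Omega_+$, $r_+$ replaced by $\R_-$, $\Omega_-$, $r_-$.

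For part (iv), if $r_+=\cI r_-$ then, by the definition of $\cI$ and Lemma \ref{p3l5}, $r_+$ and $r_-$ are the right and left reflection coefficients of one and the same pair $(a,b)$. By Theorem \ref{hlt1} there is a unique $q\in\cL$ with $r_+(\cdot)=r_+(\cdot,q)$, hence also $r_-(\cdot)=r_-(\cdot,q)$. By part (i) the transformation kernels $\G^\pm(\cdot,\cdot,q)$ of Lemma \ref{hll1} solve (\ref{GLM+}) and (\ref{GLM-}), so by the uniqueness from (ii)--(iii) they coincide with the $\G^\pm$ built from $r_\pm$; then (\ref{p3e15}) gives $\G^-_{12}(x,0)=\G^-_{12}(x,0,q)=q(x)=-\G^+_{12}(x,0,q)=-\G^+_{12}(x,0)$ for every $x\in\R$.

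The step I expect to be the main obstacle is the one in parts (ii)--(iii) where $L^2$‑solvability is upgraded to solvability in $\cL_+\otimes\cM_2(\C)$ and the continuity statements are extracted: this is the only point at which the mixed $L^1\cap L^2$ structure and the Wiener‑algebra machinery, rather than the mere pointwise bound $|r_\pm|<1$, must be exploited in earnest.
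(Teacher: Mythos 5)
The paper does not actually prove this lemma: it is stated with the remark ``The following result was also obtained in \cite{HM16}'' and is imported as a black box, so there is no internal proof to compare yours against. Judged on its own, your outline is the standard Marchenko/GLM derivation and is essentially sound: part (i) by splitting $f^{+}=f^{-}A$ (more precisely, for (\ref{GLM+}) one works with $f^{-}=f^{+}A^{-1}$ so that the coefficient $-b/\ol a=\ol{r_+}$ appears --- your column-by-column description is slightly loose about which combination produces $r_+$ versus $r_-$, but the mechanism is right), and parts (ii)--(iii) via the Hankel bound $\|T_x\|\le\sup_{\R}|r_+|<1$, which is legitimate because $r_+\in\hat\cL\ss C_0(\R)$ makes the supremum a maximum strictly below $1$. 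You correctly identify the upgrade from $L^2$-solvability to $\cL_\pm\otimes\cM_2(\C)$ and the continuity statements as the technical core, and deferring that to \cite{HM16, FHMP09} matches what the paper itself does for the entire lemma.

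The one point worth flagging is part (iv). Your argument invokes Theorem \ref{hlt1} to manufacture a potential $q$ realizing the given pair $(r_+,r_-)$ and then appeals to uniqueness in (ii)--(iii). Within this paper's logical ordering that is admissible, since Theorem \ref{hlt1} is stated earlier and is itself cited from \cite{FHMP09}. But as a self-contained proof it is circular: in the references, the surjectivity half of Theorem \ref{hlt1} is \emph{proved} by solving the GLM equations and checking precisely the compatibility $\G^-_{12}(x,0)=-\G^+_{12}(x,0)$ of the left and right reconstructions --- that is, part (iv) is an ingredient of Theorem \ref{hlt1}, not a consequence of it. A non-circular proof of (iv) shows directly that the kernels obtained from (\ref{GLM+}) and (\ref{GLM-}) generate Jost-type solutions of Dirac equations with potentials $-\G^+_{12}(\cdot,0)$ and $\G^-_{12}(\cdot,0)$ respectively, and that the scattering relation $f^+=f^-A$ forces these two potentials to coincide. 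You should either supply that argument or make explicit that you are treating Theorem \ref{hlt1} as an external input whose proof does not rest on the statement being proved.
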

\begin{remark}
    One can recover a potential $q$ from the reflection coefficient $r_+ \in \cR_{\bu}$ as follows:
    \begin{enumerate}[label={(\roman*)}]
        \item Construct $\Omega_+$ by $r_+$ as in (\ref{hlF_scat});
        \item Construct $\G^+(x,s)$ as a solution of (\ref{GLM+});
        \item Recover a potential by using $q(x) = -\G^+_{12}(x,0)$, $x \in \R$.
    \end{enumerate}
\end{remark}

\section{Compactly supported potentials} \label{p3}
In this section, we show the relationship between the support of a potential and properties
of the transition and scattering matrix. Firstly, we show that a potential is compactly supported
if and only if the associated kernels $\G^{\pm}$ are compactly supported.

\begin{lemma} \label{p4l1}
    Let $q \in \cL$ and let $\G^{\pm}(x,s) = \G^{\pm}(x,s,q)$. Then we have for any $\d > 0$:
    \begin{enumerate}[label={\roman*)}]
        \item $\sup \supp q \leq \d$ if and only if $\G^{+}(x,s) = 0$
        for almost all $(x, s) \in \R^2_+$ such that $x + s > \d$;
        \item $\inf \supp q \geq -\d$ if and only if $\G^{-}(x,s) = 0$
        for almost all $(x, s) \in \R^2_-$ such that $x + s < -\d$.
    \end{enumerate}
\end{lemma}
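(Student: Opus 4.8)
The plan is to establish i) in detail; part ii) then follows by the same argument with $f^+$, $+\iy$, $\R_+$ replaced by $f^-$, $-\iy$, $\R_-$, the Volterra iteration below started at $-\iy$, and the identity $q(x)=\G^-_{12}(x,0,q)$ of \er{p3e15} used in place of $q(x)=-\G^+_{12}(x,0,q)$. For the implication "$\G^+$ vanishes past the line $x+s=\d\Rightarrow\sup\supp q\le\d$" I would argue softly. If $\G^+(x,s)=0$ for a.e.\ $(x,s)\in\R^2_+$ with $x+s>\d$, then by Fubini there is a full-measure set of $s>0$ on which the slice $\G^+_{12}(\cdot,s,q)$ vanishes a.e.\ on $(\d-s,\iy)\supseteq(\d,\iy)$. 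By Lemma \ref{hll1}, iii), the map $s\mapsto\G^+_{12}(\cdot,s,q)$ is continuous from $\R_+$ into $\cL$, hence into $L^2(\d,\iy)$; being $0$ for a.e.\ $s$, it is $0$ for every $s\ge0$, and at $s=0$ the identity \er{p3e15} gives $q=-\G^+_{12}(\cdot,0,q)=0$ a.e.\ on $(\d,\iy)$, i.e.\ $\sup\supp q\le\d$.

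The substance is in the converse, so assume $\sup\supp q\le\d$. For $x\ge\d$ one has $Q\ev 0$ on $[x,\iy)$, so on $[\d,\iy)$ the Jost solution solves $(f^+)'=ik\s_3f^+$ with $e^{-ikx\s_3}f^+(x,k)\to1$ as $x\to+\iy$; the unique such solution is $f^+(x,k)=e^{ikx\s_3}$, and \er{hle2} together with injectivity of $\cF$ forces $\G^+(x,\cdot)=0$ for $x\ge\d$. Now fix $x<\d$. Because $f^+(\cdot,k,q)|_{[x,\iy)}$ solves the Dirac equation on $[x,\iy)$ with the condition at $+\iy$, it depends only on $q|_{[x,\iy)}$; hence replacing $q$ by $p:=q\,\1_{[x,\iy)}$ — which is supported in the \emph{compact} interval $[x,\d]$ — leaves $\G^+(x,\cdot)$ unchanged. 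Writing $f^+(z,k,p)=e^{ikz\s_3}G(z,k)$ turns the Dirac equation into the Volterra equation $G(z,k)=1-\int_z^{\iy}\wt Q(t,k)\,G(t,k)\,dt$, $\wt Q(t,k)=e^{-ikt\s_3}Q(t)e^{ikt\s_3}$, a standard convergent iteration whose Neumann series $G=\sum_{n\ge0}(-1)^nG_n$ has $G_n(x,k)=\int_{x\le t_1\le\cdots\le t_n}\wt Q(t_1)\cdots\wt Q(t_n)\,dt_1\cdots dt_n$, with the integrand vanishing unless all $t_j\in[x,\d]$. Since $\wt Q$ is off-diagonal, its two entries carrying the phases $e^{-2ikt}$, $e^{2ikt}$, matching exponents shows that each matrix entry of $e^{ikx\s_3}G_n(x,k)$ is an integral over the ordered simplex of a product of the numbers $q(t_j),\ol{q(t_j)}$ against a phase $e^{\pm i(2s(\vec t)+x)k}$, with $s(\vec t)$ an affine function of $\vec t$. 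Hence $f^+(x,k,p)-e^{ikx\s_3}=\int_{\R}M(s)e^{i(2s+x)k\s_3}\,ds$ for an $\cM_2(\C)$-valued $M$ supported in the set of values taken by $s(\vec t)$, and comparing with \er{hle2} and using injectivity of $\cF$ identifies $\G^+(x,\cdot,q)=\G^+(x,\cdot,p)=M$. It then remains to check that every such $s(\vec t)$ lies in $[0,\d-x]$; granting this, $\G^+(x,s,q)=0$ for a.e.\ $s>\d-x$, and letting $x$ range over $(-\iy,\d)$ and applying Fubini yields $\G^+(x,s)=0$ a.e.\ on $\{x+s>\d\}$.

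The only genuine obstacle is this bound $0\le s(\vec t)\le\d-x$. After matching exponents, $s(\vec t)$ is, up to an overall sign and the additive constant $-x$, an alternating sum $-t_1+t_2-t_3+\cdots$ of the ordered variables: grouping terms suitably (and using $t_1\ge x$) gives $s(\vec t)\ge0$ — exactly the constraint $s\ge0$ for this Marchenko-type variable — while a telescoping estimate from $x\le t_1\le\cdots\le t_n\le\d$ gives $s(\vec t)\le\d-x$. This is the Dirac counterpart of, and is proved just as, the classical fact that the transformation-operator kernel of a Schr\"odinger operator with $\sup\supp q\le\d$ vanishes for $x+y>2\d$; equivalently it is the finite-propagation-speed property of the Goursat problem satisfied by $\G^+$, and it is the crux of the lemma.
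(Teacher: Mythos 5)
Your proof is correct, but for the substantive implication it takes a genuinely different route from the paper. The soft direction is identical in both: continuity of $s\mapsto\G^+_{12}(\cdot,s)$ plus the trace identity $q=-\G^+_{12}(\cdot,0)$ of \er{p3e15}. For the hard direction the paper never touches the Volterra series: it first invokes the a priori bound \er{hle3} (whose right-hand side vanishes when $\eta^+(x)=\nu^+(x)=0$) to get $\G^+(x,\cdot)=0$ for all $x>\d$, then substitutes this into the GLM equation \er{GLM+} at such $x$ to conclude $\Omega_+(y)=0$ for a.e.\ $y>\d$, and finally feeds that back into \er{GLM+} at arbitrary $x$: when $x+s>\d$ every occurrence of $\Omega_+$ in the equation is evaluated at a point $>\d$, so the equation collapses to $\G^+(x,s)=0$. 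That is a three-line bootstrap, made possible because the GLM machinery (Lemma \ref{hll3}) is already set up. Your argument instead re-derives the transformation-operator representation from the Neumann series for $e^{-ikx\s_3}f^+$ and tracks the support of each iterated integral through the alternating-sum bound $0\le s(\vec t)\le\d-x$; this is essentially the proof of Lemma \ref{hll1} with support bookkeeping added, which is why the paper can afford to omit it. What your route buys is independence from the inverse-scattering side (no GLM equation is needed for this direction) and an explicit description of the kernel; what it costs is the need to verify the phase-matching and the sign pattern of $s(\vec t)$ separately for all four matrix entries, and the routine but unstated justifications that $\sum_n M_n$ converges in $\cL_+$ and that the representation \er{hle2} determines $\G^+(x,\cdot)$ uniquely. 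Both of these are standard for $q\in L^1$, so I see no genuine gap, only a longer path to the same place.
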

\begin{proof}
    i) Let $\sup \supp q \leq \d$. Then it follows from (\ref{hle3}) that $\| \G^{+}_{nm}(x,\cdot) \|_{\cL_+} = 0$
    for each $x > \d$, $n,m = 1,2$. Thus, $\G^{+}(x,s) = 0$ for each $x > \d$ and for almost all $s \in \R_+$.
    Substituting this identity in (\ref{GLM+}), we get $\Omega_{+}(x+s) = 0$ for each $x > \d$ and
    for almost all $s \in \R_+$, i.e. $\Omega_{+}(x) = 0$ for almost all $x > \d$.
    Now substituting this identity in (\ref{GLM+}), we get $\G^{+}(x,s) = 0$ for almost all
    $x,s \in \R_+$ such that $x+s > \d$.

    Let $\G^{+}(x,s) = 0$ for almost all $x,s \in \R_+$ such that $x+s > \d$. By Lemma \ref{hll3}
    the mapping $s \mapsto \G^{+}_{12}(\cdot,s)$ is continuous. Combining these facts, we get
    $\G^{+}_{12}(x,0) = 0$ for almost all $x > \d$, which yields, by Lemma
    \ref{hll1}, that $q(x) = 0$ for almost all $x > \d$.

    ii) In this case, the proof is similar.
\end{proof}

The support of a potential is also related to the support of $\hat b$ and $\hat r_{\pm}$.
\begin{lemma} \label{p3l2}
    Let $q \in \cL$ and let $b = b(\cdot,q)$, $r_{\pm} = r_{\pm}(\cdot,q)$. Then we have
    \[ \label{p3e11}
        \begin{aligned}
            \sup \supp q &= -\inf \supp \hat r_+ = \sup \supp \hat b,\\
            \inf \supp q &= \inf \supp \hat r_- = \inf \supp \hat b.
        \end{aligned}
    \]
\end{lemma}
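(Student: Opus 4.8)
The plan is to prove the identities for $\hat r_{\pm}$ first, by matching $\supp q$ with the supports of the Gelfand--Levitan--Marchenko kernels $\Omega_{\pm}$ of \er{hlF_scat} via the GLM equations \er{GLM+}, \er{GLM-} and Lemma \ref{p4l1}, and then to transfer the result to $\hat b$ by means of the factorizations $r_-=b/a$ and $r_+=-\ol b/a$. Supports are taken in $[-\iy,+\iy]$ with the usual conventions on the empty set, so that the statement is meaningful for every $q\in\cL$; I use throughout the elementary facts $\widehat{fg}=\hat f*\hat g$, $\cF^{-1}\ol f(s)=\ol{\hat f(-s)}$, and $\supp(\hat f*\hat g)\ss\ol{\supp\hat f+\supp\hat g}$, where $\hat g=\cF^{-1}g$ as in the paper.

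The first claim is that, for every $\d\in\R$, one has $\sup\supp q\le\d$ if and only if $\Omega_+=0$ a.e. on $(\d,\iy)$, and $\inf\supp q\ge-\d$ if and only if $\Omega_-=0$ a.e. on $(-\iy,-\d)$. Indeed, if $\sup\supp q\le\d$, then by Lemma \ref{p4l1} i) and Fubini $\G^+(x,\cdot)=0$ a.e. on $\R_+$ for a.a. $x>\d$, so \er{GLM+} gives $\Omega_+(x+s)=0$ for a.a. $s>0$, and letting $x\downarrow\d$ yields $\Omega_+=0$ a.e. on $(\d,\iy)$; conversely, if $\Omega_+=0$ a.e. on $(\d,\iy)$, then for a.a. $(x,s)\in\R^2_+$ with $x+s>\d$ both $\Omega_+(x+s)=0$ and $\Omega_+(x+t+s)=0$ for a.a. $t\ge0$ (since $x+t+s\ge x+s>\d$), so \er{GLM+} forces $\G^+(x,s)=0$, and then Lemma \ref{p4l1} i) gives $\sup\supp q\le\d$. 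The second equivalence follows in the same way from \er{GLM-} and Lemma \ref{p4l1} ii), using $x+t+s\le x+s<-\d$ for $t\le0$. Since the nonzero entries of $\Omega_+$ equal $\hat r_+(-\cdot)$ while those of $\Omega_-$ equal $\hat r_-$, taking the infimum over admissible $\d$ in these equivalences gives $\sup\supp q=\sup\supp\Omega_+=-\inf\supp\hat r_+$ and $\inf\supp q=\inf\supp\Omega_-=\inf\supp\hat r_-$.

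To bring in $\hat b$, note that by Lemma \ref{hll4} and Corollary \ref{hll2} one has $a-1=\cF h$ and $\tfrac1a-1=\cF g$ with $h,g\in\cL_+$, so that $\widehat{b\cdot\tfrac1a}=\hat b+\hat b*g$ and $\widehat{b\cdot a}=\hat b+\hat b*h$; since $\supp g,\supp h\ss[0,\iy)$, multiplication by $a$ or by $\tfrac1a$ cannot lower the left endpoint of the support. Applying this to $r_-=b\cdot\tfrac1a$ and to $b=r_-\cdot a$ gives $\inf\supp\hat r_-\ge\inf\supp\hat b$ and $\inf\supp\hat b\ge\inf\supp\hat r_-$, whence $\inf\supp\hat b=\inf\supp\hat r_-=\inf\supp q$. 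For the right endpoint, $\cF^{-1}\ol a$ is supported in $(-\iy,0]$, while $\inf\supp\cF^{-1}\ol b=-\sup\supp\hat b$ and $\sup\supp\cF^{-1}\ol{r_+}=-\inf\supp\hat r_+$; applying the same convolution bound to $r_+=-\ol b\cdot\tfrac1a$ and to $b=-\ol{r_+}\cdot\ol a$ gives $-\inf\supp\hat r_+\le\sup\supp\hat b$ and $\sup\supp\hat b\le-\inf\supp\hat r_+$, whence $\sup\supp\hat b=-\inf\supp\hat r_+=\sup\supp q$. This is \er{p3e11}.

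The argument is essentially bookkeeping, and I do not anticipate a genuine obstacle. The points that require care are: the argument-reflection in $\Omega_+$ (which is precisely why $-\inf\supp\hat r_+$, and not $-\inf\supp\hat r_-$, appears, whereas $\Omega_-$ carries $\hat r_-$ without reflection); the passage from the ``a.e. on the region $\{x+s>\d\}$'' statements of Lemma \ref{p4l1} to statements valid for a.a. fixed $x>\d$, done via Fubini (and, if one wants them for every such $x$, the continuity of $x\mapsto\G^{\pm}(x,\cdot)$ from Lemma \ref{hll1}); and keeping the $\pm\iy$ conventions consistent so that the equalities remain valid when a support is one-sidedly unbounded, i.e. for general $q\in\cL$ rather than only compactly supported $q$.
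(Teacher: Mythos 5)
Your proof is correct and follows essentially the same route as the paper's: the support of $q$ is tied to the support of $\Omega_{\pm}$ through the GLM equations and Lemma \ref{p4l1}, and the supports of $\hat r_{\pm}$ and $\hat b$ are matched by convolution bookkeeping using $a^{\pm1}-1\in\wh\cL_+$. The only divergence is in one leg: the paper obtains $\sup\supp\hat b\le\sup\supp q$ from the representation \er{p3e23} of $\hat b$ via the kernels $\G^{\pm}(0,\cdot)$ and closes a cyclic chain of three one-sided inequalities, whereas you prove $\sup\supp q=-\inf\supp\hat r_+$ as a two-sided equivalence and then get the missing inequality from the reverse factorization $b=-\ol{r_+}\,\ol a$ — a legitimate and equally economical substitute.
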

\begin{proof}
    Firstly, we show that $\sup \supp q \leq -\inf \supp \hat r_+$.
    If $\inf \supp \hat r_+ = -\iy$, then the inequality is evident. Let $\inf \supp \hat r_+ = -\d < 0$
    for some $\d > +\iy$. Due to (\ref{hlF_scat}), we have $\Omega_{+}(x) = 0$ for any $x > \d$.
    Substituting this identity in (\ref{GLM+}), we get $\G^{+}(x,s) = 0$ for almost all $x,s \in \R_+$
    such that $x+s > \d$. Thus, by Lemma \ref{p4l1}, $\sup \supp q \leq \d$.

    Secondly, we show that $-\inf \supp \hat r_+ \leq \sup \supp \hat b$. Let $\sup \supp \hat b = \d$.
    It follows from (\ref{p3e10}) that
    \[ \label{p3e14}
        \hat r_+(s) = -g(s) - (g*r)(s),\qq s \in \R,
    \]
    where $g(s) = \overline{\hat b(-s)}$, $s \in \R$ and $r \in \cL_+$.
    Using $\inf \supp g = -\d$, $\inf \supp r \geq 0$, and well-known property
    $\supp (g * r) \ss \supp g + \supp r$, we get
    $\inf \supp g * r \geq -\d$. Substituting these inequalities in (\ref{p3e14}),
    we have $-\inf \supp \hat r_+ \leq \d$.

    Thirdly, we show that $\sup \supp \hat b \leq \sup \supp q$. Let $\sup \supp q = \d$. Then, by Lemma
    \ref{p4l1}, $\G^{+}(0,s) = 0$ for almost all $s > \d$. Using (\ref{p3e23}), we get $\sup \supp \hat b \leq \d$.

    Combining these three inequalities, we obtain the first line in (\ref{p3e11}). The proof of the
    second line in (\ref{p3e11}) is similar.
\end{proof}

In Lemma \ref{p3l6}, we proved that for any $b \in \cB_{\bu}$ there exists a unique $a \in \cA_{\bu}$
such that $|a|^2 - |b|^2 = 1$.
Now, we show that if $b \in \cB$, then the corresponding $a \in \cA$. Recall that $\cA$ and $\cB$ was defined
in Section \ref{p1}. Recall also that we introduced
the Cartwright classes of entire functions $\cE_{Cart}$ with fixed types $\t_{\pm}$ in Section \ref{p1}.
Now, we define the more general classes.
\begin{definition*}
    For any $\a,\b \geq 0$, $\cE_{Cart}(\a,\b)$ is a class of entire functions of exponential type $f$ such that
    $$
        \int_{\R} \frac{\log(1+|f(k)|)dk}{1 + k^2} < \iy,\qq \t_+(f) = \a,\qq \t_-(f) = \b,
    $$
    where $\t_{\pm}(f) = \lim \sup_{y \to +\iy} \frac{\log |f(\pm i y)|}{y}$.
\end{definition*}
\begin{remark}
    Note that $\cE_{Cart} = \cE_{Cart}(0,2\g)$.
\end{remark}
If $f \in \cE_{Cart}(\a,\b)$ for some $\a,\b \geq 0$, then it also has the Hadamard factorization.
Let $p \geq 0$ be the multiplicity of zero $k = 0$ of $f$.
We denote by $(k_n)_{n \geq 1}$ zeros of $f$ in $\C \sm \{ 0 \}$
counted with multiplicity and arranged that $0 < |k_1| \leq |k_2| \leq \ldots$.
Then $f$ has the Hadamard factorization
$$
    f(k) = C k^p e^{i \kappa k} \lim_{r \to +\iy}
        \prod_{|k_n| \leq r} \left(1 - \frac{k}{k_n}\right),\qq k \in \C,
$$
see, e.g., pp.127-130 in \cite{L96}, where the product converges uniformly
on compact subsets of $\C$ and
$$
    \kappa = \frac{\b - \a}{2},\qq C = \frac{f^{(p)}(0)}{p!},\qq
    \sum_{n \geq 1} \frac{|\Im k_n|}{|k_n|^2} < +\iy,\qq
    \exists \lim_{r \to +\iy} \sum_{|k_n| \leq r} \frac{1}{k_n} \neq \iy.
$$
\begin{lemma} \label{hll5}
    Let $b \in \cB$. Then there exist a unique solution $a \in \cA$ of the equation
    \[ \label{p7e4}
        a(k) a_*(k) - b(k) b_*(k) = 1,\qq k \in \C.
    \]\
    Moreover, the mapping $b \mapsto a$ from $\cB$ into $\cA$ is continuous.
\end{lemma}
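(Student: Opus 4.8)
The plan is to bootstrap from Lemma \ref{p3l6}. Since $\cP_\g\ss\cL_+$ we have $\cB\ss\cB_\bu$, so Lemma \ref{p3l6} yields a unique $a\in\cA_\bu$ with $|a(k)|^2-|b(k)|^2=1$ on $\R$, depending continuously on $b$; write $a=1+\cF h$ with $h\in\cL_+$, and recall from Corollary \ref{hll2} that $a$ and $a^{-1}$ both lie in $\cW_+$, so $a$ and $1/a$ are bounded on $\ol{\C_+}$ while $a_*$ and $1/a_*$ are bounded and analytic on $\C_-$, continuous and nonvanishing on $\ol{\C_-}$ (the last point using $|a|\ge1$ on $\R$). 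The remaining work is to promote this $a$ to an element of $\cA$ and to transfer uniqueness and continuity along the inclusion $\cA\ss\cA_\bu$.

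The first and main step is to show that $a$ extends to an entire function and that \er{p7e4} holds on all of $\C$. Put $G=1+bb_*$; this is entire since $b\in\cB\ss\cE_{Cart}$ is. For real $k$ one has $b(k)b_*(k)=|b(k)|^2$, so $G=|a|^2=aa_*$ on $\R$. Define $\wt a$ to be $a$ on $\ol{\C_+}$ and $G/a_*$ on $\C_-$: both pieces are analytic on their open half-planes and continuous up to $\R$, with common boundary value $a=G/\ol a$ there (because $G=|a|^2$ on $\R$), so $\wt a$ is entire by Morera's theorem. Renaming $\wt a$ as $a$, the entire functions $aa_*$ and $1+bb_*$ agree on $\R$ and hence on $\C$, which is \er{p7e4}.

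Next I verify $a\in\cA$. Conditions i)--iii) are inherited from $\cA_\bu$: on $\C_+$ the new $a$ equals the old one, hence $a\ne0$ there; likewise $|a|\ge1$ on $\R$; and $|a|^2-1=|b|^2\in L^1(\R)$ since $b=\cF\hat b$ with $\hat b\in L^2(0,\g)$, so $b\in L^2(\R)$. For iv) I bound the exponential type of $a-1$: on $\ol{\C_+}$ it equals $\cF h$, hence is bounded; on $\C_-$ it equals $(G-a_*)/a_*$, and since $b\in\cE_{Cart}$ gives $|b(k)b_*(k)|\le Ce^{2\g|\Im k|}$ while $a_*$ and $1/a_*$ are bounded on $\ol{\C_-}$, one gets $|a(k)-1|\le C'e^{2\g|\Im k|}$ there. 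Thus $a-1$ is entire of exponential type $\le2\g$, bounded on $\ol{\C_+}$, and square integrable on $\R$ (Parseval), so the Paley--Wiener theorem gives $a=1+\cF h$ with $\supp h\ss[0,\g]$. That the convex hull of $\supp h$ is exactly $[0,\g]$ requires a short separate argument: comparing (via the Titchmarsh convolution theorem) the Fourier supports of the two sides of \er{p7e4} shows $\sup\supp h=\g$, because $\supp\widehat{bb_*}$ has convex hull $[-\g,\g]$ (as $\hat b$ does not vanish near $0$ or near $\g$); and $\inf\supp h=0$ because, representing $a$ as the outer function $\exp F$ with $F$ the Schwarz integral of $\tfrac12\log(1+|b|^2)$ as in the proof of Lemma \ref{p3l1}, one finds $F(iy)\sim c/y$ as $y\to+\iy$ with $c=\tfrac1{2\pi}\int_\R\log(1+|b(t)|^2)\,dt>0$ (positive because $\hat b\not\equiv0$), so $a-1$ does not decay exponentially along $i\R_+$. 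Hence $h\in\cP_\g$ and $a\in\cA$.

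Finally, uniqueness and continuity are inexpensive, and I expect the step above to be the genuine obstacle. If $a_1,a_2\in\cA$ both satisfy \er{p7e4}, then $|a_1|^2=1+|b|^2=|a_2|^2$ on $\R$, so both solve the equation of Lemma \ref{p3l6}; since $\cA\ss\cA_\bu$, that lemma's uniqueness forces $a_1=a_2$. For continuity, Lemma \ref{p3l6} already gives continuity of $b\mapsto a$ from $\cB_\bu$ into $\cA_\bu$, and on functions supported in $[0,\g]$ the $\cL$-metric and the $L^2(0,\g)$-metric are equivalent by Cauchy--Schwarz, so the $\cB$-metric dominates the $\cB_\bu$-metric on $\cB$ and the $\cA_\bu$-metric dominates the natural $\cA$-metric; composing gives continuity $\cB\to\cA$. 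The genuinely delicate points are the entire extension and, above all, the identification of $\supp h$ with $[0,\g]$ rather than a proper subinterval, where the Cartwright-class machinery and the boundary asymptotics of the outer function $a$ toward $i\iy$ are really used. A self-contained alternative to the Morera gluing is to build $a$ directly from the zeros of $G$ lying in $\C_-$ via the Hadamard factorization (as for the Schr\"odinger case in Theorem 2.3 of \cite{K05}), exploiting that $G_*=G$ forces the zeros of $G$ to be symmetric about $\R$ and that $G>0$ on $\R$; the verification of membership in $\cA$ and the uniqueness and continuity arguments would then proceed identically.
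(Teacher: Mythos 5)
Your proof is correct, and for the central analytic step it takes a genuinely different route from the paper. The paper also starts from Lemma \ref{p3l6}, but instead of gluing $a$ on $\ol{\C_+}$ with $G/a_*$ on $\C_-$ via Morera, it constructs a candidate entire solution $a_o$ directly as the Hadamard product over the zeros of $A=1+bb_*$ lying in $\C_-$, multiplied by $|A(0)|^{1/2}e^{i\g k}$; it verifies $a_o(a_o)_*=A$, reads off $\t_-(a_o)-\t_+(a_o)=2\g$ from the factorization and $\t_+(a_o)+\t_-(a_o)=\t_+(A)=2\g$ from the functional equation to conclude $\t_+(a_o)=0$, $\t_-(a_o)=2\g$, and then identifies $a=e^{i\phi}a_o$ by the phase-uniqueness argument from the proof of Lemma \ref{p3l1}. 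This is precisely the alternative you sketch in your final sentence. What your route buys is a more elementary continuation step (no control of the zero distribution of $A$, no Lindel{\"o}f theorem), at the price of having to pin down the convex hull of $\supp h$ by hand; your Titchmarsh argument for $\sup\supp h=\g$ and the outer-function asymptotics $a(iy)-1\sim c/y$ for $\inf\supp h=0$ are both sound, and in fact they make explicit a point that the paper compresses into the single assertion that $a\in\cE_{Cart}$ and hence $h\in\cP$ by Paley--Wiener. Uniqueness and continuity are handled identically in both proofs, by reduction to Lemma \ref{p3l6} together with the equivalence of the $\cB$, $\cB_{\bu}$ and $\cA$, $\cA_{\bu}$ metrics on data supported in $[0,\g]$.
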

\begin{proof}
    Due to Lemma \ref{p3l6}, there exists a unique solution $a_0 \in \cA_{\bu}$ of equation (\ref{p7e4})
    for $k \in \R$ and it depends continuously on $b \in \cB \ss \cB_{\bu}$.
    Thus, we only need to show that $a \in \cA$ and it is a solution of (\ref{p7e4}) for any $k \in \C$.

    Let $A(k) = 1 + b(k) b_*(k)$, $k \in \C$. Since $b \in \cB \ss \cE_{Cart}$, it follows that
    $\t_+(b) = 0$, $\t_-(b) = 2\g$ and then $A \in \cE_{Cart}(2\g,2\g)$ and the following properties hold true:
    \begin{enumerate}[label={(\roman*)}]
        \item $A_* = A$ ($A(k) = 0$ iff $A(\ol k) = 0$, $k \in \C$);
        \item $A(k) \geq 1$ for any $k \in \R$.
    \end{enumerate}
    Here we used the following simple facts about conjugate functions:
    $$
        1_* = 1,\qq b_{**} = b,\qq \t_{\pm}(b_*) = \t_{\mp}(b),\qq b(z) b_*(k) = |b(k)|^2,\qq k \in \R.
    $$
    Let $(k_n)_{n \geq 1}$, be the zeros of $A$ in $\C_-$
    counted with multiplicity and arranged
    that $0 < |k_1| \leq |k_2| \leq \ldots$. Then the Hadamard factorization for $A$ has
    the following form
    \[ \label{p7e2}
        A(k) = C \lim_{r \to +\iy}
        \prod_{|k_n| \leq r} \left(1 - \frac{k}{k_n}\right)\left(1 - \frac{k}{\ol k_n}\right),\qq k \in \C,
    \]
    where the product converges uniformly on compact subsets of $\C$, the constant $C = A(0) \geq 1$, and
    \[ \label{p7e1}
        \sum_{n \geq 1} \frac{|\Im k_n|}{|k_n|^2} < +\iy,\qq
        \lim_{r \to +\iy} \sum_{|k_n| \leq r} \frac{\Re k_n}{|k_n|^2} < +\iy.
    \]
    We introduce
    \[ \label{p7e3}
        a_o(k) = |C|^{1/2} e^{i\g k} \lim_{r \to +\iy}
        \prod_{|k_n| \leq r} \left(1 - \frac{k}{k_n}\right),\qq k \in \C.
    \]
    Due to (\ref{p7e1}), we have
    $$
        \lim_{r \to +\iy} \sum_{|k_n| \leq r} \frac{1}{k_n} < +\iy
    $$
    and then, by the Lindel{\"o}f theorem (see, e.g., p.~21~in~\cite{Koo98}), the product in
    (\ref{p7e3}) converges uniformly on compact subsets of $\C$ and
    $a_o \in \cE_{Cart}(\t_+(a_o),\t_-(a_o))$, where $\t_-(a_o) - \t_+(a_o) = 2\g$.
    Using (\ref{p7e2}) and (\ref{p7e3}), we get $a_o (a_o)_* = A$, i.e. $a_o(k)$ is a solution of
    (\ref{p7e4}) for any $k \in \C$.
    It follows from (\ref{p7e4}) that $\t_+(a_o) + \t_-(a_o) = \t_+(A) = 2\g$ and then, using
    $\t_-(a_o) - \t_+(a_o) = 2\g$, we get $\t_+(a_o) = 0$, $\t_-(a_o) = 2\g$.

    Note that in the proof of Lemma \ref{p3l1},
    we show that if $a_1$ and $a_2$ are analytic in $\C_+$, $a_j(k) \neq 0$, $k \in \C_+$, $j = 1,2$,
    and they are solutions of (\ref{p7e4}), then $a_1 = e^{i\phi}a_2$ for some $\phi \in \R$.
    Thus, it follows that $a = e^{i\phi} a_o$ for some $\phi \in \R$, which yields $a \in \cE_{Cart}$.
    Since $a \in \cA_{\bu}$, there exist $h \in \cL_+$ such that $a = 1 + \cF h$. Due to $a \in \cE_{Cart}$,
    it follows from the Paley-Wiener Theorem that $h \in \cP$ and then $a \in \cA$.
\end{proof}

Above, we recover the coefficient $a$ from $b$. Now, we show that the coefficient $b$ is uniquely
determined by $a$ with additional data.
\begin{lemma} \label{hll6}
    Let $a \in \cA$ and let $\xi \in \Xi(aa_*-1)$. Then there exists a unique solution $b \in \cB$,
    $\xi(b) = \xi$, of the equation
    \[ \label{p7e5}
        a(k) a_*(k) - b(k) b_*(k) = 1,\qq k \in \C.
    \]
\end{lemma}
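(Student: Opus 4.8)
The plan is to split the claim into an existence part, which is almost immediate from the definition of $\Xi$, and a uniqueness part, which contains the real content. First I would record that $B:=aa_*-1$ is an admissible argument for $\Xi$: it is entire of exponential type (since $a$, hence $a_*$ and $aa_*$, is), and $B(k)=|a(k)|^2-1\ge 0$ for $k\in\R$ by property ii) in the definition of $\cA$. Thus $\Xi(B)$ is defined, and the hypothesis $\xi\in\Xi(B)$ means, by the very definition $\Xi(B)=\{\,\xi(b)\mid b\in\cB,\ bb_*=B\,\}$, that some $b\in\cB$ satisfies $bb_*=B$, i.e. $aa_*-bb_*=1$, and $\xi(b)=\xi$. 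This is the asserted existence.

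For uniqueness I would take two solutions $b_1,b_2\in\cB$ of $aa_*-b_j(b_j)_*=1$ with $\xi(b_1)=\xi(b_2)$ and prove $b_1=b_2$. Since $a\in\cA\subset\cE_{Cart}(0,2\g)$ we have $a_*\in\cE_{Cart}(2\g,0)$, hence $aa_*\in\cE_{Cart}(2\g,2\g)$, and therefore $B=aa_*-1\in\cE_{Cart}(2\g,2\g)$ (here $B\not\equiv0$, since $\t_-(a)=2\g$ forces $aa_*$ to have genuine exponential type $2\g$). Moreover $B_*=B$, so the zeros of $B$ are symmetric about $\R$, and since $B\ge 0$ on $\R$ every real zero, in particular $k=0$, has even multiplicity. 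Both $b_j\in\cE_{Cart}(0,2\g)$ have Hadamard factorizations $b_j(k)=C_jk^{p_j}e^{i\g k}\prod(1-k/z_n^{(j)})$. Using $b_j(b_j)_*=B$ and the fact that $w$ is a zero of $(b_j)_*$ of order $m$ iff $\bar w$ is a zero of $b_j$ of order $m$, each zero $z_0$ of $B$ of multiplicity $n_0$ is split by $b_j$ as a zero of multiplicity $m_j^{(z_0)}$ at $z_0$ and $n_0-m_j^{(z_0)}$ at $\bar z_0$. For $z_0\in\R$ this forces $m_j^{(z_0)}=n_0/2$, so $p_1=p_2=:p$ and $b_1,b_2$ carry the same multiplicity at every real point; on $\R$, $|b_1|^2=B=|b_2|^2$ gives $|C_1|=|C_2|$, and since $\xi_0(b_j)=b_j^{(p)}(0)/|b_j^{(p)}(0)|=C_j/|C_j|$ coincide we even get $C_1=C_2$.

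It then remains to show $m_1^{(z_0)}=m_2^{(z_0)}$ for every non-real zero $z_0$ of $B$, and this is precisely where the sequence $(\xi_n)_{n\ge1}$ enters. Knowing $B$ we know all the moduli of its zeros together with their conjugate-pair structure, while $\xi(b_j)$, read through the modulus ordering of the zeros of $b_j$, records at each modulus how many zeros of $b_j$ lie in $\C_+$, on $\R$, and in $\C_-$; matching those counts against the conjugate pairs of zeros of $B$ pins down $m_j^{(z_0)}$ for every $z_0$. Since $\xi(b_1)=\xi(b_2)$ these multiplicities agree, and together with $C_1=C_2$ and the common factors $k^p$, $e^{i\g k}$ the two Hadamard products coincide, so $b_1=b_2$. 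Equivalently one may set $\Phi:=b_1/b_2$; then $\Phi\Phi_*\equiv1$, so $|\Phi|\equiv1$ on $\R$ and $\Phi(0)=1$, and $\xi(b_1)=\xi(b_2)$ forces the Blaschke-type product $\Phi$ to have neither zeros nor poles, hence $\Phi\equiv1$.

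I expect the delicate point to be that last step: extracting from the sign pattern $\xi(b_j)$, plus the list of zeros of $B$, the exact multiplicity with which $b_j$ claims each zero of $B$, being careful about the modulus ordering and about the bookkeeping when several zeros share a modulus; and, if one reconstructs $b_j$ or $\Phi$ directly as a rearranged Hadamard product dictated by $\xi$, controlling the convergence of that product, which should be handled by the Lindel\"of theorem exactly as in the proof of Lemma \ref{hll5}. Everything else — the type computations, the forced multiplicities at the real zeros and at $0$, and the identification of the constants $C_j$ via $\xi_0$ — is routine.
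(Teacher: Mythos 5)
Your argument follows essentially the same route as the paper's, with one structural difference in the division of labour. You take existence for free from the definition of $\Xi(aa_*-1)$ as the image of $b\mapsto\xi(b)$ over solutions, which is legitimate; the paper instead re-derives existence by writing $b$ down explicitly as the Hadamard product (\ref{p7e8}), choosing $\z_n\in\{z_n,\ol z_n\}$ according to $\xi_n$, invoking the Lindel\"of theorem for convergence, computing the types to get $\t_+(b)=0$, $\t_-(b)=2\g$, and using $|b|^2=|a|^2-1\in L^1(\R)$ plus Paley--Wiener to conclude $b\in\cB$ --- an explicit formula that is reused later in the proofs of Theorems \ref{t14} and \ref{t12}, so it is worth having even though the bare statement does not require it. For uniqueness, your zero-splitting analysis (even multiplicity of real zeros of $B$, hence $p_1=p_2$; $|C_1|=|C_2|$ from $|b_1|=|b_2|$ on $\R$ and then $C_1=C_2$ from $\xi_0$; and the claim that $(\xi_n)_{n\ge1}$ read against the modulus ordering determines which member of each conjugate pair of zeros of $B$ belongs to $b$) is precisely what the paper compresses into the single sentence that a function of $\cE_{Cart}$ is determined by its zeros and by $\xi_0$. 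The delicate point you flag is real: when several conjugate pairs of zeros of $B$ share a modulus, the sign pattern alone does not obviously distinguish, say, the choice $\{w_1,\ol w_2\}$ from $\{\ol w_1,w_2\}$ with $|w_1|=|w_2|$, and the ordering of equal-modulus zeros is not fixed by the definition of $\xi$. But the paper's proof does not address this either --- it implicitly fixes an enumeration of the zeros of $B$ in $\ol\C_+$ and reads $\xi$ relative to it --- so your proposal is as complete as the published argument and identifies exactly where the remaining bookkeeping lives.
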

\begin{proof}
    Let $B(k) = a(k) a_*(k) - 1$, $k \in \C$. Since $a \in \cE_{Cart}$, it follows that
    $\t_+(a) = 0$, $\t_-(a) = 2\g$ and then $B \in \cE_{Cart}(2\g,2\g)$ and the following properties hold true:
    \begin{enumerate}[label={(\roman*)}]
        \item $B_* = B$ ($B(k) = 0$ iff $B(\ol k) = 0$, $k \in \C$);
        \item $B(k) \geq 0$ for any $k \in \R$.
    \end{enumerate}
    Due to (ii), the real zeros of $B$ have even multiplicity.
    Let $2p \geq 0$ be the multiplicity of the zero $k = 0$ of $B$. Then it follows from (ii)
    that $B^{(2p)}(0) > 0$. Let $(z_n)_{n \geq 1}$
    be the zeros of $B$ in $\ol \C_+ \sm \{ 0 \}$ counted with multiplicity and arranged
    that $0 < |z_1| \leq |z_2| \leq \ldots$, where we take only the half of real zeros.
    Let $\xi = (\xi_n)_{n \geq 0} \in \Xi(B)$ and let
    $$
        \z_n =
        \begin{cases}
            z_n, & \xi_n = 1\\
            z_n, & \xi_n = 0\\
            \ol z_n, & \xi_n = -1
        \end{cases},\qq n \geq 1.
    $$
    Then the Hadamard factorization for $B$ has the following form
    \[ \label{p7e6}
        B(k) = C k^{2p} \lim_{r \to +\iy}
        \prod_{|\z_n| \leq r} \left(1 - \frac{k}{\z_n}\right)\left(1 - \frac{k}{\ol \z_n}\right),\qq k \in \C,
    \]
    where the product converges uniformly on compact subsets of $\C$, the constants $C > 0$, $p \geq 0$,
    and
    \[ \label{p7e7}
        \sum_{n \geq 1} \frac{|\Im \z_n|}{|\z_n|^2} < +\iy,\qq
        \lim_{r \to +\iy} \sum_{|\z_n| \leq r} \frac{\Re \z_n}{|\z_n|^2} < +\iy.
    \]
    We introduce
    \[ \label{p7e8}
        b(k) = \xi_0 |C|^{1/2} k^{p} e^{i\g k} \lim_{r \to +\iy}
        \prod_{|\z_n| \leq r} \left(1 - \frac{k}{\z_n}\right),\qq k \in \C.
    \]
    Due to (\ref{p7e7}), we have
    $$
        \lim_{r \to +\iy} \sum_{|\z_n| \leq r} \frac{1}{\z_n} < +\iy
    $$
    and then, by the Lindel{\"o}f theorem (see e.g. p.~21~\cite{Koo98}), the product in
    (\ref{p7e8}) converges uniformly on compact subsets of $\C$ and
    $b \in \cE_{Cart}(\t_+(b),\t_-(b))$, where $\t_+(b) - \t_-(b) = 2\g$.
    Using (\ref{p7e6}) and (\ref{p7e8}), we get
    $b b_* = B$, i.e. $b$ is a solution of (\ref{p7e5}). It follows from (\ref{p7e5}) that
    $\t_+(b) + \t_-(b) = \t_+(B) = 2\g$ and then $\t_+(b) = 0$, $\t_-(b) = 2\g$.
    Let $b_1$ be another solution of equation (\ref{p7e4}) such that $b_1 \in \cE_{Cart}$
    and $\xi(b_1) = \xi$. Since the function from $\cE_{Cart}$ is uniquely determined by its
    zeros and by $b^{(p)}(0)/|b^{(p)}(0)| = \xi_0$, it follows that $b_1 = b$.

    Now, we show that $b \in \cB$. Due to $a \in \cA$, we get $|b|^2 = |a|^2 - 1 \in L^1(\R)$
    and then $b \in L^2(\R)$.
    Since $\t_+(b) = 0$ and $\t_-(b) = 2\g$, it follows from the Paley-Wiener
    theorem that $b = \cF h$ for some $h \in \cP$, which yields $b \in \cB$.
\end{proof}

\section{Proof of the main theorems} \label{p4}

\begin{proof}[\bf Proof of Theorem \ref{t1}]
    In Theorem \ref{hlt2}, we have shown that the mapping $q \mapsto b(\cdot,q)$ is a homeomorphism
    between $\cL$ and $\cB_{\bu}$. Since the metrics on $\cP$ and $\cL$ and on $\cB$ and $\cB_{\bu}$ are equivalent,
    we only need to prove that the restriction of this mapping on $\cP$ is a bijection between $\cP$ and $\cB$.
    Due to Theorem \ref{hlt2}, if $q \in \cP \ss \cL$, then we have $b(\cdot,q) \in \cB_{\bu}$.
    Using Lemma \ref{p3l2}, we get
    $$
        \sup \supp \hat b = \sup \supp q = \g,\qq \inf \supp \hat b = \inf \supp q = 0,
    $$
    which yields that $\hat b \in \cP$ and then we have $b(\cdot,q) \in \cB$.

    Let $b \in \cB \ss \cB_{\bu}$. Then, by Theorem \ref{hlt2}, there exists a unique $q \in \cL$
    such that $b(\cdot,q) = b$. Since $b \in \cB$, we have $\hat b \in \cP$.
    Then it follows from Lemma \ref{p3l2} that
    $$
        \sup \supp \hat b = \sup \supp q = \g,\qq \inf \supp \hat b = \inf \supp q = 0,
    $$
    which yields that $q \in \cP$.
\end{proof}

\begin{proof}[\bf Proof of Theorem \ref{t2}]
    Let $q \in \cP \ss \cL$. By Theorem \ref{t1}, we have $b = b(\cdot,q) \in \cB$
    and then there exists a unique $\xi = \xi(b) \in \Xi(B)$, where $B = b b_*$.
    Since $b \in \cB$ and $a = a(\cdot,q)$ is a solution of $a a_* - b b_* = 1$,
    it follows from Lemma \ref{hll5} that $a(\cdot,q) \in \cA$.

    Let $(a,\xi) \in \cA \ts \Xi(B)$, where $B = a a_* - 1$. Due to Lemma \ref{hll6},
    there exists a unique solution $b \in \cB$ of the equation $a a_* - b b_* = 1$ such that $\xi(b) = \xi$
    and then, by Theorem \ref{t1}, there exist a unique $q \in \cP$ such that $b(\cdot) = b(\cdot,q)$.
    Moreover, it follows from Lemma \ref{hll5} that $a(\cdot,q) \in \cA$ is uniquely determined by
    $b(\cdot,q) \in \cB$ as a solution of the equation $a a_* - b b_* = 1$.
    Hence, we have $a(\cdot) = a(\cdot,q)$.
\end{proof}

\begin{proof}[\bf Proof of Theorem \ref{t11}]
    Due to Theorem \ref{hlt1}, the mappings $q \mapsto r_{\pm}(\cdot,q)$ are homeomorphisms
    between $\cL$ and $\cR_{\bu}$. Since the metrics on $\cP$ and $\cL$ and on $\cR^{\pm}$ and $\cR_{\bu}$
    are equivalent, we only need to prove that the restrictions of these mappings on $\cP$ are
    bijections between $\cP$ and $\cR^{\pm}$.

    Let $q \in \cP$. Using Theorems \ref{t1} and \ref{t2}, we have $b = b(\cdot,q) \in \cB$ and $a = a(\cdot,q) \in \cA$.
    Since $r_+(\cdot,q) = -\frac{\ol b}{a}$ and $r_-(\cdot,q) = \frac{b}{a}$, it follows from
    the definitions of $\cR^{\pm}$ that $r_{\pm}(\cdot, q) \in \cR^{\pm}$.

    Let $r_{\pm} \in \cR^{\pm}$. Then, by the definitions of $\cR^{\pm}$,
    there exist $a \in \cA$ and $b \in \cB$ such that $|a|^2 - |b|^2 =1$ and
    $r_+ = -\frac{\ol b}{a}$ (or $r_- = \frac{b}{a}$).
    Using Theorems \ref{t1} and \ref{t2}, we get that there exists a unique $q \in \cP$ such that
    $b = b(\cdot,q)$ and $a = a(\cdot,q)$. Moreover, due to Lemmas \ref{p3l4} and \ref{p3l5},
    $a$ and $b$ are uniquely determined by $r_+$ (or $r_-$). Hence, we get $r_{\pm} = r_{\pm}(\cdot,q)$.
\end{proof}

\begin{proof}[\bf Proof of Corollary \ref{t3}]
    i) By Theorem \ref{t1}, each $q \in \cP$ is uniquely determined by
    $b = b(\cdot,q) \in \cB$. Hence, we get $\hat b \in \cP$ and it follows from
    the Paley-Wiener Theorem that $b \in \cE_{Cart}$ and it satisfies
    (\ref{p2e13} -- \ref{p2e15}).
    Due to (\ref{p2e13}), $b \in \cB$ is uniquely determined by its zeros,
    by the multiplicity $p$ of the zero $k = 0$ of $b$, and
    by $b^{(p)}(0) \in \C \sm \{ 0 \}$.

    ii) By Theorem \ref{t2}, each $q \in \cP$ is uniquely determined by
    $a = a(\cdot,q) \in \cA$ and by $\xi = \xi(b(\cdot,q)) \in \Xi(B)$, where $B = a a_*-1$.
    Hence, we get $a = 1 + \cF h$ for some $h \in \cP$ and it follows from the Paley-Wiener Theorem
    that $a \in \cE_{Cart}$ and it satisfies (\ref{p2e13} -- \ref{p2e15}).
    Recall that if $a \in \cA$, then $\t_+(a) = 0$,
    $\t_-(a) = 2\g$, $a(0) \neq 0$, and
    $a(k) = 1 + o(1)$ as $k \to \pm \iy$. Thus, using (\ref{p2e13}), we see that $a \in \cA$
    is uniquely determined by its zeros.
\end{proof}

\begin{proof}[\bf Proof of Corollary \ref{c2}]
    By Theorem \ref{t2}, the mapping $q \mapsto (a,\xi)$ is a bijection between $\cP$ and
    $\cA \ts \Xi(B)$, where $a = a(\cdot,q)$, $\xi = \xi(b(\cdot,q))$, and $B = aa_* - 1$.
    Let $q_o \in \cP$. We consider the restriction of the mapping
    $q \mapsto (a,\xi)$ onto $\Iso(q_o)$. It follows from the definition of $\Iso(q_o)$ that
    $a(\cdot,q) = a(\cdot,q_o)$ for any $q \in \Iso(q_o)$ and then the mapping $q \mapsto \xi$
    is a bijection between $\Iso(q_o)$ and $\Xi(B)$, where $B = a(\cdot,q_o)a_*(\cdot,q_o) - 1$.
\end{proof}

\begin{proof} [\bf Proof of Theorem \ref{t14}]
    Let $a = a(\cdot,q) = a(\cdot,q^o)$ for some $q,q_o \in \cP$. Then we have
    $b_1 = b(\cdot,q) \in \cB$ and $b_o = b(\cdot,q^o) \in \cB$ are solutions of the equation
    $a a_* - b b_* = 1$. Let $(\vk_n)_{n \geq 1}$ be zeros of $B = b_o (b_o)_* = b_1 (b_1)_*$ in
    $\C \sm \{ 0 \}$ counted with multiplicity. We introduce
    $$
        \z_n =
        \begin{cases}
            \vk_n, & \xi_n(b_o) \geq 0\\
            \ol{ \vk_n}, & \xi_n(b_o) < 0
        \end{cases},\qq
        z_n =
        \begin{cases}
            \vk_n, & \xi_n(b_1) \geq 0\\
            \ol{ \vk_n}, & \xi_n(b_1) < 0
        \end{cases},\qq n \geq 1.
    $$
    Thus, $(\z_n)_{n \geq 1}$ are zeros of $b_o$ in $\C \sm \{ 0 \}$ and
    $(z_n)_{n \geq 1}$ are zeros of $b_1$ in $\C \sm \{ 0 \}$ counted with multiplicity.
    Using (\ref{p7e8}), we obtain
    $$
        \begin{aligned}
            b_o(k) &= \xi_0(b_o) |C|^{1/2} k^{p} e^{i\g k} \lim_{r \to +\iy}
            \prod_{|\z_n| \leq r} \left(1 - \frac{k}{\z_n}\right),\qq k \in \C,\\
            b_1(k) &= \xi_0(b_1) |C|^{1/2} k^{p} e^{i\g k} \lim_{r \to +\iy}
            \prod_{|z_n| \leq r} \left(1 - \frac{k}{z_n}\right),\qq k \in \C,
        \end{aligned}
    $$
    for some $C \in \C$ and $p \geq 0$, which yields
    $$
        P(k) = \frac{b_1(k)}{b_o(k)} = e^{i\a} \lim_{r \to +\iy} \prod_{z_n \in G,\, |z_n| < r}
        \left(1 - \frac{k}{z_n} \right) \left( 1 - \frac{k}{\ol{z_n}} \right)^{-1},\qq k \in \C,
    $$
    where $e^{i\a} = \xi_0(b_1)/\xi_0(b_o)$ and $G = \{ \, z_n,\, n \geq 1 \, \mid \, \xi_n(b_o) \neq \xi_n(b_1) \, \}$.

    Let $q_o \in \cP$ and let $(z_n)_{n \geq 1}$ be zeros of $b(\cdot,q_o)$ in $\C \sm \{ 0 \}$
    counted with multiplicity. Let $b(\cdot,q) = b(\cdot,q_o) P$ for some $q \in \cP$,
    where $P$ is given by (\ref{p1e3}) for some $\a \in \R$ and non-real subsequence $G \ss (z_n)_{n \geq 1}$.
    Recall that $|P(k)| = 1$ for any $k \in \R$.
    Hence, we have $|b(k,q)| = |b(k,q_o)|$ for any $k \in \R$. Due to Lemma \ref{p3l6}, $a(\cdot,q)$
    is uniquely determined by $|b(k,q)|$, $k \in \R$, and then $a(\cdot,q) = a(\cdot,q_o)$.
    Therefore, we have $q \in \Iso(q_o)$.
\end{proof}

\begin{proof}[\bf Proof of Theorem \ref{t12}]
    Let $q \in \cP$. Then $a = a(\cdot,q) \in \cE_{Cart}$ and it has the Hadamard factorization (\ref{p7e3}):
    $$
        a(k) = e^{i\phi}|C|^{1/2} e^{i\g k} \lim_{r \to +\iy}
        \prod_{|k_n| \leq r} \left(1 - \frac{k}{k_n}\right),\qq k \in \C
    $$
    for some $\phi \in \R$ and $|C| = |a(0)|^2$, where the product converges uniformly on compact subsets of $\C$.
    Hence, we get
    $$
        |a(k)| = |a(0)| \lim_{r \to +\iy}
        \prod_{|k_n| \leq r} \left|1 - \frac{k}{k_n}\right|,\qq k \in \C,
    $$
    which yields that
    \[ \label{p4e21}
        \log |a(k)| = \log|a(0)| + \lim_{r \to +\iy}
        \sum_{|k_n| \leq r} \log \left|1 - \frac{k}{k_n}\right| + 2\pi i N,\qq k \in \C,
    \]
    for some $N \in \Z$, where we fixed the branch of the logarithm by $\log x \in \R$ for $x \in \R_+$
    (see, e.g., p.~16 in \cite{T58}).
    Substituting $k = 0$ in (\ref{p4e21}), we get $N = 0$ and then we obtain (\ref{p1e19}).

    Let $(z_n)_{n \geq 1}$ be zeros of $b = b(\cdot,q)$ in $\C \sm \{ 0 \}$. Let also $\xi = \xi(b)$
    and let $p$ be the multiplicity of the zero $k = 0$ of $b(k)$.
    Recall that $b \in \cE_{Cart}$ and it has the Hadamard factorization (\ref{p7e8}):
    \[ \label{p4e23}
        b(k) = \xi_0 |C|^{1/2} k^{p} e^{i\g k} \lim_{r \to +\iy}
        \prod_{|z_n| \leq r} \left(1 - \frac{k}{z_n}\right),\qq k \in \C
    \]
    for some $C \in \R$, where the product converges uniformly on compact subsets of $\C$.
    Let $I \ss \R$ be an open interval such that $b(k) \neq 0$ for any $k \in I$. Then we have
    $$
        \int_{k_1}^{k_2} \frac{b'(s)}{b(s)} ds = \ln b(k_2) - \ln b(k_1),\qq k_1,k_2 \in I,
    $$
    which yields
    \[ \label{s4e1}
        \arg b(k_2) = \arg b(k_1) + \int_{k_1}^{k_2} \Im \frac{b'(s)}{b(s)} ds,\qq k_1,k_2 \in I.
    \]
    Using (\ref{p4e23}), we obtain
    \[ \label{s4e3}
        \Im \frac{b'(s)}{b(s)} = \g + \sum_{n = 1}^{\iy} \xi_n \frac{|\Im z_n|}{|s - z_n|^2} = \g + w(s),\qq
        s \in \R \sm (z_n)_{n \geq 1},
    \]
    where the series converges absolutely and uniformly on compact subsets of
    $\R \sm (z_n)_{n \geq 1}$, since (\ref{p7e7}) holds true.

    Using (\ref{s4e1}) and (\ref{s4e3}),
    we can calculate $\arg b$ between its zeros on $\R$. Since $b$ is entire, it has finitely many
    zeros on any compact interval. Now, we describe how $\arg b$ changes in neighborhood of its zero.
    Let $z_o \in \R$ be a zero of $b$ with the multiplicity $n$. Then we have $b(k) = C(k-z_o)^n + o(k)$
    as $k \to z_o$. Using this asymptotics, we obtain
    \[ \label{s4e2}
        \lim_{\ve \to +0} (\arg(b(k+\ve) - \arg(b(k-\ve))) = -\pi n.
    \]
    Combining (\ref{s4e1}) and (\ref{s4e2}), we get
    $$
        \arg b(k) = \arg b(0) + \g k - \pi I(k) + \int_0^k w(s) ds,\qq k \in \R \sm (z_n)_{n \geq 1},
    $$
    where $I$ and $w$ are given by (\ref{p1e21}). Finally, if $b(0) \neq 0$, then it follows from
    (\ref{p4e23}) that $\arg b(0) = \arg \xi_0$, which yields (\ref{p1e20}).
\end{proof}

\begin{proof}[\bf Proof of Theorem \ref{t5}]
    Let $q \in \cP$. By Theorem \ref{t2}, $a = a(\cdot,q) \in \cA$ and
    then there exists $h \in \cP$ such that $a = 1 + \cF h$. It is well-known that
    the set of smooth compactly supported functions $C_o^{\iy}(0,\g)$ is dense in $L^2(0,\g)$.
    Thus, for any $\ve > 0$, there exists $h_1 \in C_o^{\iy}(0,\g)$ such that $h = h_1 + h_2$ and
    $\| h_2 \|_{L^2(0,\g)} < \ve |\g|^{-1/2}$.
    Let $a(k) = 0$ for some $k \in \C_-$. Then we have $\cF h(k) = -1$.
    Estimating the left-hand side of this identity, we get
    \[ \label{p4e18}
        |\cF h_1(k)| + |\cF h_2(k)| \geq 1.
    \]
    Since $h_1 \in C_o^{\iy}(0,\g)$, we obtain
    \[ \label{p4e19}
        \begin{aligned}
            |\cF h_1(k)| &\leq \left| \int_0^{\g} h_1(s) e^{2iks} ds \right| =
            \left| \frac{1}{2k}\int_0^{\g} h'_1(s) e^{2iks} ds \right| \\
            &\leq \frac{1}{2|k|} \int_0^{\g} |h'_1(s)| e^{-2s \Im k} ds \leq
            \frac{e^{-2\g \Im k}}{2|k|} \|h'_1 \|_{L^1(0,\g)} = C e^{-2 \g \Im k}\frac{1}{|k|}.
        \end{aligned}
    \]
    Using $\| h_2 \|_{L^1(0,\g)} \leq \sqrt{\g} \| h_2 \|_{L^2(0,\g)} =
    \ve$, we have
    \[ \label{p4e20}
        |\cF h_2(k)| \leq \int_0^{\g} |h_2(s)| e^{-2s\Im k} ds \leq
        e^{-2 \g \Im k} \| h_2 \|_{L^1(0,\g)} = \ve e^{-2 \g \Im k}.
    \]
    Substituting (\ref{p4e19}) and (\ref{p4e20}) in (\ref{p4e18}), we get
    $$
        e^{-2\g \Im k} \left(\ve + \frac{C}{|k|} \right) \geq 1,
    $$
    which yields (\ref{p1e1}). Now we consider (\ref{p1e1}) for $|k_n| \to \iy$.
    For fixed $\ve > 0$ and $C \geq 0$, we~get
    $$
        2\g \Im k_n \leq \ln(\ve) + O(|k|^{-1}).
    $$
    Thus, there are finitely many resonances such that $\Im k_n > \ln(\ve)$.
    Since it holds for any $\ve > 0$, we complete the proof of the theorem.
\end{proof}

\begin{proof}[\bf Proof of Theorem \ref{t4}]
    For simplicity, we consider the case when $N = 1$. We introduce
    $$
        B(k) = \frac{k-z_1}{k-z_0},\qq R(k) = \frac{1}{k-z_0},\qq k \in \C \sm \{z_0\}.
    $$
    Firstly, we show that $b_1 = Rb \in \cB$, where $b(\cdot) = b(\cdot,q^o)$ for some $q^o \in \cP$.
    Since $b \in \cE_{Cart}$ and $b(z_0) = 0$,
    it follows that $b_1$ is entire. Using the definition of $\t_{\pm}$, we have
    $$
        \t_{\pm}(b_1) = \t_{\pm}(b) + \t_{\pm}(R) = \t_{\pm}(b),
    $$
    where $\t_{\pm}(R) = 0$, since $\log(R(\pm i y)) \to -\log y$ as $y \to +\iy$. Thus, we get $b_1 \in \cE_{Cart}$.
    Now, we show that $b_1 \in L^2(\R)$. Since $b_1 \in \cE_{Cart}$, it follows that, for any $\ve > 0$,
    there exists $C > 0$ such that $|b_1(k)| < C$ for each $|k-z_0| < \ve$. Note that
    $|R(k)| < 1/\ve$ for each $|k-z_0| > \ve$. Using these estimates, we get
    $$
        \int_{\R} |b_1(s)|^2 ds = \int_{\R \sm |k-z_0| < \ve} |b_1(s)|^2 ds +
        \int_{\R \cap |k-z_0| < \ve} |b_1(s)|^2 ds \leq \frac{1}{\ve^2} \int_{\R} |b(s)|^2 ds + 2 \ve C^2 < +\iy,
    $$
    which yields $b_1 \in L^2(\R)$. It follows from the Paley-Wiener Theorem
    that $b_1 = \cF g$
    for some $g \in \cP$ and then we get $b_1 \in \cB$. Thus, by Theorem \ref{t1}, there exists a unique
    $p \in \cP$ such that $b_1 = b(\cdot,p)$.

    Secondly, we show that $b_2 = Bb \in \cB$. As above, we get $Bb \in \cE_{Cart}$. Now, we show that
    $b_2 \in L^2(\R)$. Using $Rb \in L^2(\R)$, we obtain
    \[ \label{p5e11}
        \|b_2 - b\|_{L^2(\R)} = \|Bb - b\|_{L^2(\R)} = \|(z_1-z_0)Rb\|_{L^2(\R)} \leq |z_1 - z_0| \|Rb\|_{L^2(\R)} < +\iy.
    \]
    Using (\ref{p5e11}) and $b \in L^2(\R)$, we get $b_2 \in L^2(\R)$ and then we have $b_2 \in \cB$.
    Thus, it follows from Theorem
    \ref{t1} that there exists a unique $q \in \cP$ such that $b_2(\cdot) = b(\cdot,q)$.

    Thirdly, we show that $\rho_{\cB}(b_2,b) \to 0$ as $z_1 \to z_0$. Using (\ref{p2e10}),
    (\ref{p5e11}), and the Plancherel theorem (see e.g. Theorem IX.6 in \cite{RS80}), we get
    $$
        \rho_{\cB}(b_2,b) = \| b_2 - b \|_{L^2(\R)} \leq |z_1 - z_0| \|Rb\|_{L^2(\R)}.
    $$
    Thus, it follows that $\rho_{\cB}(b_2,b) \to 0$ as $z_1 \to z_0$.
    By Theorems \ref{t1} and \ref{hlt1}, the mapping
    $q \mapsto b(\cdot,q)$ is a homeomorphism between $\cP$ and $\cB$ and the mapping
    $q \mapsto r_{\pm}(\cdot,q)$ is a homeomorphism between $\cP$ and $\cR^{\pm}$. Thus, we have
    $\rho_{\cP}(q_o,q) \to 0$ and $\rho_{\cR}(r_{\pm}(\cdot,q_o),r_{\pm}(\cdot,q)) \to 0$ as $z_1 \to z_0$.
\end{proof}

\begin{proof} [\bf Proof of Theorem \ref{t8}]
    Let $a_o = a(\cdot,q^o) \in \cA_{symm}$ for some $q^o \in \cP$ and let $a_o(k_o) = 0$ for some
    $k_o \in \C_-$. Let $k_1 \in \C_-$. We introduce
    $$
        S(k) = \frac{(k - k_1)(k + \ol k_1)}{(k - k_o)(k + \ol k_o)} =
        \left(1 + \frac{c}{k-k_o}\right)\left(1 + \frac{\ol c}{k-\ol{k_o}}\right),\qq k \in \C,
    $$
    where $c = k_1 - k_o$. Due to $k_o \in \C_{-}$, we have $S \in L^{\iy}(\R) \cap C(\R)$.
    We establish other properties of $S$.
    By direct calculation, we get $S_*(k) = S(-k)$, $k \in \C$. Thus, we have
    $$
        \begin{aligned}
            |S(k)|^2 &= S(k) S(-k) =
            \left(1 + \frac{c}{k-k_o}\right)\left(1 + \frac{\ol c}{k-\ol{k_o}}\right)
            \left(1 - \frac{c}{k+k_o}\right)\left(1 - \frac{\ol c}{k+\ol{k_o}}\right) \\
            &= \left(1 - \frac{c(k_1+k_o)}{k^2-k_o^2}\right)
            \left(1 - \frac{\ol c (\ol{k_1} + \ol{k_o})}{k^2-\ol{k_o^2}}\right) =
            \left(1 - \frac{\vk}{E^2-E_o^2}\right)
            \left(1 - \frac{\ol{\vk}}{E^2-\ol{E_o^2}}\right)\\
            &= 1 - \frac{2 E \Re \vk - 2 \Re(\vk \ol{E_o}) - |\vk|^2}{|E-E_o|^2} =
            1 - \frac{2 E \Re \vk -|k_1|^4 + |k_o|^4}{|E-E_o|^2}\\
            &= 1 - G(E),
        \end{aligned}
    $$
    where $\vk = k_1^2 - k_o^2$, $E = k^2$, and $E_o = k_o^2$. Using (\ref{p6e10}), we have
    $G(E) \leq 0$, $E \geq 0$, and then $|S(k)| \geq 1$ for any $k \in \R$. Since $G(E) = O(E^{-1})$ as $E \to \iy$,
    we have $|S(k)|^2 - 1 = O(k^{-2})$ as $k \to \iy$. Due to $S \in L^{\iy}(\R) \cap C(\R)$,
    it follows that $|S|^2 - 1 \in L^1(\R)$. Now, we consider $S-1$:
    \[ \label{p6e17}
        \begin{aligned}
            S(k) - 1 &= \frac{(k - k_1)(k + \ol k_1) - (k - k_o)(k + \ol k_o)}{(k - k_o)(k + \ol k_o)}\\
            &= \frac{2ik(\Im k_o - \Im k_1) + |k_o|^2 - |k_1|^2}{k^2 - 2ik\Im k_o - |k_o|^2},\qq k \in \C.
        \end{aligned}
    \]
    Using (\ref{p6e17}), we have $S(k) = 1 + O(k^{-1})$ as $k \to \iy$.
    Since $S \in L^{\iy}(\R) \cap C(\R)$, it follows that $\| S - 1 \|_{L^2(\R)} < \iy$.

    Now we show that $a = Sa_o \in \cA_{symm}$.

    1) Since $a_o(k_o) = 0$, $k_1 \in \C_-$, and $a_o \in \cA_{symm}$, it follows that
    $a(-\ol{k_o}) = 0$ and then $a$ is entire and $a(k) \neq 0$ for any $k \in \C_+$.
    As in proof of Theorem \ref{t4}, we obtain $\t_{\pm}(S) = 0$, which yields
    $\t_{\pm}(a) = \t_{\pm}(a_o)$. Due to $S, a_o \in L^{\iy}(\R)$, it follows that $a \in \cE_{Cart}$.

    2) Since $|S(k)| \geq 1$ and $|a_o(k)| \geq 1$ for any $k \in \R$, we have $|a(k)| \geq 1$
    for any $k \in \R$.

    3) By direct calculation, we get
    \[ \label{p6e1}
        |a|^2 - 1 = (|a_o|^2 - 1)|S|^2 + |S|^2 - 1.
    \]
    Since $|a_o|^2 - 1 \in L^1(\R)$, $|S|^2 - 1 \in L^1(\R)$, and $S \in L^{\iy}(\R)$, it follows
    from (\ref{p6e1}) that $|a|^2 - 1 \in L^1(\R)$.

    4) We have
    $$
        \| a - 1\|_{L^2(\R)} = \| (a_o - 1)S + S-1\|_{L^2(\R)} \leq
        \| S \|_{L^{\iy}(\R)} \| a_o - 1 \|_{L^2(\R)}  + \| S - 1 \|_{L^2(\R)}.
    $$
    Since $a_o - 1 \in L^2(\R)$, $S - 1 \in L^2(\R)$, and $S \in L^{\iy}(\R)$, we have
    $a - 1 \in L^2(\R)$. Recall that $a-1 \in \cE_{Cart}$. Then, using the Paley-Wiener Theorem,
    we get $a = 1 + \cF h$ for some $h \in \cP$.

    Thus, we have $a \in \cA$. Due to $S*(k) = S(-k)$ and $a*(k) = a(-k)$ for any $k \in \C$,
    it follows that $a_1 \in \cA_{symm}$. As we noted above, $\Xi(a a_* - 1) \neq \es$. Then,
    by Theorem \ref{t2}, there exists $q \in \cP$ such that $a = a(\cdot,q)$.
\end{proof}

\section{Symmetries of the potentials} \label{p5}
In this section, we solve the inverse problem for the Dirac operators with potentials, which have
some symmetries. Namely, we consider real-valued, even, and odd potentials. Firstly, we consider
some simple transformations of the potential and show how the scattering data changes under such transformations.
We give the following simple lemma.
\begin{lemma} \label{p9l1}
    Let $a = a(\cdot,q)$, $b = b(\cdot,q)$, $r_{\pm} = r_{\pm}(\cdot,q)$ for some $q \in \cL$
    and let $p \in \cL$.
    Then the following statements hold true:
    \begin{enumerate}[label={\roman*)}]
        \item We have $p(x) = q(-x)$, $x \in \R$, if and only if
        $$
            a(k,p) = \ol{a(-k)},\qq b(k,p) = b(-k,q),\qq r_{\pm}(k,p) = - \ol{r_{\mp}(-k)},\qq k \in \R.
        $$
        \item We have $p = \ol q$ if and only if
        $$
            a(k,p) = \ol{a(-k)},\qq b(k,p) = \ol{b(-k)},\qq r_{\pm}(k,p) = \ol{r_{\pm}(-k)},\qq k \in \R.
        $$
        \item Let $\a \in \R$. Then we have $p = e^{i\a}q$ if and only if
        $$
            a(k,p) = a(k),\qq b(k,p) = e^{-i\a} b(k),\qq r_{\pm}(k,p) = e^{\pm i\a} r_{\pm}(k),\qq k \in \R.
        $$
        \item Let $s \in \R$. Then we have $p(x) = q(x+s)$, $x \in \R$, if and only if
        $$
            a(k,p) = a(k),\qq b(k,p) = e^{-2iks} b(k),\qq r_{\pm}(k,p) = e^{\pm 2iks} r_{\pm}(k),\qq k \in \R.
        $$
        \item Let $s \in \R$. Then we have $p(x) = e^{2isx}q(x)$, $x \in \R$, if and only if
        $$
            a(k,p) = a(k-s),\qq b(k,p) = b(k-s),\qq r_{\pm}(k,p) = r_{\pm}(k-s),\qq k \in \R.
        $$
    \end{enumerate}
\end{lemma}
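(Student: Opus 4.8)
The plan is to prove all five equivalences by one mechanism. For each transformation $q \mapsto p$ in the list I would construct an explicit ``gauge transform'' carrying the Jost solutions of $q$ to those of $p$, read off the transformed transition matrix from $f^+ = f^- A$, and then extract $a$, $b$, and $r_\pm$. Concretely, I look for a matrix $M$ (constant, except a diagonal $x$-dependent one in case (v)), a change of argument $\chi(x) \in \{x, -x, x+s\}$, a constant factor $N$, and an affine shift $\kappa = \kappa(k)$ of the spectral parameter such that
$$
    g^\pm(x,k) := M(x)\, f^{\sigma(\pm)}(\chi(x), \kappa(k), q)\, N
$$
solves the Dirac equation $(g^\pm)'(x,k) = Q_p(x) g^\pm(x,k) + ik\sigma_3 g^\pm(x,k)$, where $\sigma(\pm)=\pm$ except in case (i), where $\chi(x)=-x$ swaps the labels. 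Substituting the ansatz and using $(f^\pm)' = Q_q f^\pm + i\kappa\sigma_3 f^\pm$ reduces this to a pair of algebraic identities relating $M$, $\chi'$, $\kappa$ to $Q_p$, $Q_q$, $\sigma_3$, readily solved case by case: (i) $p(x)=q(-x)$: $M=\sigma_3$, $\chi(x)=-x$, $\kappa=-k$, using $\sigma_3 Q_q \sigma_3 = -Q_q$; (ii) $p=\bar q$: replace $f^\pm(x,-k)$ by its entrywise complex conjugate, using that $\sigma_3$ is real and $\overline{Q_q}=Q_{\bar q}$; (iii) $p=e^{i\alpha}q$: $M=\diag(e^{i\alpha/2},e^{-i\alpha/2})$, $\chi(x)=x$, $\kappa=k$; (iv) $p(x)=q(x+s)$: $M=I$, $\chi(x)=x+s$, $\kappa=k$; (v) $p(x)=e^{2isx}q(x)$: $M(x)=e^{isx\sigma_3}$, $\chi(x)=x$, $\kappa=k-s$, the term $M'M^{-1}=is\sigma_3$ absorbing the spectral shift.

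Next I would pin down the remaining constant $N$ (and any constant prefactor) by imposing the normalization $g^\pm(x,k)=e^{ikx\sigma_3}(1+o(1))$ as $x\to\pm\infty$; all matrices involved are diagonal, hence commute with $e^{ikx\sigma_3}$, and $e^{ikx\sigma_3}$ is bounded for $k\in\R$, so the $o(1)$ terms are preserved. By uniqueness of Jost solutions this identifies $g^\pm$ with $f^\pm(\cdot,\cdot,p)$. Substituting into $g^+ = g^- A_p$ together with $f^+(\cdot,\kappa,q) = f^-(\cdot,\kappa,q)A_q(\kappa)$ expresses $A_p(k)$ as a conjugate of, complex conjugate of, reflection of, or argument-shift of $A_q$; comparing the $(1,1)$ and $(2,1)$ entries with $A = \left(\begin{smallmatrix} a & \bar b \\ b & \bar a \end{smallmatrix}\right)$ from Lemma \ref{hll4} yields the stated formulas for $a(\cdot,p)$ and $b(\cdot,p)$, and the formulas for $r_\pm(\cdot,p)$ then follow from $r_+ = -\bar b/a$, $r_- = b/a$ on $\R$ and $|a|^2-|b|^2=1$. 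I would also note at the outset that each transformation maps $\cL$ into $\cL$.

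For the converse implications no further computation is needed. If $p\in\cL$ has scattering data satisfying the stated identities, apply the forward direction to the explicit potential $\tilde p$ defined by the transformation (e.g. $\tilde p(x)=q(-x)$ in case (i)); then $r_+(\cdot,p)=r_+(\cdot,\tilde p)$, and since $q\mapsto r_+(\cdot,q)$ is injective on $\cL$ by Theorem \ref{hlt1}, we get $p=\tilde p$. The only real work is thus the bookkeeping in the forward direction --- choosing $M$, $\chi$, $\kappa$ correctly in each case and tracking the $o(1)$ terms in the normalization --- and I expect no genuine obstacle.
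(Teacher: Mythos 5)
Your proposal is correct, and the forward direction coincides with the paper's: the paper proves case i) by exactly your gauge-transform ansatz, setting $h^{\pm}(x)=\s_3 f^{\mp}(-x,-k,q)\s_3$, verifying the Dirac equation via $-\s_3 Q(-x)\s_3=Q(-x)$ and the asymptotics, and reading off $A(k,p)=\s_3A(-k,q)^{-1}\s_3$; for ii)--v) it records precisely the conjugations you list (it writes case ii) as $\s_1 f^{\pm}(x,-k,q)\s_1$ rather than an entrywise conjugate, but for real $k$ these agree by the symmetry \textrm{(\ref{p3e16})}). The only place you genuinely diverge is the converse. The paper passes the assumed identity $r_{\pm}(k,p)=-\ol{r_{\mp}(-k)}$ through the Fourier transform to get $F_{\pm}(\cdot,p)=-\ol{F_{\mp}}$, hence $\Omega_{\pm}(\cdot,p)=\s_3\Omega_{\mp}(-\cdot)\s_3$, then shows that $\G^{\pm}_o(x,s)=\s_3\G^{\mp}(-x,-s)\s_3$ solves the transformed GLM equation, and concludes $p(x)=q(-x)$ from uniqueness of the GLM solution and $q(x)=-\G^+_{12}(x,0)$. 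You instead apply the forward direction to the explicitly transformed potential $\tilde p$ and invoke injectivity of $q\mapsto r_+(\cdot,q)$ on $\cL$ from Theorem \ref{hlt1}; this is shorter and perfectly legitimate, since that theorem is already established before the lemma is used. What the paper's longer route buys is the explicit transformation formulas for $\G^{\pm}$, $F_{\pm}$, $\Omega_{\pm}$, which it states for all five cases. Two cosmetic points: your ``constant factor $N$'' is constant in $x$ but $k$-dependent in case iv) (there $N=e^{-iks\s_3}$); and in case ii) the entrywise conjugation argument should be phrased for $k\in\R$ only, which is all the statement requires.
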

\begin{remark}
    1) Recall that $\frac{1}{\pi} \log |a(k,q)|$ is the action variable for the defocusing NLS equation.
    Due to iii) and iv), the action variable does not change when the potential is shifted
    along the real line or multiplied by the phase factor $e^{i\a}$.

    2) Above, we consider potentials from $\cP$ such that its support is $[0,\g]$ for some $\g > 0$.
    Using iv), we can easily adapt these results to the case when the support is $[d,d+\g]$ for any
    $d \in \R$ and $\g > 0$.
\end{remark}
\begin{proof}
    i) Let $p(x) = q(-x)$, $x \in \R$, and let $k \in \R$. We consider
    \[ \label{p9e1}
        h^{\pm}(x) = \s_3 f^{\mp}(-x,-k,q)\s_3,\qq x \in \R.
    \]
    Differentiating $h^{\pm}(x)$ by $x$, we get
    $$
        \begin{aligned}
            (h^{\pm})'(x) &= -\s_3(f^{\mp})'(-x,-k,q)\s_3\\
            &= -\s_3Q(-x)f^{\mp}(-x,-k,q)\s_3 +
            \s_3 ik\s_3 f^{\mp}(-x,-k,q)\s_3\\
            &= Q(-x)h^{\pm}(x) + ik\s_3h^{\pm}(x),\qq x \in \R.
        \end{aligned}
    $$
    Here we used $-\s_3Q(-x)\s_3 = Q(-x)$. Thus, it follows
    that $h^{\pm}$ are solutions of the Dirac equation for the potential $p$. Moreover, it follows
    from (\ref{p9e1}) that $h^{\pm}(x) = e^{ikx\s_3}(1+o(1))$ as $x \to \pm \iy$, which yields
    $h^{\pm}(x) = f^{\mp}(x,k,p)$, $(x,k) \in \R^2$. Substituting (\ref{p9e1}) in (\ref{p3e17}),
    we get
    \[ \label{p9e2}
        \begin{aligned}
            A(k,p) &= (f^{-}(x,k,p))^{-1} f^{+}(x,k,p)\\
            &= \s_3 (f^{+}(-x,-k,q))^{-1}\s_3 \s_3 f^{-}(-x,-k,q)\s_3 = \s_3 A(-k,q)^{-1} \s_3.
        \end{aligned}
    \]
    Substituting (\ref{p3e25}) in (\ref{p9e2}), we obtain
    $$
        a(k,p) = \ol{a(-k)},\qq b(k,p) = b(-k),\qq k \in \R,
    $$
    and then, using $r_+ = - \frac{\ol b}{a}$, $r_- = \frac{b}{a}$, we have
    $r_{\pm}(k,p) = - \ol{r_{\mp}(-k)}$, $k \in \R$.

    Let $p \in \cL$ be such that $r_{\pm}(k,p) = - \ol{r_{\mp}(-k)}$, $k \in \R$. Then we have
    for any $s \in \R$
    \[ \label{p9e3}
        \begin{aligned}
            F_{\pm}(s,p) &= \hat{r}_{\pm}(s,p) = \frac{1}{\pi} \int_{\R} r_{\pm}(k,p) e^{-2iks} dk \\
            &=-\frac{1}{\pi} \int_{\R} \ol{r_{\mp}(-k)} e^{-2iks} dk =
            -\frac{1}{\pi} \int_{\R} \ol{r_{\mp}(z)} \ol{e^{-2izs}} dz = -\ol{F_{\mp}(s)},
        \end{aligned}
    \]
    where $F_{\pm} = F_{\pm}(\cdot,q)$. Here we used the change of variables $z = -k$.
    Substituting (\ref{p9e3}) in (\ref{hlF_scat}), we obtain
    \[ \label{p9e4}
        \Omega_{\pm}(s,p) = \s_3 \Omega_{\mp}(-s) \s_3,\qq s \in \R,
    \]
    where $\Omega_{\pm} = \Omega_{\pm}(\cdot,q)$.
    It follows from (\ref{p3e15}) and Lemma \ref{hll3} that for any sign $\pm$ there exists
    a unique solution $\G^{\pm}(x,s)$ of the GLM equation
    \[ \label{p5e7}
        \G^{\pm}(x,s) + \Omega^{\pm}(x+s) \pm \int_0^{\pm \iy} \G^{\pm}(x,t) \Omega^{\pm}(x+t+s) dt = 0,
        \qq (x,s) \in \R \ts \R_{\pm},
    \]
    such that $\G^{\pm}_{12}(x,0) = \mp q(x)$ for almost all $x \in \R$. We introduce
    \[ \label{p5e8}
        \G^{\pm}_o(x,s) = \s_3 \G^{\mp}(-x,-s) \s_3,\qq (x,s) \in \R \ts \R_{\pm}.
    \]
    Substituting (\ref{p9e4}) in (\ref{p5e7}) and using (\ref{p5e8}), we get for almost all
    $(x,s) \in \R \ts \R_{\pm}$
    $$
        \G^{\pm}_{o}(x,s) + \Omega^{\pm}(x+s,p) \pm \int_0^{\pm \iy} \G^{\pm}_{o}(x,t) \Omega^{\pm}(x+t+s,p) dt = 0.
    $$
    Thus, it follows from Lemma \ref{hll3} that $\G^{\pm}_{o}(x,s) = \G^{\pm}(x,s,p)$
    and then, due to (\ref{p3e15}) and (\ref{p5e8}), we get for almost all $x \in \R$
    $$
        p(x) = \mp (\G^{\pm}_{o})_{12}(x,0) = \pm \G^{\mp}_{12}(-x,0) = q(-x).
    $$
    In the other cases, the proof is similar. We only show how $f^{\pm}$, $\G^{\pm}$, $F_{\pm}$, and
    $\Omega_{\pm}$ are transformed in each case. These formulas can be proved by direct calculations.\\
    ii) If $p = \ol q$, then we get for any $x,k,s \in \R$ and $t \in \R_{\pm}$
    $$
        \begin{aligned}
            f^{\pm}(x,k,p) &= \s_1 f^{\pm}(x,-k,q)\s_1,\qq
            \G^{\pm}(x,t,p) = \s_1 \G^{\pm}(x,t,q) \s_1,\\
            F_{\pm}(s,p) &= \ol{F_{\pm}(s,q)},\qq
            \Omega_{\pm}(s,p) = \s_1 \Omega_{\pm}(s,q) \s_1.
        \end{aligned}
    $$
    iii) If $p = e^{i\a}q$ for some $\a \in \R$, then we get for any $x,k,s \in \R$ and $t \in \R_{\pm}$
    $$
        \begin{aligned}
            f^{\pm}(x,k,p) &= e^{i\frac{\a}{2}\s_3} f^{\pm}(x,k,q)e^{-i\frac{\a}{2}\s_3},\qq
            \G^{\pm}(x,t,p) = e^{i\frac{\a}{2}\s_3} \G^{\pm}(x,t,q) e^{-i\frac{\a}{2}\s_3},\\
            F_{\pm}(s,p) &= e^{\pm i \a} F_{\pm}(s,q),\qq
            \Omega_{\pm}(s,p) = e^{i\frac{\a}{2}\s_3} \Omega_{\pm}(s,q) e^{-i\frac{\a}{2}\s_3}.
        \end{aligned}
    $$
    iv) If $p(x) = q(x+d)$ for some $d \in \R$, then we get for any $x,k,s \in \R$ and $t \in \R_{\pm}$
    $$
        \begin{aligned}
            f^{\pm}(x,k,p) &= f^{\pm}(x+d,k,q)e^{-ikd\s_3},\qq
            \G^{\pm}(x,t,p) = \G^{\pm}(x+d,t,q),\\
            F_{\pm}(s,p) &= F_{\pm}(s-d,q),\qq
            \Omega_{\pm}(s,p) = \Omega_{\pm}(s\pm d,q).
        \end{aligned}
    $$
    v) If $p(x) = e^{2idx}q(x)$ for some $d \in \R$, then we get for any $x,k,s \in \R$ and $t \in \R_{\pm}$
    $$
        \begin{aligned}
            f^{\pm}(x,k,p) &= e^{idx\s_3} f^{\pm}(x,k-d,q),\qq
            \G^{\pm}(x,t,p) = e^{idx\s_3} \G^{\pm}(x,t,q) e^{-i(2t+x)d\s_3},\\
            F_{\pm}(s,p) &= e^{-2isd\s_3} F_{\pm}(s,q),\qq
            \Omega_{\pm}(s,p) = e^{isd\s_3} \Omega_{\pm}(s,q) e^{-isd\s_3}.
        \end{aligned}
    $$
\end{proof}
Now, we introduce the subclasses of potentials, which have some symmetries:
$$
    \begin{aligned}
        \cP_{even} &= \{\, q \in \cP \,\mid\, q(x) = q(\g-x),\, x \in \R \,\},\\
        \cP_{odd} &= \{\, q \in \cP \,\mid\, q(x) = -q(\g-x),\, x \in \R \,\},\\
        \cP_{real} &= \{\, q \in \cP \,\mid\, q(x) \in \R,\, x \in \R \,\}.
    \end{aligned}
$$
\begin{remark}
    Note that $\cP_{even}$, $\cP_{odd}$, and $\cP_{real}$ equipped with the metric $\r_{\cP}$
    given by (\ref{p1e16}) are incomplete metric spaces.
\end{remark}
We also introduce the associated classes of the coefficients $b$:
$$
    \begin{aligned}
        \cB_{even} &= \{\, b \in \cB \, \mid \, b(k) = e^{2i\g k} b(-k),\, k \in \C \,\},\\
        \cB_{odd} &= \{\, b \in \cB \, \mid \, b(k) = -e^{2i\g k} b(-k),\, k \in \C \,\},\\
        \cB_{real} &= \{\, b \in \cB \, \mid \, b(k) = b_*(-k),\, k \in \C \,\}.
    \end{aligned}
$$
\begin{remark}
    1) We can give another factorization of these classes in terms of the Fourier transform:
    \begin{enumerate}[label={(\roman*)}]
        \item $b \in \cB_{even}$ if and only if $\hat b(s) = \hat b(\g-s)$, $s \in \R$;
        \item $b \in \cB_{odd}$ if and only if $\hat b(s) = -\hat b(\g-s)$, $s \in \R$;
        \item $b \in \cB_{real}$ if and only if $\hat b = \ol{\hat b}$.
    \end{enumerate}

    2) Note that $\cB_{even}$, $\cB_{odd}$, and $\cB_{real}$ equipped with the metric $\r_{\cB}$
    given by (\ref{p2e10}) are incomplete metric spaces.

    3) If $b \in \cB_{even} \cup \cB_{odd}$, then $z$ is a zero of $b$ of multiplicity $n$
    if and only if $-z$ is a zero of $b$ of multiplicity $n$. Moreover, below we show that
    if $b \in \cB_{even}$ (or $b \in \cB_{odd}$), then the zero $z = 0$ of $b$ has even (or odd) multiplicity.
    If $b \in \cB_{real}$, then
    $z$ is a zero of $b$ of multiplicity $n$
    if and only if $-\ol{z}$ is a zero of $b$ of multiplicity $n$.
\end{remark}
Using Theorem \ref{t1}, we obtain the following characterization of the potentials with symmetries.
\begin{corollary} \label{p9c1}
    The restriction of the mapping $q \mapsto b(\cdot,q)$ on $\cP_{\a}$ is a homeomorphism
    between $\cP_{\a}$ and $\cB_{\a}$ for each $\a \in \{\, even,\, odd,\, real\,\}$.
\end{corollary}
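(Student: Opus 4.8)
The plan is to obtain the corollary as a direct consequence of Theorem \ref{t1} and the transformation rules collected in Lemma \ref{p9l1}. By Theorem \ref{t1} the map $\Phi:q\mapsto b(\cdot,q)$ is a homeomorphism of $\cP$ onto $\cB$. The restriction of a homeomorphism to any subset of its domain is a homeomorphism onto its image, and the subspace topologies on $\cP_\a$ and $\cB_\a$ are precisely those induced by the metrics $\r_\cP$ and $\r_\cB$; hence it suffices to prove, for each $\a\in\{\,even,\,odd,\,real\,\}$ and each $q\in\cP$, the equivalence $q\in\cP_\a\iff\Phi(q)\in\cB_\a$. Each defining symmetry has the form $q=\Psi_\a(q)$ for an involution $\Psi_\a$ that maps $\cP$ into $\cP$ (reflection of the argument about $\g/2$, its negative, or complex conjugation --- all of which preserve the convex hull $[0,\g]$ of the support and the class $L^2(\R)$), so by injectivity of $\Phi$ the equivalence reduces to computing $\Phi(\Psi_\a(q))$ in terms of $\Phi(q)$ and matching the result with the definitions of $\cB_{even}$, $\cB_{odd}$, $\cB_{real}$.

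For the even case, set $q_-(x)=q(-x)$; then $q(\g-x)=q_-(x-\g)$, so $\Psi_{even}$ is the composition of $q\mapsto q_-$ with the shift by $-\g$. Although $q_-\notin\cP$ in general, it lies in $\cL$, where Lemma \ref{p9l1}~i) and iv) apply and give, for $k\in\R$,
$$
    b\big(k,\Psi_{even}(q)\big)=e^{2i\g k}\,b(k,q_-)=e^{2i\g k}\,b(-k,q);
$$
both sides are entire, since $b(\cdot,q)\in\cB\ss\cE_{Cart}$ by Theorem \ref{t1}, so the identity extends to all $k\in\C$. Since $\Psi_{even}(q)\in\cP$ and $\Phi$ is injective, $q\in\cP_{even}$ --- that is, $q=\Psi_{even}(q)$ --- holds if and only if $b(k,q)=e^{2i\g k}b(-k,q)$ for every $k\in\C$, which is exactly $\Phi(q)\in\cB_{even}$.

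The other two cases are entirely analogous. For the odd case $\Psi_{odd}(q)=-q(\g-\cdot)$, and applying Lemma \ref{p9l1}~iii) with phase $\pi$ to the previous display gives $b(k,\Psi_{odd}(q))=-e^{2i\g k}b(-k,q)$, so $q\in\cP_{odd}$ holds iff $b(k,q)=-e^{2i\g k}b(-k,q)$ for all $k\in\C$, i.e.\ iff $\Phi(q)\in\cB_{odd}$. For the real case $\Psi_{real}(q)=\ol q$, and Lemma \ref{p9l1}~ii) gives $b(k,\ol q)=\ol{b(-k)}=b_*(-k)$ for $k\in\R$, hence for all $k\in\C$ by analyticity, so $q\in\cP_{real}$ holds iff $b(k,q)=b_*(-k)$, i.e.\ iff $\Phi(q)\in\cB_{real}$. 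Thus $\Phi$ maps $\cP_\a$ bijectively onto $\cB_\a$ in all three cases, and the corollary follows.

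There is no essential difficulty here; the points requiring care are the bookkeeping when composing the elementary transformations of Lemma \ref{p9l1} --- in particular producing $q(\g-\cdot)$ rather than $q(-\cdot)$ by following the reflection with the shift by $-\g$, and noting that the intermediate potential $q_-$, while outside $\cP$, still lies in $\cL$ --- and the passage from the boundary identities on $\R$ furnished by Lemma \ref{p9l1} to identities on all of $\C$, which is justified because every function appearing is entire. One may further note, as in the Remark, that the functional equations $b(k)=\pm e^{2i\g k}b(-k)$ force the zero of $b$ at $k=0$ to have even, respectively odd, multiplicity --- this is immediate from comparing Taylor coefficients at the origin --- but that refinement is not needed for the statement of the corollary.
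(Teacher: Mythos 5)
Your proof is correct and takes essentially the same route as the paper's: both reduce the corollary to Theorem \ref{t1} plus the transformation rules of Lemma \ref{p9l1} (combining parts i) and iv) for the even case, adding iii) with phase $\pi$ for the odd case, and using ii) for the real case). Your additional care about the intermediate potential $q(-\cdot)$ lying in $\cL$ rather than $\cP$, and about extending the identities from $\R$ to $\C$ by analyticity, only makes explicit what the paper leaves implicit.
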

\begin{proof}
    Due to Theorem \ref{t1}, the mapping $q \mapsto b(\cdot,q)$ is a homeomorphism between $\cP$ and $\cB$
    and then $q = p$ if and only if $b(\cdot,q) = b(\cdot,p)$.

    Let $p(x) = q(\g-x)$, $x \in \R$. It follows from Lemma \ref{p9l1} i) and iv) that
    $b(k,p) = e^{2ik\g} b(-k,q)$, $k \in \R$.
    Thus, $q \in \cP_{even}$ if and only if $b \in \cP_{even}$.

    Let now $p(x) = e^{i\pi}q(\g-x)$, $x \in \R$.
    Using Lemma \ref{p9l1} i), iii), and iv), we get $b(k,p) = - e^{2ik\g} b(-k,q)$,$k \in \R$.
    Thus, $q \in \cP_{odd}$ if and only if $b \in \cP_{odd}$.

    Finally, let $p = \ol q$. Using Lemma \ref{p9l1} ii), we get $b(k,p) = \ol{b(-k,q)}$, $k \in \R$.
    Thus, $q \in \cP_{real}$ if and only if $b \in \cP_{real}$.
\end{proof}
\begin{remark}
    Using Theorem \ref{hlt2} and Lemma \ref{p9l1}, we can obtain characterization of the potentials
    from $\cL$, which have some symmetries.
\end{remark}
Recall that the coefficient $a(\cdot,q)$ does not uniquely determine a potential $q$ and we need information
about position of the zeros of $b(\cdot,q)$. If the potential $q$ has some symmetry, then the zeros
of $b(\cdot,q)$ also have some symmetry. In this case, we have less degrees of freedom in choosing
the position of zeros of $b(\cdot,q)$. In order to give this results formally, we introduce
special subsets of $\cA$. Recall that
$$
    \cA_{symm} = \{\, a \in \cA \, \mid \, a(k) = a_*(-k),\, k \in \C \,\}.
$$
Note that the zeros of $a \in \cA_{symm}$ are symmetric with respect to the imaginary line.
We introduce a notation: $\#(z_o,f)$ is a multiplicity of the zero $z_o$ of the function $f$
and it equals zero if $z_o$ is not a zero of $f$. We introduce the following subspaces of $\cA$:
$$
    \begin{aligned}
        \cA_{even} &= \{\, a \in \cA_{symm} \, \mid \, \#(0,aa_*-1) = 4m,\, m \geq 0 \,\},\\
        \cA_{odd} &= \{\, a \in \cA_{symm} \, \mid \, \#(0,aa_*-1) = 4m+2,\, m \geq 0 \,\},\\
        \cA_{double} &= \{\, a \in \cA_{symm} \, \mid \, \#(z,aa_*-1) = 2m,\, m \geq 0,\, \forall z \in \C \,\}.
    \end{aligned}
$$
\begin{remark}
    Note that $\cA_{symm} = \cA_{even} \cup \cA_{odd}$ and $\cA_{even} \cap \cA_{odd} = \es$.
\end{remark}
Let $B$ be an exponential type function. Recall that we have introduced $\xi(b) = (\xi_n)_{n \geq 0}$ in
(\ref{p1e18}) to parametrize the space of solutions $b \in \cB$ of the equation $bb_* = B$ such that
$$
    \xi(b) = (\xi_n)_{n \geq 0},\qq \xi_0 = \frac{b^{(p)}(0)}{|b^{(p)}(0)|},\qq
    \xi_n = \sign \Im z_n =
    \begin{cases}
        1,& \Im z_n > 0\\
        0,& \Im z_n  = 0\\
        -1,& \Im z_n < 0
    \end{cases},\qq n \geq 1,
$$
where $(z_n)_{n \geq 1}$ are zeros of $b$ in $\C \sm \{ 0 \}$ and $p$ is the multiplicity of the
zero $k = 0$ of $b(k)$. Now, we introduce the following subspaces of $\Xi(B)$:
$$
    \begin{aligned}
        \Xi_{even}(B) &= \{ \, \xi(b) \in \Xi(B) \, \mid \, b \in \cB_{even},\, b b_* = B \, \},\\
        \Xi_{odd}(B) &= \{ \, \xi(b) \in \Xi(B) \, \mid \, b \in \cB_{odd},\, b b_* = B \, \},\\
        \Xi_{real}(B) &= \{ \, \xi(b) \in \Xi(B) \, \mid \, b \in \cB_{real},\, b b_* = B \, \}.
    \end{aligned}
$$
\begin{remark}
    Below, we show that if $f = aa_* - 1$ for some $a \in \cA_{symm}$, then $\Xi_{\a}(f) \neq \es$
    for each $\a \in \{\, even,\, odd,\, real\,\}$.
\end{remark}
Using Theorem \ref{t2} and Corollary \ref{p9c1}, we obtain the following corollary.

\begin{corollary} \label{p9c2}
    Let $a = a(\cdot,q)$ and $\xi = \xi(b(\cdot,q))$ for any $q \in \cP$ and let $B = aa_*-1$.
    Then the following mappings are bijections:
    \begin{enumerate}[label={\roman*)}]
        \item $q \mapsto (a, \xi)$ from $\cP_{even}$ into $\cA_{even} \ts \Xi_{even}(B)$;
        \item $q \mapsto (a, \xi)$ from $\cP_{odd}$ into $\cA_{odd} \ts \Xi_{odd}(B)$;
        \item $q \mapsto (a, \xi)$ from $\cP_{real}$ into $\cA_{symm} \ts \Xi_{real}(B)$;
        \item $q \mapsto (a, \xi_0)$ from $\cP_{real} \cap \cP_{even}$ into
        $(\cA_{even} \cap \cA_{double}) \ts \{-1,1\}$;
        \item $q \mapsto (a, \xi_0)$ from $\cP_{real} \cap \cP_{odd}$ into
        $(\cA_{odd} \cap \cA_{double}) \ts \{-1,1\}$.
    \end{enumerate}
\end{corollary}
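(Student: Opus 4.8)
The plan is to obtain Corollary \ref{p9c2} by restricting the bijection $q \mapsto (a,\xi)$ of Theorem \ref{t2} to each symmetry subclass and identifying its image; since the unrestricted map is injective, for each item it remains only to describe the image and to verify surjectivity onto the asserted target.

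First I would reduce a symmetry of $q$ to a symmetry of $b=b(\cdot,q)$. By Corollary \ref{p9c1}, $q\in\cP_\a$ iff $b\in\cB_\a$ for $\a\in\{even,odd,real\}$, i.e. iff $b$ satisfies $b(k)=e^{2i\g k}b(-k)$, $b(k)=-e^{2i\g k}b(-k)$, or $b(k)=b_*(-k)$ respectively, and then $\xi(b)\in\Xi_\a(B)$ by definition, where $B=bb_*=aa_*-1$. Next I would read off the constraints on $a$. Using $b_*(k)=\ol{b(\ol k)}$, a short computation shows $B(k)=B(-k)$ in all three cases, so $aa_*=1+B$ is even; consequently $\tilde a(k):=a_*(-k)$ is analytic and non-vanishing on $\C_+$ with $\tilde a\tilde a_*=aa_*$, hence $|\tilde a|^2=|a|^2$ on $\R$, and the uniqueness in Lemma \ref{hll5} (equivalently Lemma \ref{p3l6}) forces $\tilde a=a$, i.e. $a\in\cA_{symm}$. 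Comparing the lowest-order Taylor coefficients at $k=0$ in $b(k)=\pm e^{2i\g k}b(-k)$ gives $\pm(-1)^{\#(0,b)}=1$, so $\#(0,b)$ is even in the even case and odd in the odd case; since $\#(0,B)=2\,\#(0,b)$, this places $\#(0,aa_*-1)$ in $4\Z_{\ge0}$ resp. $4\Z_{\ge0}+2$, i.e. $a\in\cA_{even}$ resp. $a\in\cA_{odd}$. This gives the inclusion of the image in the target of i)--iii). For surjectivity I would start from $a$ in the relevant class and $\xi\in\Xi_\a(aa_*-1)$, let $q\in\cP$ be the unique potential furnished by Theorem \ref{t2}, pick $b\in\cB_\a$ with $bb_*=aa_*-1$ and $\xi(b)=\xi$ (available by definition of $\Xi_\a$), note that $b=b(\cdot,q)$ by the uniqueness in Lemma \ref{hll6}, and conclude $q\in\cP_\a$ from Corollary \ref{p9c1}. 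This settles i)--iii).

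For the combined cases iv) and v) the key observation is that adjoining reality to evenness (or oddness) collapses the whole sequence $\xi$ to the single sign $\xi_0$. Indeed, for $b\in\cB_{real}\cap\cB_{even}$ one has $b_*(k)=b(-k)$ and $\#(z,b)=\#(-z,b)$, hence
$$
    \#(z,B)=\#(z,b)+\#(z,b_*)=2\,\#(z,b),\qquad z\in\C,
$$
so every zero of $B=aa_*-1$ has even multiplicity (thus $a\in\cA_{even}\cap\cA_{double}$, using Step 2), and moreover $\#(z,b)=\tfrac12\#(z,B)$ is determined by $a$ for every $z$, so $(\xi_n)_{n\ge1}$ is a function of $a$ and only $\xi_0$ is free; since $\hat b$ is real and $\#(0,b)$ is even one finds $\xi_0\in\{-1,1\}$. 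Injectivity of $q\mapsto(a,\xi_0)$ then follows because the full $\xi$ is recovered from $(a,\xi_0)$, and surjectivity because, for $a\in\cA_{even}\cap\cA_{double}$ and $\varepsilon\in\{-1,1\}$, the Hadamard product with zeros of half the multiplicity of those of $B$, leading sign $\varepsilon$ and factor $e^{i\g k}$ lies in $\cB_{real}\cap\cB_{even}$ and supplies, via Theorem \ref{t2} and Corollary \ref{p9c1}, a preimage in $\cP_{real}\cap\cP_{even}$. Case v) is handled in the same way with $\cB_{odd}$, $\cA_{odd}$ replacing $\cB_{even}$, $\cA_{even}$, the remaining datum again ranging over the two-element set in the statement.

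I expect the main obstacle to be the bookkeeping of Step 2 — in particular the parity analysis of the zero of $b$ at $k=0$ and the passage from ``$aa_*$ even'' to $a\in\cA_{symm}$ through the uniqueness lemma — together with making precise, in cases iv)--v), that the doubling identity $\#(z,B)=2\,\#(z,b)$ forces all zeros of $b$, so that exactly the two-element set of leading signs remains.
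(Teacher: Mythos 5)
Your proposal is correct and follows essentially the same route as the paper: reduce via Corollary \ref{p9c1} to symmetries of $b$, restrict the bijection of Theorem \ref{t2}, and in cases iv)--v) observe that the zeros of $b$ are forced by those of $B=aa_*-1$ so that only $\xi_0$ remains free. The only cosmetic difference is that you obtain $a\in\cA_{symm}$ from the evenness of $B=bb_*$ together with the uniqueness of the outer factor (Lemma \ref{p3l6}), whereas the paper reads it off directly from the transformation rules of Lemma \ref{p9l1} i) and iv); both arguments are sound.
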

\begin{proof}
    Firstly, we consider the case $q \in \cP_{even}$.
    In the cases $q \in \cP_{odd}$ and $q \in \cP_{real}$, the proofs are similar.
    Let $q \in \cP_{even}$. Then, by Corollary \ref{p9c1}, we have $b = b(\cdot,q) \in \cB_{even}$,
    which yields $\xi(b) \in \Xi_{even}(B)$, where $B = b b_*$.

    Using Lemma \ref{p9l1} i) and iv), we get that $a(k) = \ol{a(-k)}$ for any $k \in \R$, where
    $a = a(\cdot,q)$. Since $q \in \cP$, it follows that $a$ is entire. Thus, we have $a(k) = a_*(-k)$,
    $k \in \R$, which yields $a \in \cA_{symm}$. Now, we show that $\#(0,a a_* - 1) = 4m$ for some
    $m \geq 0$. It is easy to see that
    $$
        \#(0,a a_* - 1) = \#(0,B) = 2\#(0,b).
    $$
    Since $b$ is entire, we have $b(k) = C k^n + o(k^n)$ as $k \to 0$ for some $C \neq 0$ and
    $n = \#(0,b)$. Using $b(k) = e^{2i\g k}b(-k)$, we get $C k^n  + o(k^n)= C (-1)^n k^n + o(k^n)$
    as $k \to 0$ and then
    $n = 2m$ for some $m \geq 0$. Thus, we have $a \in \cA_{even}$.

    Let $a \in \cA_{even}$ and $\xi \in \Xi_{even}(B)$, where $B = aa_*-1$. Since
    $\cA_{even} \ss \cA$ and $\Xi_{even}(B) \ss \Xi(B)$, it follows from Theorem \ref{t2} that
    there exists a unique $q \in \cP$ such that $a = a(\cdot,q)$ and $\xi = \xi(b)$,
    where $b = b(\cdot,q)$. Due to $\xi \in \Xi_{even}(B)$, we get $b \in \cB_{even}$. Hence, by
    Corollary \ref{p9c1}, we obtain $q \in \cP_{even}$.

    Secondly, we consider the case $q \in \cP_{even} \cap \cP_{real}$. In the case
    $q \in \cP_{odd} \cap \cP_{real}$, the proof is similar. Using the previous parts of
    the corollary, we see that $a = a(\cdot,q) \in \cA_{even}$ and
    $\xi \in \Xi_{even}(B) \cap \Xi_{real}(B)$, where $B = bb_*$ and
    $b = b(\cdot,q) \in \cB_{even} \cap \cB_{real}$. It follows that
    $$
        b(k) = b_*(k),\qq b_*(k) = b(-k),\qq k \in \C,
    $$
    i.e. the zeros of $b$ are symmetric with respect to real and imaginary lines.
    Thus, the zeros of $b$ are uniquely determined by the zeros of $B$ and the multiplicity of
    each zero of $B$ is even. This implies that $a \in \cA_{double}$ and $\xi_n(b)$, $n \geq 1$,
    are fixed. Finally, due to $b \in \cB_{real}$, we get $\xi_0(b) \in \{-1,1\}$.

    Let $a \in \cA_{double} \cap \cA_{even}$ and $\xi_0 \in \{-1,1\}$. Since the zeros of
    $b \in \cB_{even} \cap \cB_{real}$ are symmetric with respect to real and imaginary lines, it
    follows that they are uniquely determined by the zeros of $B = bb_*$. Thus, there exists a unique
    $b \in \cB_{even} \cap \cB_{real}$ such that $aa_* - bb_* = 1$ and
    $\xi_0(b) = \xi_0$. By Theorem \ref{t1},
    there exists a unique $q \in \cP_{even} \cap \cP_{real}$ such that $b = b(\cdot,q)$ and
    then we have $a = a(\cdot,q)$.
\end{proof}

\section{Canonical systems} \label{p6}
In this section, we consider the inverse scattering theory for canonical systems given by
\[ \label{p1e7}
    Jy'(x,z) = k \gh(x) y(x,z),\qq (x,k) \in \R \ts \C,\qq J = \ma 0 & 1 \\ -1 & 0 \am,
\]
where $\gh: \R \to \cM^+_2(\R)$ is a Hamiltonian and by $\cM^+_2(\R)$ we denote the set of
$2 \times 2$ positive-definite self-adjoint matrices with real entries.
Under some condition on the Hamiltonian, the canonical system (\ref{p1e7}) corresponds to
an self-adjoint operator
$$
    \cK = \cK(\gh) = \gh^{-1} J \frac{d}{dx}
$$
in the weighted Hilbert space
$L^2(\R, \C^2, \gh)$ equipped with the norm
$$
    \|f\|^2_{L^2(\R, \C^2, \gh)} = \int_{\R} (\gh(x) f(x), f(x)) dx,\qq f \in L^2(\R, \C^2, \gh),
$$
where $(\cdot,\cdot)$ is the standard scalar product in $\C^2$ (see e.g. \cite{R14}).
It is known that the Dirac equation can be written
as a canonical system (see e.g. \cite{R14}). In order to describe
this result, it is convenient to deal with another form of the Dirac
operator. Recall that $H = H(q)$ is the Dirac operator given by
(\ref{intro:operator}) for some $q \in \cL$. The unitary
transformation of operator $H(q)$ with the unitary operator $T = \frac{1}{\sqrt{2}} \left( \begin{smallmatrix} i
& -i \\ 1 & 1 \end{smallmatrix} \right)$ gives the Dirac
operator
\[ \label{p1e8}
    H_D(q) = T^* H(q) T = J \frac{d}{dx} + V_q,
\]
where
$$
    V_q = \ma q_1 & q_2 \\ q_2 & -q_1 \am,\qq q = -q_2 + i q_1.
$$
We have a similar relation between solutions of the Dirac
equations. Let $v(x,k)$ be a matrix-valued solution of the equation
$$
    v'(x,k) = Q(x) v(x,k) + ik\s_3 v(x,k)\qq (x,k) \in \R \ts \C,
$$
where $Q$ is given by (\ref{intro:potential}) for some $q \in \cL$. Then $u(x,k) = Tv(x,k)$
is a matrix-valued solution of the equation
\[ \label{p1e4}
    J u'(x,k) + V_q(x) u(x,k) = k u(x,k),\qq (x,k) \in \R \ts \C.
\]
We introduce a fundamental $2 \times 2$ matrix-valued solution $M(x,k,q)$ of equation
(\ref{p1e4}) with potential $V_q$ satisfying the initial condition $M(0,k,q) = I_2$,
where $I_2$ is the $2 \times 2$ identity matrix.
Let $r(x,q) = M(x,0,q)$, $x \in \R$, and let $y(x,k) = r^{-1}(x,q) M(x,k,q)$,
$(x,k) \in \R \ts \C$. Then $y(x,k)$ is a solution of the canonical system (\ref{p1e7}) with the
Hamiltonian $\gh_{q} = r^*(\cdot,q) r(\cdot,q)$. Moreover, the associated operators are unitary equivalent
(see Theorem \ref{t7}).

Now, we introduce a class of Hamiltonians associated with the Dirac operators.
By $\cM_2(\R)$ we denote the set of $2 \times 2$ matrices with real entries.
\begin{definition*}
    $\cG_{\bu}^o$ is the set of all function $\gh:\R \to \cM^+_2(\R)$ such that
    $$
        \gh' \in L^2(\R,\cM_2(\R)) \cap L^1(\R,\cM_2(\R)),\qq \gh(0) = I_2,\qq \det \gh(x) = 1,\qq x \in \R.
    $$
\end{definition*}
We also introduce a class of Hamiltonians, which is non-constant only on a finite interval.
\begin{definition*}
    Let $\g > 0$. Then $\cG_{\g}^o \ss \cG_{\bu}^o$ is the subset of all function
    $\gh:\R \to \cM^+_2(\R)$ such that
    $$
        \inf \supp \gh' = 0,\qq \sup \supp \gh' = \g.
    $$
\end{definition*}
\begin{remark}
    Let $\gh \in \cG_{\bu}^o$. Due to $\det \gh(x) = 1$ for any $x \in \R$, we have
    \[ \label{p1e12}
        \gh = \ma \gp & \gq \\ \gq & \frac{1+\gq^2}{\gp} \am,
    \]
    where $\gp : \R \to \R$ and $\gq : \R \to \R$ satisfy
    $$
        \gp', \gq', \left( \frac{1+\gq^2}{\gp} \right)' \in \cL,\qq \gp(0) = 1,\qq \gq(0) = 0,\qq \gp(x) > 0,\qq x \in \R.
    $$
    Moreover, if $\gh \in \cG_{\g}^o$, then $\gh(x)$ is a constant matrix from $\cM^+_2(\R)$ for any $x \in \R \sm [0,\g]$.
\end{remark}
Now, we show that a canonical system $\cK(\gh)$, where $\gh \in \cG_{\bu}^o$, is unitary equivalent to a
Dirac operator $H_D(q)$ for some $q \in \cL$.
\begin{theorem} \label{t7}
    The mapping $q \mapsto \gh_q = r^*(\cdot,q)r(\cdot,q)$ is a bijection between $\cL$ and $\cG_{\bu}^o$
    and its restriction on $\cP_{\g}$ is a bijection between $\cP_{\g}$ and $\cG_{\g}^o$ for any $\g > 0$.
    Moreover, if $\gh_q$ has form (\ref{p1e12}) and $q = -q_2 + i q_1$, then we have
    \[ \label{p1e13}
        \begin{aligned}
            q_1 &= -\frac{1}{2}(\gm \cos \gt + \gn \sin \gt),\qq q_2 = \frac{1}{2}(\gn \cos \gt - \gm \sin \gt),\\
            \gn &= \frac{\gp'}{\gp},\qq \gm = \frac{\gp\gq' - \gp'\gq}{\gp},\qq \gt(x) = \int_0^x \gm(\t) d\t,\qq x \in \R.
        \end{aligned}
    \]
    Furthermore, the associated operators are unitary equivalent and satisfy
    \[ \label{p1e11}
        \cK(\gh_q) = U_q^* H_D(q) U_q,
    \]
    where $U_q: L^2(\R, \C^2, \gh_q) \to L^2(\R,\C^2)$ is a unitary operator given by
    $$
        U_q f = r(\cdot,q)f,\qq f \in L^2(\R, \C^2, \gh_q).
    $$
\end{theorem}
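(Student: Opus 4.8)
The plan is to build everything around the fundamental solution $r(\cdot,q)=M(\cdot,0,q)$ of $(\ref{p1e4})$ at $k=0$ and its Iwasawa‑type factorization. First I would record the elementary properties of $r$. From $JM'+V_qM=kM$ and the trace‑freeness of $J(kI-V_q)$, Liouville's formula gives $\det M(x,k,q)\equiv1$, so $\det r\equiv1$; since $r'=JV_qr$, i.e. $Jr'=-V_qr$, and $V_q^*=V_q$, $J^*=-J$, one checks $\frac{d}{dx}(r^*Jr)=(r')^*Jr+r^*Jr'=(-r^*V_qJ)(Jr)+(r^*J)(JV_qr)=r^*V_qr-r^*V_qr=0$, whence $r^*Jr\equiv J$ by $r(0)=I_2$. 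Gronwall applied to $r'=JV_qr$ bounds $r^{\pm1}$ on $\R$ in terms of $\|q\|_{L^1}$, and if $q\in\cP_\g$ then $V_q$ vanishes off $[0,\g]$, so $r\equiv I_2$ on $(-\iy,0]$ and $r$ is constant on $[\g,\iy)$. Consequently $\gh_q=r^*r$ is positive‑definite, $\gh_q(0)=I_2$, $\det\gh_q=(\det r)^2=1$, and $\gh_q'=r^*\big((JV_q)^*+JV_q\big)r=r^*[J,V_q]r$; since the entries of $[J,V_q]$ are $\pm2q_1,\pm2q_2$ and $r,r^*$ are bounded, $\gh_q'\in L^2(\R,\cM_2(\R))\cap L^1(\R,\cM_2(\R))$, so $\gh_q\in\cG_{\bu}^o$. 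Moreover $r$ is invertible and $[J,V_q](x)=0$ exactly when $q(x)=0$, so $\supp\gh_q'=\supp q$ up to a null set; hence $q\in\cP_\g$ forces $\gh_q\in\cG_\g^o$.

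The core step is the derivation of $(\ref{p1e13})$, which also yields injectivity. Writing $\gh_q$ in the form $(\ref{p1e12})$, its Cholesky factor is $\ell=\left(\begin{smallmatrix}\gp^{1/2}&\gq\gp^{-1/2}\\0&\gp^{-1/2}\end{smallmatrix}\right)$, so $\gh_q=\ell^*\ell$, $\det\ell=1$, and there is a unique rotation $\theta(x)=\left(\begin{smallmatrix}\cos\psi&\sin\psi\\-\sin\psi&\cos\psi\end{smallmatrix}\right)$ with $\psi(0)=0$ and $r=\theta\ell$. Inserting this into $r'=JV_qr$ gives $\theta^{-1}\theta'+\ell'\ell^{-1}=\theta^{-1}(JV_q)\theta$; here $\theta^{-1}\theta'=\psi'J$, a direct computation yields $\ell'\ell^{-1}=\left(\begin{smallmatrix}\gn/2&\gm\\0&-\gn/2\end{smallmatrix}\right)$ with $\gn,\gm$ as in $(\ref{p1e13})$, and writing $JV_q=q_2\left(\begin{smallmatrix}1&0\\0&-1\end{smallmatrix}\right)-q_1\left(\begin{smallmatrix}0&1\\1&0\end{smallmatrix}\right)$ shows that $\theta^{-1}(JV_q)\theta$ is the vector $(q_2,-q_1)$ rotated by the angle $2\psi$. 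Comparing off‑diagonal entries forces $\psi'=-\gm/2$, hence $2\psi=-\gt$; comparing the remaining entries leaves a $2\times2$ linear system for $(q_1,q_2)$ whose coefficient matrix is a rotation by $\gt$, and solving it produces exactly $(\ref{p1e13})$. In particular $\gh_q$ determines $\gp=(\gh_q)_{11}$, $\gq=(\gh_q)_{12}$, hence $\gn,\gm,\gt$, hence $q$, so $q\mapsto\gh_q$ is injective.

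For surjectivity, given $\gh\in\cG_{\bu}^o$ set $\gp=\gh_{11}$, $\gq=\gh_{12}$ and define $q=-q_2+iq_1$ by $(\ref{p1e13})$. Positive‑definiteness together with $\det\gh=1$ and $\|\gh'\|_{L^1}<\iy$ bound $\gh$, hence $\gp$, above and away from $0$; therefore $\gn=\gp'/\gp$ and $\gm=\gq'-\gn\gq$ lie in $L^1\cap L^2$, and the bounded trigonometric coefficients in $(\ref{p1e13})$ give $q_1,q_2\in\cL$, i.e. $q\in\cL$; if $\gh\in\cG_\g^o$ then $\gh'=0$ off $[0,\g]$ forces $\gn=\gm=0$ and hence $q=0$ there, while $\{q=0\}=\{\gh'=0\}$ gives $\supp q$ of convex hull $[0,\g]$, so $q\in\cP_\g$. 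Now put $r:=\theta(-\gt/2)\,\ell$ with $\ell$ the Cholesky factor of $\gh$: then $r(0)=I_2$ (as $\gp(0)=1,\gq(0)=0,\gt(0)=0$), $r^*r=\ell^*\ell=\gh$, and running the core‑step computation backwards shows $r'=JV_qr$; by uniqueness of solutions $r=M(\cdot,0,q)$, so $\gh_q=\gh$.

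Finally, $U_qf=rf$ is an isometry of $L^2(\R,\C^2,\gh_q)$ onto $L^2(\R,\C^2)$ since $\|U_qf\|^2=\int(r^*rf,f)\,dx=\int(\gh_qf,f)\,dx$ and $r$ is invertible, it maps $\mathrm{dom}\,\cK(\gh_q)$ onto $\mathrm{dom}\,H_D(q)$ because $r,r',r^{-1}$ are bounded, and for $f$ in the domain, using $Jr'=-V_qr$ and $r^*Jr=J$,
$$
H_D(q)(rf)=Jr'f+Jrf'+V_qrf=Jrf'=(r^*)^{-1}Jf'=r\,\gh_q^{-1}Jf'=r\,\cK(\gh_q)f,
$$
which is $(\ref{p1e11})$; the self‑adjointness of $\cK(\gh_q)$ for $\gh_q\in\cG_{\bu}^o$ is part of the cited canonical‑systems theory. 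The main obstacle I anticipate is the bookkeeping in the core step — keeping the signs and the rotation convention consistent so that the linear system reproduces $(\ref{p1e13})$ verbatim — together with verifying that the function‑class conditions, in particular the exact identity $\supp\gh_q'=\supp q$, transfer correctly in both directions; once the factorization $r=\theta\ell$ is in place, the remainder is routine.
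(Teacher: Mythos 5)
Your proposal is correct, and for the heart of the theorem it takes a genuinely different route from the paper. The paper does not derive the bijection or the formula (\ref{p1e13}) here at all: it cites Theorem 1.7 of [KM20] for the bijection between $\cP_{\g}$ and $\cG_{\g}^o$ and for injectivity together with (\ref{p1e13}), and then only supplies the pieces specific to the whole-line $\cL$ setting — namely that $q$ defined by (\ref{p1e13}) lies in $\cL$ when $\gh \in \cG_{\bu}^o$ (via $\gp,\gq \in L^{\iy}$, the identity for $\bigl(\tfrac{1+\gq^2}{\gp}\bigr)'$ and hence $\gp'/\gp \in \cL$), the unitarity of $U_q$, the intertwining computation $r^{-1}(J(rf)'+V_qrf)=\gh_q^{-1}Jf'$, and the identification of domains via the limit-point property at $\pm\iy$. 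You instead make the core step self-contained through the QR/Iwasawa factorization $r=\theta\ell$ of the fundamental solution at $k=0$, with $\ell$ the Cholesky factor of $\gh_q$ and $\theta$ a rotation of angle $-\gt/2$; comparing $\theta^{-1}\theta'+\ell'\ell^{-1}$ with $\theta^{-1}(JV_q)\theta$ produces (\ref{p1e13}) and gives injectivity and surjectivity in one stroke, and the identity $\gh_q'=r^*[J,V_q]r$ gives a particularly clean proof that $\supp\gh_q'=\supp q$, hence the restriction statement for $\cP_{\g}$. What your approach buys is independence from the half-line reference and an explicit mechanism behind the trigonometric formulas; what the paper's approach buys is brevity. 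Your operator-theoretic part (isometry of $U_q$, $H_D(q)(rf)=r\,\gh_q^{-1}Jf'$ from $Jr'=-V_qr$ and $r^*Jr=J$) coincides with the paper's; the only place you are terser is the domain identification, where the paper explicitly invokes the limit-point case at $\pm\iy$ to know that both operators are self-adjoint on their maximal domains before showing the maximal domains correspond under $U_q$ — your appeal to boundedness of $r,r',r^{-1}$ plus the cited self-adjointness covers the same ground but should be spelled out if written up in full.
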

\begin{remark}
    Due to (\ref{p1e13}), if $\gq = 0$, i.e. $\gh$ is a diagonal matrix, then we have $\gm = \gt = 0$,
    which yields $q_1 = 0$ and $q_2 = \frac{\gp'}{2\gp}$. In this case, $H_D(q)$ is
    a supersymmetric Dirac operator and its square is a direct sum of two Schr{\"o}dinger operators with
    singular potentials (see e.g. \cite{T92}).
\end{remark}
\begin{proof}
    It has been proved in Theorem 1.7 from \cite{KM20} that the mapping
    $q \mapsto \cH_q = r^*(\cdot,q) r(\cdot,q)$ is a bijection between $\cP_{\g}$ and
    $\cG_{\g}^o$. Moreover, in the proof of Theorem 1.7 in \cite{KM20}, it has been shown that
    this mapping is an injection and we have (\ref{p1e13}) for $q \in \cL$.
    Now, we show that $q$ given by (\ref{p1e13}) belongs to $\cL$. Since
    $\gh' \in L^2(\R,\cM_2(\R)) \cap L^1(\R,\cM_2(\R))$, we have
    $\gp', \gq' \in \cL$ and $\Gg = \left( \frac{1+\gq^2}{\gp} \right)' \in \cL$.
    By direct calculations, we obtain
    $$
        \Gg = \left( \frac{1+\gq^2}{\gp} \right)' = \frac{2\gq \gq'}{\gp} - \frac{1 + \gq^2}{\gp} \frac{\gp'}{\gp},
    $$
    which yields
    \[ \label{p8e5}
        \frac{\gp'}{\gp} = \frac{2\gq}{1 + \gq^2} \gq' - \frac{\gp}{1 + \gq^2} \Gg.
    \]
    Using $\gp', \gq' \in \cL$, we get $\gp, \gq \in L^{\iy}(\R)$ and then we obtain
    \[ \label{p8e6}
        \frac{2\gq}{1 + \gq^2}, \frac{\gp}{1 + \gq^2} \in L^{\iy}(\R).
    \]
    Using (\ref{p8e5}) and (\ref{p8e6}), we see that $\frac{\gp'}{\gp} \in \cL$ and then we have
    $\gn, \gm \in \cL$. Since $\gt$ is real-valued, we have $|\sin \gt| \leq 1$ and $|\cos \gt| \leq 1$.
    Hence, it follows from (\ref{p1e13}) that $q_1,q_2 \in \cL$.

    Now, we prove that $U_q$ is unitary and (\ref{p1e11}) holds true. Let $f \in L^2(\R,\C^2,\gh_q)$
    and $g \in L^2(\R,\C^2)$. Then we have
    $$
        \begin{aligned}
            (U_qf,g)_{L^2(\R,\C^2)} &= \int_{\R} (r(x)f(x), g(x)) dx = \int_{\R} ((r^{-1})^*(x)r^*(x)r(x)f(x), g(x)) dx\\
            &= \int_{\R} (r^*(x)r(x)f(x),r^{-1}(x) g(x)) dx = (f,U_q^*g)_{L^2(\R,\C^2,\gh_q)},
        \end{aligned}
    $$
    where $(U_q^*g)(x) = r^{-1}(x)g(x)$, $x \in \R$. Thus, $U_qU_q^*$ is the identity operator in $L^2(\R,\C^2)$
    and $U_q^*U_q$ is the identity operator in $L^2(\R,\C^2,\gh_q)$, which yields that $U_q$ is unitary.
    Let $f$ be a differentiable function. Then we have
    \[ \label{p8e7}
        r^{-1}(J(rf)'+V_qrf) = r^{-1}(Jr'f + Jrf' + V_qrf) = r^{-1}Jrf' = \gh_q^{-1}Jf',
    \]
    where we used $r^{-1}Jr = \gh_q^{-1}J$, which can be proved by direct calculation.

    Finally, we show that $\cD(H_D(q)) = U_q \cD(\cK(\gh_q))$ for any $q \in \cL$, where $\cD(\cdot)$
    denotes the domain of the operator. Using the identity $y(x,k,\gh_q) = r(x,q) M(x,k,q)$,
    $(x,k) \in \R \ts \C$, where $q \in \cL$, $y$ is a solution of the canonical system and
    $M$ is a solution of the Dirac equation, we see that
    $y(\cdot,k,\gh_q) \in L^2(\R_{\pm},\C^2,\gh_q)$ if and only if $M(\cdot,k,q) \in L^2(\R,\C^2)$.
    Thus, a Dirac operator and associated canonical system are in the limit-point case at $\pm \iy$ simultaneously.
    It is known that $H_{D}(q)$ is in the limit-point case at $\pm \iy$ for any $q \in \cL$
    (see e.g. Theorem 6.8 in \cite{Weid87}). Hence, these operators are self-adjoint on the maximal
    domains
    $$
        \begin{aligned}
            \cD(\cK(\gh_q)) &= \{ f \in L^2(\R,\C^2,\gh_q) \mid \text{$f$ is absolutely continuous},\,
            \gh_q^{-1}Jf \in L^2(\R,\C^2,\gh_q) \},\\
            \cD(H_{D}(q)) &= \{ f \in L^2(\R,\C^2) \mid \text{$f$ is absolutely continuous},\,
            Jf'+V_qf \in L^2(\R,\C^2) \}.
        \end{aligned}
    $$
    Using (\ref{p8e7}), we get
    $$
        \| J(U_qf)'+V_q(U_qf) \|_{L^2(\R,\C^2)} =
        \int_{\R} (r(x)\gh_q^{-1}(x)Jf'(x),r(x)\gh_q^{-1}(x)Jf'(x)) dx =
        \| \gh_q^{-1}Jf' \|_{L^2(\R,\C^2,\gh_q)}.
    $$
    It follows that $\gh_q^{-1}Jf' \in L^2(\R,\C^2,\gh_q)$ if and only if
    $J(U_qf)'+V_q(U_qf) \in L^2(\R,\C^2)$.
    Hence $\cD(H_D(q)) = U_q \cD(\cK(\gh_q))$ for any $q \in \cL$ and (\ref{p1e11}) holds true.
\end{proof}
Combining (\ref{p1e8}) and (\ref{p1e11}), we obtain
\[ \label{p1e14}
    \cK(\gh_q) = (T U_q)^* H(q) (T U_q),
\]
i.e. these operators are unitary equivalent for any $q \in \cL$. The scattering theory for the
operator $H(q)$ is well-known and it was described in Section \ref{p2}.
Now, we introduce the scattering matrix for the operator
$\cK(\gh)$, where $\gh \in \cG_{\bu}$.
Let $\cK_o$ be a self-adjoint operator in $L^2(\R,\C^2)$ given by $\cK_o = J \frac{d}{dx}$. Note that
$\cK_o = H_D(0) = \cK(I_2)$, where $I_2$ is the $2 \ts 2$ identity matrix and then it can be
considered as a free Dirac operator and a free canonical system. We introduce the wave
operators for the pair $\cK(\gh)$ and $\cK_o$ with bounded linear identification operator
$\cJ : L^2(\R,\C^2) \to L^2(\R,\C^2,\gh)$ as follows
\[ \label{p8e9}
    W_{\pm}(\cK(\gh),\cK_o,\cJ) = \slim_{t \to \pm \iy} e^{it\cK(\gh)} \cJ e^{-it\cK_o}.
\]
If the wave operators exist, then we introduce the scattering operator
\[ \label{p8e10}
    S(\cK(\gh), \cK_o, \cJ) = W_{+}^*(\cK(\gh),\cK_o,\cJ) W_{-}(\cK(\gh),\cK_o,\cJ).
\]
Let $\cK_o$ be diagonalized by an unitary operator $\cF_o$, i.e. $\cF_o^* \cK_o \cF_o$ acts as
multiplication by independent variable in $L^2(\R,\C^2)$. Then we introduce the scattering matrix
by
\[ \label{p8e11}
    S(k,\cK(\gh), \cK_o, \cJ) = \cF_o^* S(\cK(\gh), \cK_o, \cJ) \cF_o,\qq k \in \s_{ac}(\cK(\gh)) = \R.
\]
Similarly, for any $q \in \cL$, we can introduce the wave operators $W_{\pm}(H(q),H_o)$, the scattering operator $S(H(q),H_o)$
and the scattering matrix $S(z,H(q),H_o)$, where the identification operator is the identity operator
on $L^2(\R,\C^2)$. It is well known that for any $q \in \cL$ the wave operators $W_{\pm}(H(q),H_o)$
exist and complete. Recall that we have introduced the scattering matrix $S(k,H(q),H_o) = S(k,q)$
in (\ref{p2e3}). Due to (\ref{p1e14}), we get the following corollary.
\begin{corollary}
    Let $q \in \cL$. Then the wave operators $W_{\pm}(\cK(\gh_q),\cK_o,U_{q}^*)$
    exist and complete and the following identity holds true:
    \[ \label{p8e2}
        S(k,\cK(\gh_{q}), \cK_o, U_{q}^*) = S(k,q),\qq k \in \R,
    \]
    where $S(k,q)$ is given by (\ref{p2e3}).
\end{corollary}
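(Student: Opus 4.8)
The plan is to deduce the corollary entirely from the unitary equivalence \er{p1e14}, i.e. $\cK(\gh_q) = (TU_q)^* H(q)(TU_q)$, together with the classical completeness of the wave operators for the pair $H(q),H_o$ recalled just above the statement. First I would record two intertwining identities. Since $TU_q:L^2(\R,\C^2,\gh_q)\to L^2(\R,\C^2)$ is unitary, functional calculus applied to \er{p1e14} gives $e^{it\cK(\gh_q)} = U_q^* T^* e^{itH(q)} T U_q$ for all $t\in\R$; and since $\cK_o = H_D(0) = T^* H_o T$ by \er{p1e8} (with $q=0$, so $V_0 = 0$), we get $T e^{-it\cK_o} = e^{-itH_o} T$. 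Substituting both into the definition \er{p8e9} of the wave operators with identification operator $\cJ = U_q^*$ and using $U_qU_q^* = I$ on $L^2(\R,\C^2)$, I obtain for every $t$
\[
    e^{it\cK(\gh_q)}\,U_q^*\,e^{-it\cK_o} = U_q^* T^* \bigl(e^{itH(q)}e^{-itH_o}\bigr) T .
\]

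Because $U_q^* T^*$ and $T$ are unitary they commute with the strong limit, so $W_{\pm}(\cK(\gh_q),\cK_o,U_q^*)$ exists precisely when $W_{\pm}(H(q),H_o)$ does, and then $W_{\pm}(\cK(\gh_q),\cK_o,U_q^*) = U_q^* T^* W_{\pm}(H(q),H_o)\,T$. The only substantive input is therefore the well-known fact that $W_{\pm}(H(q),H_o)$ exist and are complete for every $q\in\cL$; conjugation by the unitaries $U_q^* T^*$ and $T$ transfers existence and completeness verbatim (using that $H(q)$, and hence $\cK(\gh_q)$, has purely absolutely continuous spectrum, so "complete" here means the wave operators are onto the whole space). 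For the scattering operator I would combine this with \er{p8e10} and $T U_q U_q^* T^* = I$:
\[
    S(\cK(\gh_q),\cK_o,U_q^*) = W_{+}^* W_{-} = T^* W_{+}(H(q),H_o)^* W_{-}(H(q),H_o)\, T = T^* S(H(q),H_o)\,T .
\]

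Finally, for the scattering matrix defined by \er{p8e11} it remains to align the spectral representations of the two free operators. Since $\cK_o = T^* H_o T$, if $\cF_o^{H}$ is a unitary diagonalising $H_o$ then $\cF_o := T^* \cF_o^{H}$ diagonalises $\cK_o$ with the same energy variable $k\in\R$; with this choice $\cF_o^* = (\cF_o^{H})^* T$ and hence
\[
    S(k,\cK(\gh_q),\cK_o,U_q^*) = \cF_o^* S(\cK(\gh_q),\cK_o,U_q^*)\,\cF_o = (\cF_o^{H})^* S(H(q),H_o)\,\cF_o^{H} = S(k,H(q),H_o) = S(k,q),
\]
which is exactly the matrix in \er{p2e3}. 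I do not expect a genuine obstacle here: the argument is a mechanical transfer along the chain of unitaries $U_q$, $T$, $TU_q$. The points that deserve care are (i) verifying the identity $\cK_o = T^* H_o T$, so that the free dynamics on both sides are intertwined by $T$ and a single spectral representation works for both — this is \er{p1e8} at $q=0$ — and (ii) keeping track of which Hilbert space ($L^2(\R,\C^2)$ versus the weighted space $L^2(\R,\C^2,\gh_q)$) each operator acts on, so that the compositions and adjoints above are well defined and $U_qU_q^* = I$, $U_q^* U_q = I$ are applied on the correct spaces.
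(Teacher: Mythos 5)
Your proposal is correct and follows essentially the same route as the paper: conjugate the dynamics through the unitaries $T$ and $U_q$ via \er{p1e14} and $\cK_o=T^*H_oT$, transfer existence and completeness from the known wave operators $W_{\pm}(H(q),H_o)$, and then identify the scattering matrices by choosing the spectral representation of $\cK_o$ as $T^*$ composed with the one diagonalising $H_o$. No substantive difference from the paper's argument.
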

\begin{proof}
    Let $q \in \cL$. Then, using (\ref{p1e14}), we obtain
    \[ \label{p8e8}
        e^{it\cK(\gh_q)} = (T U_q)^* e^{itH(q)} (T U_q),\qq e^{it\cK_o} = T^* e^{itH_o} T,\qq t \in \R.
    \]
    Substituting (\ref{p8e8}) in (\ref{p8e9}), we get
    $$
        \begin{aligned}
            W_{\pm}(\cK(\gh_q),\cK_o,U_{q}^*) &= \slim_{t \to \pm \iy} e^{it\cK(\gh_q)} U_{q}^* e^{-it\cK_o}
            =\slim_{t \to \pm \iy} (T U_q)^* e^{itH(q)} T U_q U_{q}^* T^* e^{itH_o} T\\
            &= (T U_q)^* W_{\pm}(H(q),H_o) T.
        \end{aligned}
    $$
    Since $T$ and $U_q$ are unitary operators and the wave operators $W_{\pm}(H(q),H_o)$ exist and
    complete, we get that the wave operators $W_{\pm}(\cK(\gh_q),\cK_o,U_{q}^*)$ exist and complete.
    Using (\ref{p8e10}), we get
    \[ \label{p8e12}
        S(\cK(\gh_{q}), \cK_o, U_{q}^*) = T^* S(H(q),H_o) T.
    \]
    It follows from (\ref{p1e14}) that $H_o$ is diagonalized by the unitary operator $T \cF_o$,
    i.e. $(T\cF_o)^* H_o T \cF_o$ acts as multiplication by independent variable in $L^2(\R,\C^2)$.
    Using (\ref{p8e11}) and (\ref{p8e12}), we get (\ref{p8e2}).
\end{proof}
Due to (\ref{p2e3}) and (\ref{p8e2}), the scattering matrix $S(k,\cK(\gh_{q}), \cK_o, U_{q}^*)$
has the following form for any $q \in \cL$:
\[ \label{p8e13}
    S(k,\cK(\gh_q), \cK_o, U^*) = \ma \frac{1}{\ga} & \gr_+ \\ \gr_- & \frac{1}{\ga} \am(k,\gh_q),
    \qq k \in \R,
\]
where $\gr_{+} = -\frac{\ol \gb}{\ga}$ and $\gr_- = \frac{\gb}{\ga}$. Thus, using obtained results
about inverse scattering for the Dirac operators, we construct the inverse scattering for
the canonical systems.
Recall that the classes $\cR_{\g}$ and $\cB_{\g}$ were defined in Section \ref{p1} and
the classes $\cR_{\bu}$ and $\cB_{\bu}$ were defined in Section \ref{p2}.
\begin{corollary} \label{t10}
    \begin{enumerate}[label={\roman*)}]
        \item The mappings $\gh \mapsto \gr_{\pm}(\cdot,\gh)$ are bijections between $\cG_{\bu}^o$ and $\cR_{\bu}$ and
        theirs restrictions on $\cG_{\g}^o$ is a bijection between $\cG_{\g}^o$ and $\cR_{\g}$ for any $\g > 0$.
        \item The mapping $\gh \mapsto \gb(\cdot,\gh)$ is a bijection between $\cG_{\bu}^o$ and $\cB_{\bu}$ and
        its restriction on $\cG_{\g}^o$ is a bijection between $\cG_{\g}^o$ and $\cB_{\g}$ for any $\g > 0$.
    \end{enumerate}
\end{corollary}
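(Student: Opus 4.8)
The plan is to obtain both statements by composing the bijection $q \mapsto \gh_q$ of Theorem~\ref{t7} with the inverse scattering results already established, after first identifying the scattering coefficients of the canonical system with those of the corresponding Dirac operator. The crucial preliminary step is this identification: for $q \in \cL$ the unitary equivalence (\ref{p1e14}) together with the preceding corollary gives $S(k,\cK(\gh_q),\cK_o,U_q^*) = S(k,q)$ for all $k \in \R$, so comparing (\ref{p2e3}) with (\ref{p8e13}) yields $\ga(\cdot,\gh_q) = a(\cdot,q)$, $\gb(\cdot,\gh_q) = b(\cdot,q)$ and $\gr_\pm(\cdot,\gh_q) = r_\pm(\cdot,q)$ for every $q \in \cL$.

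Next I would invoke Theorem~\ref{t7}: the map $q \mapsto \gh_q$ is a bijection from $\cL$ onto $\cG_{\bu}^o$ whose restriction is a bijection from $\cP_\g$ onto $\cG_\g^o$; write $\gh \mapsto q_\gh$ for its inverse. Then $\gh \mapsto \gr_\pm(\cdot,\gh)$ is exactly the composition $\gh \mapsto q_\gh \mapsto r_\pm(\cdot,q_\gh)$, and similarly $\gh \mapsto \gb(\cdot,\gh) = b(\cdot,q_\gh)$. For part i), the second arrow $q \mapsto r_\pm(\cdot,q)$ is a homeomorphism from $\cL$ onto $\cR_{\bu}$ by Theorem~\ref{hlt1}, so the composition is a bijection from $\cG_{\bu}^o$ onto $\cR_{\bu}$; on the subclass, Theorem~\ref{t11} gives that $q \mapsto r_\pm(\cdot,q)$ is a homeomorphism from $\cP_\g$ onto $\cR^\pm_\g$, hence the composition restricts to a bijection from $\cG_\g^o$ onto $\cR^\pm_\g$ (this is the class written $\cR_\g$ in the statement, read as $\cR^+_\g$ for $\gr_+$ and $\cR^-_\g$ for $\gr_-$). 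For part ii), the second arrow $q \mapsto b(\cdot,q)$ is a homeomorphism from $\cL$ onto $\cB_{\bu}$ by Theorem~\ref{hlt2} and restricts to a homeomorphism from $\cP_\g$ onto $\cB_\g$ by Theorem~\ref{t1}; composing with $q \mapsto \gh_q$ yields the two asserted bijections.

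I expect no serious analytic obstacle here: once the scattering data are identified, the corollary is a formal consequence of composing bijections proved in Section~\ref{p2}, Section~\ref{p4} and Theorem~\ref{t7}. The one step that needs care is the bookkeeping in the first paragraph — confirming that $\ga,\gb,\gr_\pm$ appearing in (\ref{p8e13}) are literally the coefficients $a,b,r_\pm$ of the unitarily equivalent Dirac operator $H(q_\gh)$, and not merely equal to them up to some normalization, which is precisely what (\ref{p8e2}) guarantees, together with fixing the convention that the target class $\cR_\g$ in the statement is to be understood as $\cR^\pm_\g$ according to the sign.
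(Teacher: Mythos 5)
Your proposal is correct and follows essentially the same route as the paper: identify $\ga,\gb,\gr_{\pm}$ with $a,b,r_{\pm}$ via (\ref{p2e3}), (\ref{p8e2}), (\ref{p8e13}), then compose the bijection $q \mapsto \gh_q$ of Theorem \ref{t7} with the inverse scattering bijections of Theorems \ref{hlt1}, \ref{hlt2}, \ref{t1}, \ref{t11}. Your added remark on reading $\cR_{\g}$ as $\cR^{\pm}_{\g}$ according to the sign is a reasonable clarification of the paper's notation but not a substantive difference.
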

\begin{proof}
    Using (\ref{p2e3}), (\ref{p8e2}), and (\ref{p8e13}), we have
    $$
        \ga(\cdot,\gh_q) = a(\cdot,q),\qq \gb(\cdot,\gh_q) = b(\cdot,q),\qq \gr_{\pm}(\cdot,\gh_q) = r_{\pm}(\cdot,q)
    $$
    for any $q \in \cL$. By Theorem \ref{t7}, the mapping $q \mapsto \gh_q$ is a bijection between $\cL$ and $\cG_{\bu}^o$
    and its restriction on $\cP_{\g}$ is a bijection between $\cP_{\g}$ and $\cG_{\g}^o$ for any $\g$.
    Thus, the statement of the corollary follows from Theorems \ref{t1}, \ref{t11}, \ref{hlt1}, and \ref{hlt2}.
\end{proof}

Note that the Hamiltonians $\gh \in \cG_{\bu}^o$ are normed in such way that $\gh(0) = I_2$ and $\det \gh(x) = 1$,
$x \in \R$. Now, we consider a more general classes of the Hamiltonians.
\begin{definition*}
    $\cG_{\bu}$ is the set of all function $\gh:\R \to \cM^+_2(\R)$ such that
    \[ \label{p8e3}
        \r(x) = |\det \gh(x)|^{1/2},\qq \vt(x) = \int_0^x \r(x)dx,\qq
        \gh_1(x) = \frac{\gh(\vt^{-1}(x))}{\r(\vt^{-1}(x))},\qq x \in \R,
    \]
    where $\vt^{-1}$ is the inverse function to $\vt$, satisfy
    \begin{enumerate}[label={(\roman*)}]
        \item $\r \in \cL^1_+(\R)$, where $\cL^1_+(\R)$ is a class of functions $f$ such that
        $f \in L^1_{loc}(\R)$, $f > 0$ and $\int_0^x f(t) dt \to \pm \iy$ as $x \to \pm \iy$.
        \item $\gh_1$ is absolutely continuous and $\gh_1' \in L^2(\R,\cM_2(\R))$.
    \end{enumerate}
\end{definition*}
We also introduce a class of Hamiltonians, which is non-constant only on a finite interval.
\begin{definition*}
    Let $\g > 0$. Then $\cG_{\g} \ss \cG_{\bu}$ is the subset of all function
    $\gh:\R \to \cM^+_2(\R)$ such that
    $$
        \inf \supp \gh_1' = 0,\qq \sup \supp \gh_1' = \g,
    $$
    where $\gh_1$ is given by (\ref{p8e3}).
\end{definition*}
By $\cU_2^+(\R)$, we denote a class of $2 \times 2$ upper-triangle positive real-valued matrix $C$
such that $\det C = 1$.
\begin{theorem} \label{t13}
    The mapping $(\gh_o,C,\r) \mapsto \gh = \r C^* (\gh_o \circ \vt) C$, where $\vt(x) = \int_0^x \r(t) dt$,
    is a bijection between $\cG_{\bu}^o \ts \cU_2^+(\R) \ts \cL^1_+(\R)$ and $\cG_{\bu}$ and its restriction on
    $\cG_{\g}^o \ts \cU_2^+(\R) \ts \cL^1_+(\R)$ is a bijection between
    $\cG_{\g}^o \ts \cU_2^+(\R) \ts \cL^1_+(\R)$ and $\cG_{\g}$
    for any $\g > 0$. Moreover, the following identities hold true:
    $$
        \r^2 = \det \gh,\qq C^*C = \frac{\gh(0)}{|\det \gh(0)|^{1/2}}.
    $$
    Furthermore, the corresponding operators are unitary equivalent and satisfy
    \[ \label{p8e15}
        \cK(\gh) = V^* \cK(\gh_o) V,
    \]
    where $V: L^2(\R,\C^2,\gh) \to L^2(\R,\C^2,\gh_o)$ is a unitary operator given by
    $$
        (Vf)(x) = Cf(\vt^{-1}(x)),\qq x \in \R,
    $$
    where $C \in \cM_2^u$ such that $C^*C = \gh(0)$ and $\vt(x) = \int_0^x |\det \gh(t)|^{1/2} dt$, $x \in \R$.
\end{theorem}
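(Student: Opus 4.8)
The plan is to realise the map $(\gh_o,C,\r)\mapsto\gh=\r\,C^*(\gh_o\circ\vt)C$ as the composition of two elementary operations on the canonical system: a scalar reparametrisation of the independent variable, which absorbs $\r$ and normalises the determinant of the Hamiltonian to $1$, and a constant gauge transformation by a matrix $C$ with $\det C=1$, which normalises the value at $0$ to $I_2$. The algebraic fact underlying everything is that
$$
C^*JC=J\qq\text{for every real $2\times2$ matrix $C$ with $\det C=1$},
$$
which one checks directly on $C=\left(\begin{smallmatrix} a & b\\ 0 & 1/a\end{smallmatrix}\right)$, $a>0$, and which is simply the statement $SL_2(\R)=Sp_2(\R)$.

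First I would treat the forward map. Given $(\gh_o,C,\r)\in\cG_{\bu}^o\ts\cU_2^+(\R)\ts\cL^1_+(\R)$ and $\vt(x)=\int_0^x\r(t)\,dt$, set $\gh=\r\,C^*(\gh_o\circ\vt)C$. Since $\gh_o(x)$ is positive definite, $\r(x)>0$ and $C$ is invertible, $\gh(x)\in\cM^+_2(\R)$; moreover $\det\gh=\r^2\det\gh_o\,|\det C|^2=\r^2$, which gives the first identity $\r=|\det\gh|^{1/2}$ and shows that the $\vt$ built from $\gh$ via (\ref{p8e3}) coincides with the given $\vt$. A direct substitution then gives $\gh_1=C^*(\gh_o)C$ and $\gh_1(0)=C^*C=\gh(0)/|\det\gh(0)|^{1/2}$, the second identity; since $\gh_o$ is absolutely continuous with $\gh_o'$ in the required integrability class, so is $\gh_1'=C^*\gh_o'C$, and $\supp\gh_1'=\supp\gh_o'$. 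Hence $\gh\in\cG_{\bu}$, and if $\gh_o\in\cG_\g^o$ then $\gh\in\cG_\g$.

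Next I would construct the inverse and prove bijectivity. Given $\gh\in\cG_{\bu}$, put $\r=|\det\gh|^{1/2}\in\cL^1_+(\R)$, form $\vt$ and $\gh_1$ as in (\ref{p8e3}); then $\det\gh_1\equiv1$ and $\gh_1(0)$ is a real symmetric positive definite matrix of determinant $1$, so by uniqueness of the Cholesky factorisation there is a unique $C\in\cU_2^+(\R)$ with $\gh_1(0)=C^*C$. Setting $\gh_o=(C^{-1})^*\gh_1C^{-1}$ gives $\gh_o(0)=I_2$, $\det\gh_o\equiv1$, and $\gh_o$ absolutely continuous with $\gh_o'$ in the class required by $\cG_{\bu}^o$ (and the support normalisation in the $\cG_\g$ case), so $(\gh_o,C,\r)\in\cG_{\bu}^o\ts\cU_2^+(\R)\ts\cL^1_+(\R)$. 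Uniqueness of the Cholesky factorisation and the identities $\vt^{-1}\circ\vt=\mathrm{id}$, $\vt\circ\vt^{-1}=\mathrm{id}$ show that the two constructions are mutually inverse, giving the asserted bijections, including the one between $\cG_\g^o\ts\cU_2^+(\R)\ts\cL^1_+(\R)$ and $\cG_\g$.

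Finally, for the unitary equivalence I would first check that $V\colon L^2(\R,\C^2,\gh)\to L^2(\R,\C^2,\gh_o)$, $(Vf)(x)=Cf(\vt^{-1}(x))$, is unitary: the change of variable $x=\vt(y)$, $dx=\r(y)\,dy$, together with $\r(y)\,C^*\gh_o(\vt(y))C=\gh(y)$, turns $\|Vf\|^2_{L^2(\gh_o)}$ into $\|f\|^2_{L^2(\gh)}$, and $(V^{-1}g)(y)=C^{-1}g(\vt(y))$ gives surjectivity. Then for differentiable $f$ I would compute $\cK(\gh_o)(Vf)$ by the chain rule, using $(\vt^{-1})'(x)=1/\r(\vt^{-1}(x))$ and $\gh^{-1}=\r^{-1}C^{-1}(\gh_o^{-1}\circ\vt)(C^*)^{-1}$, and reduce $\cK(\gh_o)(Vf)=V\cK(\gh)f$ to $JC=(C^*)^{-1}J$, i.e. $C^*JC=J$. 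The remaining point, that $V$ carries the maximal domain $\cD(\cK(\gh))$ onto $\cD(\cK(\gh_o))$, would be handled exactly as in the proof of Theorem \ref{t7}: both operators are in the limit-point case at $\pm\iy$ and hence self-adjoint on their maximal domains, and the computation above shows $\gh_o^{-1}J(Vf)'\in L^2(\gh_o)$ iff $\gh^{-1}Jf'\in L^2(\gh)$. I expect this last domain-theoretic step, together with the bookkeeping of the substitution $\vt^{-1}$ in the (weak) derivative, to be the main obstacle; the algebraic and measure-theoretic parts are routine once $C^*JC=J$ is in hand.
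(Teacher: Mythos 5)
Your proposal is correct and follows essentially the same route as the paper: the identities $\det\gh=\r^2$ and $\gh_1=C^*\gh_o C$, the uniqueness of the Cholesky factorization of $\gh_1(0)$ to pin down $C$ and hence $\gh_o=(C^{-1})^*\gh_1C^{-1}$, the change of variables $x=\vt(y)$ for the unitarity of $V$, and the identity $C^*JC=J$ together with the limit-point/maximal-domain argument borrowed from Theorem \ref{t7} for the domain equality are exactly the ingredients of the paper's proof. The only cosmetic difference is that you exhibit a two-sided inverse while the paper verifies injectivity and surjectivity separately.
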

\begin{remark}
    Note that for any $f \in L^2(\R,\C^2,\gh_o)$, we get
    $$
        (V^*f)(x) = C^{-1}f(\vt(x)),\qq x \in \R.
    $$
\end{remark}
\begin{proof}
    Firstly, we show that $\gh = \r C^* (\gh_o \circ \vt) C \in \cG_{\bu}$ for any
    $(\gh_o,C,\r) \in \cG_{\bu}^o \ts \cU_2^+(\R) \ts \cL^1_+(\R)$. By direct calculations, we have
    $$
        |\det \gh(x)|^{1/2} = \r(x),\qq \vt(x) = \int_0^x \r(x)dx,\qq
        \frac{\gh(\vt^{-1}(x))}{\r(\vt^{-1}(x))} = C^*\gh_o(x)C,\qq x \in \R.
    $$
    Thus, it follows from the definition of $\cG_{\bu}$ that $\gh \in \cG_{\bu}$.
    Moreover, if $\gh_o \in \cG_{\g}^o$, then we have
    $$
        \inf \supp \gh_o' = 0,\qq \sup \supp \gh_o' = \g,
    $$
    which yields that $\gh \in \cG_{\g}$.

    Secondly, we show that this mapping is an injection. Let
    $$
        \r_1(x) C_1^* \gh^1_o(\vt_1(x)) C_1 = \r_2(x) C_2^* \gh^2_o(\vt_2(x)) C_2
    $$
    for almost all $x \in \R$, where
    $(\gh^j,C_j,\r_j) \in \cG_{\bu}^o \ts \cU_2^+(\R) \ts \cL^1_+(\R)$ and
    $\vt_j(x) = \int_0^x \r_j(t) dt$ for $j = 1,2$.
    Since $\det \gh^j_o = 1$ and $\det C_j = 1$ for any $j = 1,2$, it follows that
    $\r_1 = \r_2$ and then we have
    \[ \label{p8e14}
        \vt_1(x) = \vt_2(x),\qq C_1^* \gh^1_o(x) C_1 = C_2^* \gh^2_o(x) C_2,\qq x \in \R.
    \]
    Substituting $x = 0$, we get $C_1^* C_1 = C_2^* C_2$. It is known that
    every $A \in \cM_2^+(\R)$ has a unique Cholesky factorization, i.e.
    there exists a unique $C \in \cU_2^+(\R)$ such that $A = C^* C$.
    Thus, it follows that $C_1 = C_2$ and then, due to (\ref{p8e14}), we have
    $\gh^1_o = \gh^2_o$.

    Thirdly, we show that this mapping is a surjection. Let $\gh \in \cG_{\bu}$ and let
    $$
        \r(x) = |\det \gh(x)|^{1/2},\qq \vt(x) = \int_0^x \r(x)dx,\qq
        \gh_1(x) = \frac{\gh(\vt^{-1}(x))}{\r(\vt^{-1}(x))},\qq x \in \R.
    $$
    It follows from the definition of $\cG_{\bu}$ that
    $\r \in \cL^1_+(\R)$ and $\gh_1(0) \in \cM_2^+(\R)$. As it was mentioned above,
    there exists a unique $C \in \cU_2^+(\R)$ such that $\gh_1(0) = C^* C$. We introduce $\gh_o = (C^{-1})^* \gh_1
    C^{-1}$. Then we have $\gh = \r C^* (\gh_o \circ \vt) C$ and $\gh_o \in \cG_{\bu}^o$.
    Moreover, if $\gh \in \cG_{\g}$, then we have
    $$
        \inf \supp \gh_1' = 0,\qq \sup \supp \gh_1' = \g,
    $$
    which yields $\gh_o \in \cG_{\g}^o$.

    Now, we prove that $V$ is unitary and (\ref{p8e15}) holds true.
    Let $f \in L^2(\R,\C^2,\gh)$ and $g \in L^2(\R,\C^2,\gh_o)$. We introduce
    $y = \vt^{-1}(x)$. Then we have $x = \vt(y)$, $dx = \r(y) dy$, and
    $$
        \begin{aligned}
            (Vf,g)_{L^2(\R,\C^2,\gh_o)} &= \int_{\R} (\gh_o(x) Cf(\vt^{-1}(x)), g(x)) dx
            = \int_{\R} (C^*\gh_o(\vt(y)) f(y), g(\vt(y))) \r(y) dy\\
            &= \int_{\R} (\r(y)C^*\gh_o(\vt(y))C f(y),C^{-1} g(\vt(y))) dy
            = (f,V^*g)_{L^2(\R,\C^2,\gh)},
        \end{aligned}
    $$
    where $(V^*g)(y) = C^{-1}g(\vt(y))$, $y \in \R$. It is easy to see that
    $V V^*$ is the identity operator in $L^2(\R,\C^2,\gh_o)$ and $V^* V$ is the identity operator
    in $L^2(\R,\C^2,\gh)$, which yields that $V$ is unitary. Moreover, by direct calculation, we obtain
    $$
        (V^* \cK(\gh_o) V f)(x) = (\cK(\gh)f)(x),\qq x \in \R,
    $$
    for any differentiable function $f:\R \to \C^2$.

    Finally, we show that $\cD(\cK(\gh_o)) = V \cD(\cK(\gh))$, where $\cD(\cdot)$
    denotes the domain of the operator.
    Recall that operators $\cK(\gh)$ are self-adjoint on
    the maximal domains
    $$
        \cD(\cK(\gh)) = \{ f \in L^2(\R,\C^2,\gh) \mid \text{$f$ is absolutely continuous},\,
        \gh^{-1}Jf \in L^2(\R,\C^2,\gh) \}.
    $$
    Using $(Vf)'(x) = C f'(\vt^{-1}(x))\frac{1}{\r(\vt^{-1}(x))}$, $x \in \R$, we get
    $$
        \| \gh_o^{-1} J (Vf)' \|_{L^2(\R,\C^2,\gh_o)} =
            \int_{\R} \frac{1}{\r^2(\vt^{-1}(x))}(\gh_o(x)\gh_o^{-1}(x) J Cf'(\vt^{-1}(x)),\gh_o^{-1}(x) J Cf'(\vt^{-1}(x))) dx.
    $$
    Substituting $y = \vt^{-1}(x)$ and using $C^* J C = J$, we obtain
    $$
        \begin{aligned}
            \| \gh_o^{-1} J (Vf)' \|_{L^2(\R,\C^2,\gh_o)} &=
            \int_{\R} \frac{1}{\r(y)}(\gh_o(\vt(y))\gh_o^{-1}(\vt(y)) J Cf'(y),\gh_o^{-1}(\vt(y)) J Cf'(y)) dy\\
            &= \int_{\R} ((C^{-1})^*\gh(y)\gh^{-1}(y) C^* J C f'(y),C\gh^{-1}(y) C^*J Cf'(y)) dy\\
            &= \int_{\R} (\gh(y)\gh^{-1}(y) J f'(y),\gh^{-1}(y) J f'(y)) dy
            = \| \gh^{-1}Jf' \|_{L^2(\R,\C^2,\gh)}.
        \end{aligned}
    $$
    It follows from this identity that $\gh^{-1}Jf' \in L^2(\R,\C^2,\gh)$ if and only if
    $\gh_o^{-1}J(Vf)' \in L^2(\R,\C^2,\gh_o)$.
    Hence, we have $\cD(\cK(\gh_o)) = V \cD(\cK(\gh))$ and (\ref{p8e15}) holds true.
\end{proof}
Since $\cK(\gh)$ and $\cK(\gh_o)$ are unitary equivalent, we can construct the scattering theory
for the operators $\cK(\gh)$, $\gh \in \cG_{\bu}$, as above.
\begin{theorem}
    Let $\gh \in \cG_{\bu}$ and let $\gh_o$ be given by (\ref{p8e3}).
    Then the wave operators\\
    $W_{\pm}(\cK(\gh),\cK_o,(UV)^*)$ exist and complete and the following identity holds true:
    \[ \label{p8e18}
        S(k,\cK(\gh),\cK_o,(UV)^*) = S(k,\cK(\gh_o),\cK_o,U^*) = S(k,q),\qq k \in \R,
    \]
    where $q \in \cL$ is such that $\gh_q = \gh_o$ and $S(k,q)$ is given by (\ref{p2e3}).
\end{theorem}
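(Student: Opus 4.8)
The plan is to deduce the statement from Theorem~\ref{t13} together with the corollary that established~(\ref{p8e2}), using that unitary equivalence of operators transports wave and scattering operators.

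First I would fix the notation. By Theorem~\ref{t13}, the Hamiltonian $\gh_o$ defined in~(\ref{p8e3}) lies in $\cG_{\bu}^o$, and the operator $V:L^2(\R,\C^2,\gh)\to L^2(\R,\C^2,\gh_o)$ there is unitary and satisfies $\cK(\gh)=V^*\cK(\gh_o)V$, i.e.\ (\ref{p8e15}). By Theorem~\ref{t7} there is a unique $q\in\cL$ with $\gh_q=\gh_o$; set $U=U_q$, so that $U^*:L^2(\R,\C^2)\to L^2(\R,\C^2,\gh_o)$ and hence $(UV)^*=V^*U^*:L^2(\R,\C^2)\to L^2(\R,\C^2,\gh)$ is an admissible identification operator for the pair $\cK(\gh),\cK_o$.

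Next I would run the standard intertwining computation. From $\cK(\gh)=V^*\cK(\gh_o)V$ and unitarity of $V$ one gets $e^{it\cK(\gh)}=V^*e^{it\cK(\gh_o)}V$ for all $t\in\R$, whence, after cancelling $VV^*=I$,
$$
    e^{it\cK(\gh)}(UV)^*e^{-it\cK_o}=V^*e^{it\cK(\gh_o)}U^*e^{-it\cK_o},\qq t\in\R.
$$
Letting $t\to\pm\iy$ in~(\ref{p8e9}) and using boundedness of $V^*$, the existence of $W_{\pm}(\cK(\gh_o),\cK_o,U^*)$—which follows from the corollary that established~(\ref{p8e2}), applied with $\gh_q=\gh_o$—yields $W_{\pm}(\cK(\gh),\cK_o,(UV)^*)=V^*W_{\pm}(\cK(\gh_o),\cK_o,U^*)$. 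Completeness transfers because $V^*$ intertwines $\cK(\gh_o)$ and $\cK(\gh)$ and is unitary, hence maps the absolutely continuous subspace of $\cK(\gh_o)$ onto that of $\cK(\gh)$.

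Finally I would compute the scattering operator from~(\ref{p8e10}): $W_+^*(\cK(\gh),\cK_o,(UV)^*)W_-(\cK(\gh),\cK_o,(UV)^*)=W_+^*(\cK(\gh_o),\cK_o,U^*)\,VV^*\,W_-(\cK(\gh_o),\cK_o,U^*)=S(\cK(\gh_o),\cK_o,U^*)$, and then conjugate by the diagonalizer $\cF_o$ of $\cK_o$ as in~(\ref{p8e11}) to pass to the scattering matrices. Since $\cK_o$ and $\cF_o$ are the same free data in both problems, this gives $S(k,\cK(\gh),\cK_o,(UV)^*)=S(k,\cK(\gh_o),\cK_o,U^*)$, and the corollary behind~(\ref{p8e2}) identifies the right-hand side with $S(k,q)$, which is~(\ref{p8e18}). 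There is essentially no analytic obstacle here—the existence and completeness of wave operators are entirely imported from Theorem~\ref{t13}, Theorem~\ref{t7} and the preceding corollary; the only thing requiring care is the bookkeeping of the identification operators, in particular checking that $(UV)^*$ has the correct domain and codomain and that the free operator and its diagonalizer coincide for the two scattering problems.
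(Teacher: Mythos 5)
Your proposal is correct and follows essentially the same route as the paper: both use the unitary equivalence $\cK(\gh)=V^*\cK(\gh_o)V$ from Theorem~\ref{t13} to write $W_{\pm}(\cK(\gh),\cK_o,(UV)^*)=V^*W_{\pm}(\cK(\gh_o),\cK_o,U^*)$, deduce existence and completeness from the unitarity of $V$, and then invoke~(\ref{p8e2}) to identify the scattering matrix with $S(k,q)$. Your version merely spells out the cancellation $VV^*=I$ in the scattering-operator computation, which the paper leaves implicit.
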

\begin{remark}
    It follows from Theorems \ref{t10} and \ref{t13} and identity (\ref{p8e18})
    that a Hamiltonian $\gh \in \cG_{\bu}$ is uniquely determined by $\gh(0)$, $\det \gh$ and
    by its scattering matrix $S(k,\cK(\gh),\cK_o,(UV)^*)$, $k \in \R$.
\end{remark}
\begin{proof}
    Using (\ref{p8e15}), we get $e^{it\cK(\gh)} = (V)^* e^{it\cK(\gh_o)} V$ for any $t \in \R$,
    which yields
    \[ \label{p8e19}
        \begin{aligned}
            W_{\pm}(\cK(\gh),\cK_o,(UV)^*) &= \slim_{t \to \pm \iy} e^{it\cK(\gh)} (UV)^* e^{-it\cK_o}
            =\slim_{t \to \pm \iy} V^* e^{it\cK(\gh_o)} U^* e^{-it\cK_o}\\
            &= V^* W_{\pm}(\cK(\gh_o),\cK_o, U^*).
        \end{aligned}
    \]
    Since $V$ is a unitary operator, the wave operators $W_{\pm}(\cK(\gh),\cK_o,(UV)^*)$
    exists and complete. Moreover, using (\ref{p8e2}) and (\ref{p8e19}), we obtain (\ref{p8e18}).
\end{proof}

\footnotesize

\no {\bf Acknowledgments.} E.K. and D. M. are supported by the RFBR grant No. 19-01-00094.

\medskip


\begin{thebibliography}{MKS10}

    \bibitem{APT04}
    M.~J.~Ablowitz, B.~Prinari and A.~D.~Trubatch,
    \emph{Discrete and continuous nonlinear Schrödinger systems.}
    London Mathematical Society Lecture Note Series, 302. Cambridge University Press, Cambridge, 2004.

    \bibitem{BBP}
    A.~Baranov, Y.~Belov and A.~Poltoratski,
    \emph{De Branges functions of Schroedinger equations.}
    Collect. Math. \textbf{68}~(2017), no.~2, 251--263.

    \bibitem{dB}
    L.~de~Branges,
    \emph{Hilbert spaces of entire functions.}
    Prentice-Hall, Inc., Englewood Cliffs, N.J. 1968.

    \bibitem{BKW03}
    B.M.~Brown, I.~Knowles and R.~Weikard,
    \emph{On the inverse resonance problem.}
    J. London Math. Soc.~(2)~\textbf{68}~(2003), no.~2, 383--401.

    \bibitem{C06}
    T.~Christiansen,
    \emph{Resonances for steplike potentials: forward and inverse results.}
    Trans. Amer. Math. Soc. \textbf{358}~(2006), no.~5, 2071--2089.

    \bibitem{DEGM82}
    K.~R.~Dodd, J.~C.~Eilbeck, J.~D.~Gibbon and H.~C.~Morris,
    \emph{Solitons and nonlinear wave equations.}
    Academic Press, Inc. [Harcourt Brace Jovanovich, Publishers], London-New York, 1982.

    \bibitem{DZ19}
    S.~Dyatlov and M.~Zworski,
    \emph{Mathematical theory of scattering resonances.}
    Graduate Studies in Mathematics, 200.
    American Mathematical Society, Providence, RI, 2019.

    \bibitem{FT07}
    L.~D.~Faddeev and L.~A.~Takhtajan,
    \emph{Hamiltonian methods in the theory of solitons.}
    Translated from the 1986 Russian original by Alexey G. Reyman.
    Reprint of the 1987 English edition. Classics in Mathematics. Springer, Berlin, 2007.

    \bibitem{FHMP09}
    C.~Frayer, R.~O.~Hryniv, Ya.~V.~Mykytyuk and P.~A.~Perry,
    \emph{Inverse scattering for Schr{\"o}dinger operators with Miura potentials. I.
    Unique Riccati representatives and ZS-AKNS systems.}
    Inverse Problems \textbf{25}~(2009), no.~11, 115007, 25~pp.

    \bibitem{F97}
    R.~Froese,
    \emph{Asymptotic distribution of resonances in one dimension.}
    J. Differential Equations \textbf{137}~(1997), no.~2, 251--272.

    \bibitem{GRS64}
    I.~Gelfand, D.~Raikov and G.~Shilov,
    \emph{Commutative normed rings.}
    Translated from the Russian, with a supplementary chapter.
    Chelsea Publishing Co., New York, 1964.

    \bibitem{GK67}
    I.~C.~Gohberg and M.~G.~Kre{\u \i}n,
    \emph{Theory and applications of Volterra operators in Hilbert space.}
    Translated from the Russian by A. Feinstein. Translations of Mathematical Monographs, Vol. 24
    American Mathematical Society, Providence, R.I. 1970

    \bibitem{H99}
    M.~Hitrik,
    \emph{Bounds on scattering poles in one dimension.}
    Comm. Math. Phys. \textbf{208}~(1999), no.~2, 381--411.

    \bibitem{HM16}
    R.~O.~Hryniv and S.~S.~Manko,
    \emph{Inverse scattering on the half-line for ZS-AKNS systems with integrable potentials.}
    Integral Equations Operator Theory \textbf{84}~(2016), no.~3,~323--355.

    \bibitem{I18}
    A.~Iantchenko,
    \emph{Quasi-normal modes for Dirac fields in the Kerr--Newman--de Sitter black holes.}
    Anal. Appl. (Singap.) \textbf{16}~(2018), no.~4, 449--524.

    \bibitem{IK14a}
    A.~Iantchenko and E.~Korotyaev,
    \emph{Resonances for 1D massless Dirac operators.}
    J. Differential Equations \textbf{256}~(2014), no.~8, 3038--3066.

    \bibitem{IK14b}
    A.~Iantchenko and E.~Korotyaev,
    \emph{Resonances for Dirac operators on the half-line.}
    J. Math. Anal. Appl. \textbf{420}~(2014), no.~1, 279--313.

    \bibitem{IK15}
    A.~Iantchenko and E.~Korotyaev,
    \emph{Resonances for the radial Dirac operators.}
    Asymptot. Anal. \textbf{93}~(2015), no.~4, 327--369.

    \bibitem{Koo98}
    P.~Koosis,
    \emph{The logarithmic integral. I.}
    Corrected reprint of the 1988 original.
    Cambridge Studies in Advanced Mathematics, 12.
    Cambridge University Press, Cambridge, 1998.

    \bibitem{Koo2}
    P.~Koosis,
    \emph{Introduction to $H_p$ spaces.}
    Second edition. With two appendices by V.~P.~Havin.
    Cambridge Tracts in Mathematics, 115.
    Cambridge University Press, Cambridge, 1998.

    \bibitem{K04a}
    E.~Korotyaev,
    \emph{Inverse resonance scattering on the half line.}
    Asymptot. Anal. \textbf{37}~(2004), no.~3-4, 215--226.

    \bibitem{K04b}
    E.~Korotyaev,
    \emph{Stability for inverse resonance problem.}
    Int. Math. Res. Not. (2004), no.~73, 3927--3936.

    \bibitem{K05}
    E.~Korotyaev,
    \emph{Inverse resonance scattering on the real line.}
    Inverse Problems \textbf{21}~(2005), no.~1, 325--341.

    \bibitem{K11h}
    E.~Korotyaev,
    \emph{Resonance theory for perturbed Hill operator.}
    Asymp. Anal. \textbf{74}~(2011), no.~3-4, 199--227.

    \bibitem{K14}
    E.~Korotyaev,
    \emph{Global estimates of resonances for 1D Dirac operators.}
    Lett. Math. Phys. \textbf{104}~(2014), no.~1, 43--53.

    \bibitem{K16}
    E.~Korotyaev,
    \emph{Estimates of 1D resonances in terms of potentials.}
    J. Anal. Math. \textbf{130}~(2016), 151--166.

    \bibitem{K17}
    E.~Korotyaev,
    \emph{Resonances for 1d Stark operators.}
    J. Spectr. Theory \textbf{7}~(2017), no.~3, 699--732.

    \bibitem{KM20}
    E.~Korotyaev and D.~Mokeev,
    \emph{Inverse resonance scattering for Dirac operators on the half-line.}
    Preprint~(2020), arXiv:~2003.12502.

    \bibitem{L96}
    B.~Ya.~Levin,
    \emph{Lectures on entire functions.}
    In collaboration with and with a preface by Yu.~Lyubarskii, M.~Sodin and V.~Tkachenko.
    Translated from the Russian manuscript by Tkachenko.
    Translations of Mathematical Monographs, 150.
    American Mathematical Society, Providence,~RI,~1996.

    \bibitem{LS91}
    B.~M.~Levitan and I.~S.~Sargsjan,
    \emph{Sturm-Liouville and Dirac operators.}
    Translated from the Russian.
    Mathematics and its Applications (Soviet Series),~59.
    Kluwer Academic Publishers Group, Dordrecht, 1991.

    \bibitem{P}
    N.~Makarov and A.~Poltoratski,
    \emph{Two-spectra theorem with uncertainty.}
    J. Spectr. Theory \textbf{9}~(2019), no.~4, 1249--1285.

    \bibitem{MSW10}
    M.~Marletta, R.~Shterenberg and R.~Weikard,
    \emph{On the inverse resonance problem for Schr{\"o}dinger operators.}
    Comm. Math. Phys. \textbf{295}~(2010), no.~2, 465--484.

    \bibitem{R02}
    C.~Remling,
    \emph{Schr{\"o}dinger operators and de Branges spaces.}
    J. Funct. Anal. \textbf{196}~(2002), no.~2, 323--394.

    \bibitem{RS80}
    M.~Reed and B.~Simon,
    \emph{Methods of modern mathematical physics. I. Functional analysis.}
    Second edition.
    Academic Press, Inc. [Harcourt Brace Jovanovich, Publishers], New York, 1980.

    \bibitem{R14}
    R.~V.~Romanov,
    \emph{Canonical systems and de Branges spaces}, 2014.
    Preprint, arXiv:1408.6022.

    \bibitem{S00}
    B.~Simon,
    \emph{Resonances in one dimension and Fredholm determinants.}
    J. Funct. Anal. \textbf{178}~(2000), no.~2, 396--420.

    \bibitem{T92}
    B.~Thaller,
    \emph{The Dirac equation.}
    Texts and Monographs in Physics. Springer-Verlag, Berlin, 1992.

    \bibitem{T58}
    E.~C.~Titchmarsh,
    \emph{The theory of functions.}
    Reprint of the second (1939) edition. Oxford University Press, Oxford, 1958.

    \bibitem{Weid87}
    J.~Weidmann, \emph{Spectral theory of ordinary differential operators.} Lecture Notes in Mathematics, 1258. Springer-Verlag, Berlin, 1987.

    \bibitem{ZS71}
    V.~E.~Zakharov and A.~B.~Shabat,
    \emph{Exact theory of two-dimensional self-focusing and one-dimensional self-modulation of
    waves in nonlinear media.}
    Soviet Physics JETP \textbf{34}~(1972), no.~1, 62--69.;
    translated from Zh. Eksp. Teor. Fiz. \textbf{61}~(1971), no.~1, 118--134 (in Russian).

    \bibitem{Z87}
    M.~Zworski,
    \emph{Distribution of poles for scattering on the real line.}
    J. Funct. Anal. \textbf{73}~(1987), no.~2, 277--296.

    \bibitem{Z02}
    M.~Zworski,
    \emph{A remark on isopolar potentials.}
    SIAM, J.  Math. Analysis, \textbf{82}~(2002), no.~6, 1823--1826.

\end{thebibliography}
\end{document}